\tikzstyle{red dot}=[fill=red, draw=black, shape=circle]
\tikzstyle{green dot}=[fill=green, draw=black, shape=circle]
\tikzstyle{blue dot}=[fill=blue, draw=black, shape=circle]
\tikzstyle{emptyRectanle}=[fill=white, draw=black, shape=rectangle]
\tikzstyle{emptyCircle}=[fill=white, draw=black, shape=circle]
\tikzstyle{Big Square}=[fill=white, draw=black, shape=rectangle, minimum width=2cm, minimum height=1cm]
\tikzstyle{new edge style 0}=[->]
\tikzstyle{new edge style 1}=[-]
\tikzstyle{doubleArrow}=[<->]
\tikzstyle{new edge style 2}=[<->, draw=blue]
\theoremstyle{plain}
\newtheorem{theorem}{Theorem}[section]
\newtheorem{corollary}[theorem]{Corollary}
\newtheorem{lemma}[theorem]{Lemma}
\newtheorem{proposition}[theorem]{Proposition}
\newtheorem{definition}[theorem]{Definition}
\newtheorem{assumption}[theorem]{Assumption}
\theoremstyle{remark}
\newtheorem{remark}[theorem]{Remark}
\numberwithin{equation}{section}
\newcommand{\rsto}{]\!\kern-1.8pt ]}
\newcommand{\lsto}{[\!\kern-1.7pt [}
\renewcommand{\labelenumi}{\rm{(\roman{enumi})}}
\numberwithin{equation}{section}
\newcommand{\Ind}[1]{\mathrm{1}_{\left\{#1\right\}}}
\newcommand{\GG}{\mathbb{G}}
\newcommand{\RR}{\mathbb{R}}
\newcommand{\QQ}{\mathbb{Q}}
\newcommand{\PP}{\mathbb{P}}
\newcommand{\NN}{\mathbb{N}}
\newcommand{\EE}{\mathbb{E}}
\newcommand{\cU}{\mathcal{U}}
\newcommand{\cB}{\mathcal{B}}
\newcommand{\cG}{\mathcal{G}}
\newcommand{\cN}{\mathcal{N}}
\newcommand{\cP}{\mathcal{P}}
\newcommand{\Excond}[3]{\mathbb{E}^{#1}\left[\left.#2\right|#3\right]}  
\renewcommand{\cite}{\citet}
\@date \else {\vskip3ex \centering\footnotesize\@date\par\vskip1ex}\fi
\else \@footnotetext{\@setdate}\fi}
\newcommand{\subjclassname@JEL}{JEL Classification}
\begin{document}

\title[Ccy-HJM]{Cross-Currency Heath-Jarrow-Morton Framework in the Multiple-Curve Setting}

\author{Alessandro Gnoatto}
\address[Alessandro Gnoatto]{University of Verona, Department of Economics, \newline
\indent Via Cantarane 24, 37129 Verona, Italy}
\email{alessandro.gnoatto@univr.it}

\author{Silvia Lavagnini}
\address[Silvia Lavagnini]{BI Norwegian Business School, Department of Data Science and Analytics, \newline
\indent Nydalsveien 37, 0484 Oslo, Norway}
\email{silvia.lavagnini@bi.no}

\begin{abstract}
We provide a general HJM framework for forward contracts written on abstract market indices with arbitrary fixing and payment adjustments, and featuring collateralization in any currency denominations. In view of this, we first provide a thorough study of cross-currency markets in the presence of collateral and incompleteness. Then we give a general treatment of collateral dislocations by
describing the instantaneous cross-currency basis spreads by means of HJM models, for which we derive  appropriate drift conditions.  The framework obtained allows us to simultaneously cover forward-looking risky IBOR rates, such as EURIBOR, and backward-looking rates based on overnight rates, such as SOFR. Due to the discrepancies in market conventions of different currency areas created by the benchmark transition, this is pivotal for describing portfolios of interest-rate products that are denominated in multiple currencies. As an example of contract simultaneously depending on all the risk factors that we describe within our framework, we treat cross-currency swaps using our proposed abstract indices.
\end{abstract}

\keywords{HJM, FX, cross-currency basis, multiple curves,  benchmark transition, SOFR, collateral}
\thanks{{\em Acknowledgements.} The authors are grateful to the participants of the minisymposium on ``Interest rate modelling'' at the 2023 SIAM Conference on Financial Mathematics and Engineering in Philadelphia, the joint LMU TUM Oberseminar Finanz- und Verischerungsmathematik in Munich and the XXV Workshop on Quantitative Finance in Bologna where this work has been presented. The authors are also grateful to two anonymous referees for helpful comments.}
\subjclass[2010]{91G30, 91B24, 91B70. \textit{JEL Classification} E43, G12}

\date{\today}

\maketitle

\section{Introduction}
Benchmark reforms have introduced significant discrepancies among interest rate option markets of different currency areas. In the US market, for example, caps and floors are currently written on a compounded version of the secured overnight financing rate (SOFR), whereas in the EUR area, the unsecured EURIBOR rate is still the market standard underlying.  This poses a significant challenge when considering a portfolio of interest rate derivatives that are denominated in multiple currencies: 
a model suitable for such portfolio-wide calculations should be able to simultaneously describe forward-looking credit-sensitive rates on the one hand, and forward-looking and backward-looking overnight-based interest rates on the other hand.

In this paper, we provide a general Heath-Jarrow-Morton (HJM) framework to describe forward contracts written on abstract market indices. Our setting allows for indices with arbitrary fixing and payment adjustments and indices on any asset class, so to accommodate the benchmark transition. Moreover, it allows for multiple currencies, meaning that the cash flows from a contract and the collateralization may be denominated in arbitrary combinations of currencies. Thus we simultaneously extend the literature on multiple-curve valuation with collateral, on interest rate and on cross-currency modeling, and we define a solid foundation for analysing the benchmark transition.

The basic building block for any term-structure model is given by a family of term structures of zero-coupon bonds (ZCBs). In the present paper, each term structure of ZCBs is characterized by a base currency and by the choice of the collateral currency. In a framework with $L$ currency areas, for any fixed time horizon $T>0$, this entails to model zero-coupon bonds $B^{k_0,k_3}(\cdot,T)$ denominated in units of a currency $k_0$ and involving an exchange of collateral in a currency $k_3$, for each pair of currencies $1\le k_0, k_3 \le L$. In Section \ref{sec:case2} we derive the following pricing formula under the so-called domestic measure $\QQ^{k_0}$:
\begin{align*}
    B^{k_0,k_3}(t,T)=\mathbb{E}^{\QQ^{k_0}}\left[\left.e^{-\int_t^T (r^{c,k_0}_s+q^{k_0,k_3}_s) d s}\right|\cG_t\right]=e^{-\int_t^T (f^{c,k_0}_t(u)+q^{k_0,k_3}_t(u))du}, \quad \mbox{for }t\le T, \ 1\leq k_0,k_3\leq L,
    \end{align*}
where $\GG=\left(\cG_t\right)_{t\in [0,T]}$ is a suitable filtration introduced in the sequel. This result, obtained under an extension of the setting of \cite{gnoSei2021}, shows that to describe ZCBs one must define both a model for the instantaneous collateral forward rate of currency $k_0$, $f^{c,k_0}=(f^{c,k_0}_t(T))_{t\in[0,T]}$, and a model for the instantaneous cross-currency basis spread, $q^{k_0, k_3}=(q^{k_0, k_3}_t(T))_{t\in[0,T]}$, together with their associated short rates, $r^{c,k_0}=(r^{c,k_0}_t)_{t\geq 0}$ and $q^{k_0,k_3}=(q^{k_0,k_3}_t)_{t\geq 0}$, respectively. 
The resulting large parametric family of multiple term structures of zero-coupon bonds allows for the definition of forward measures in view of term-structure modeling. In particular, this leads to \textit{extended} forward measures, generalizing the approach of \cite{LyasMer2019}. A comprehensive description and analysis of the considered general cross-currency market can be found in Sections \ref{sec:MultiCurrTrading}-\ref{sec:measureChanges}

The main objective of the present work is to set the described cross-currency market ``in motion'' by means of a Heath-Jarrow-Morton framework,  \cite{hea92}. On a filtered probability space $\left(\Omega,\cG,\GG,\PP\right)$, we provide a general treatment of collateral dislocations by describing the instantaneous cross-currency basis spreads $(q_t^{k_0, k_3}(T))_{t \in[0, T]}$ according to 
\begin{align*}
    q_t^{k_0, k_3}(T)=q_0^{k_0, k_3}(T)+\int_0^t \alpha_s^{k_0, k_3}(T) d s+\int_0^t \sigma_s^{k_0, k_3}(T) d X_s,
\end{align*}
where $X=(X_t)_{t\geq 0}$ is an It\^o semimartingale. In doing so, we derive a new type of HJM drift condition for the dynamics above under a suitable forward measure, thus generalizing the cross-currency HJM framework of \cite{fushita09} to a semimartingale setting, see equation \eqref{eq:driftcondak3k0}.
Moreover, we extend the general multiple-curve HJM framework of \cite{Cuchiero2016} across two directions: we consider multiple-currencies thanks to our careful specification of the instantaneous cross-currency basis spreads $(q_t^{k_0, k_3}(T))_{t \in[0, T]}$, and we consider abstract indices as the target modeling quantities. In the single-currency approach of \cite{Cuchiero2016}, the focus is indeed limited to modelling forward-rate agreements (FRA) rates, i.e., using their notation, to modelling the quantities
\begin{align*}
    L_t(T, T+\delta)=\mathbb{E}^{\mathbb{Q}^{T+\delta}}\left[L_T(T, T+\delta) \mid \mathcal{G}_t\right], \qquad \mbox{for }t\le T,
\end{align*}
where $\mathbb{Q}^{T+\delta}$ denotes a $(T+\delta)$-forward measure with num\'eraire given by the overnight indexed swaps (OIS) bond $B(\cdot, T+\delta)$\footnote{This measure corresponds  in our notation to $\QQ^{T,k_0,k_0}$ with num\'eraire $B^{k_0,k_0}(\cdot, T+\delta)$, see Definition \ref{def:forwardmeasure}.}, and $L_T(T, T+\delta)$ is the spot IBOR rate at time $T$ for the interval $[T,T+\delta]$, for $\delta\ge 0$. We define instead abstract indices of the form
\begin{align*}   
I^{k_0,k_2,k_3}_t(T-\delta^f, T, T+\delta^p)=\frac{\Excond{\QQ^{k_0}}{\frac{B^{c,k_0,k_3}_t}{B^{c,k_0,k_3}_{T+\delta^p}}I^{k_2}_{T}(T-\delta^f, T, T+\delta^p)\mathcal{X}^{k_0,k_2}_{T+\delta^p}}{\cG_t}}{\Excond{\QQ^{k_0}}{\frac{B^{c,k_0,k_3}_t}{B^{c,k_0,k_3}_{T+\delta^p}}}{\cG_t}}, \qquad \mbox{for }t\le T,
\end{align*}
for some given $\delta^f,\delta^p\geq 0$ representing, respectively, the fixing and the payment adjustments, see Definition \ref{def:abstractForwardIndex}
and, in particular, equation \eqref{eq:genForwardRate}. In the equation above, the term $I^{k_2}_{T}(T-\delta^f, T, T+\delta^p)$ inside the conditional expectation is a spot abstract index denominated in currency $k_2$, with period starting in $T-\delta^f$, fixing in $T$ and payment in $T+\delta^p$. Moreover, the term $\mathcal{X}^{k_0,k_2}_{T+\delta^p}$ is the spot exchange rate between the currencies $k_0$ and $k_2$ at time $T+\delta^p$, and the cash account $B^{c,k_0,k_3}_t=e^{\int_0^t(r^{c,k_0}_s+q^{k_0,k_3}_s) d s}$ captures collateralization in multiple currencies.
By working with abstract indices, we encompass the case of classical IBOR rates, of new backward-looking indices based on overnight rates, such as SOFR, and other quantities such as inflation or commodity prices. This allows us to accommodate the situation where we simultaneously consider a market with standard forward-looking rates (e.g., EURIBOR for the EUR area or TIBOR for the JPY area) and backward-looking rates (e.g., SOFR-based rates for the USD area).

Due to the benchmark reforms in some jurisdictions, our framework is pivotal for the management of large portfolios of interest-rate products which are denominated in different currencies and are subject to different market conventions. This is a typical situation that is faced when computing xVA at the portfolio level by Monte Carlo simulations, see e.g. \cite{BiaGnoOli2019}. To illustrate this, let us recall the industry-standard formula for the credit valuation adjustment (CVA) computed over a netting set of contingent claims denominated in $L$ different currencies $k_2=1, \dots, L$, with cash flow streams $A^{n,k_2}$ and associated mark-to-market $S^{k_2}(A^{n,k_2})$, for $n=1,\ldots \mathtt{N}^{k_2}$ with $\mathtt{N}^{k_2}$ the number of streams for the currency $k_2$, with exchange of collateral $C^{k_3}$ in currency $k_3$, against a counterparty with hazard rate $\lambda=\left(\lambda_t\right)_{t\geq 0}$ and recovery rate $R$. For each $t \ge0$, this is given by the following formula:
\begin{align*}
    \mathrm{CVA}_t=(1-R)\mathbb{E}^{\mathbb{Q}^{k_0}}\left[\left.\int_t^Te^{-\int_t^u \left(r^{c,k_0}_s+\lambda_s\right) ds}\left(\sum_{k_2=1}^{L}\mathcal{X}^{k_0,k_2}_u\sum_{n=1}^{\mathtt{N^{k_2}}}S_u^{k_2}(A^{n,k_2})-\mathcal{X}^{k_0,k_3}_uC^{k_3}_u\right)^+\lambda_u du\right|\mathcal{G}_t\right].
\end{align*}
It is clear that, due to the netting agreement, the computation and risk management of CVA (and of xVA in general) is a non-linear portfolio-wide problem. This motivates the introduction of large scale simulation models simultaneously describing all risk factors affecting the evolution of the portfolio. It becomes also apparent that the model should be flexible enough to accommodate the different conventions prevailing in the different economies. This paper provides sound foundations for term-structure modeling in the current cross-currency market situation.

In the last part, we demonstrate the relevance of all the modeling quantities that we consider in the paper: as a test-bed for our framework, we describe cross-currency swap contracts using our proposed abstract indices, meaning that we can cover, for example, the situation of a legacy EURUSD cross-currency contract created before the LIBOR transition exchanging USD LIBOR against EURIBOR. According to the US LIBOR Act, market participants can choose the LIBOR fallback rate that they deem more appropriate. Hence for the USD leg, the agents may agree on the fallback proposed by the International Swaps and Derivatives Association (ISDA) which is based on SOFR, or they may choose an alternative benchmark such as AMERIBOR. Our framework is general enough to cover all the possible situations, so it represents the ideal setup for the valuation of a portfolio that typically combines legacy trades and new positions. We point out that cross-currency swaps are merely one of many possible examples of applications which are feasible within the context of the present framework. 

In the remaining part of the introduction, we review the most important market features that motivate the present work together with the existing literature.

\subsection{Violations of the covered interest rate parity} 
Consider a domestic agent endowed with an initial capital $\mathcal{N}^{k_0}$. 
At time $t\ge0$ the agent faces two investment alternatives. A first possibility would be to invest the initial capital for a horizon $\delta$ by lending on the ${k_0}$-unsecured market, thus earning the ${k_0}$-IBOR rate $L^{k_0}_t(t,t+\delta)$. Alternatively, the agent could enter at time $t$ into a $k_3$-foreign exchange forward with length $\delta$ and rate $\mathcal{X}^{k_0,k_3}_t(t+\delta)$. In this case, at time $t$ he/she would convert the amount $\mathcal{N}^{k_0}$ at the spot exchange rate $\mathcal{X}^{k_0, k_3}_t$ and lend the amount in foreign currency on the foreign unsecured market, where he/she would earn the unsecured $k_3$-IBOR rate $L^{k_3}_t(t,t+\delta)$. After the time $\delta$, the agent would then reconvert the amount using the foreign exchange forward rate $\mathcal{X}^{k_0,k_3}_t(t+\delta)$ agreed at time $t$. The combination of such an FX spot and an FX forward transaction is termed \emph{FX swap}. We say that the covered interest rate parity holds if the two strategies described deliver the same amount in domestic currency at the end of the period $\delta$. In particular, if the covered interest rate parity were to hold, then we would obtain the classical relation that links the market quote of the FX forward with the unsecured spot rates of the two currencies, namely
\begin{equation}
\label{eq:notirp}
\mathcal{X}^{k_0,k_3}_t(t+\delta)=\mathcal{X}^{k_0,k_3}_t\frac{1+\delta L_t^{k_0}(t,t+\delta)}{1+\delta L_t^{k_3}(t,t+\delta)}.
\end{equation}
Market data on FX swaps and FX forwards show however the systematic violations of \eqref{eq:notirp}.

We can similarly discuss the valuation of cross-currency swaps. These are long-term transactions that involve an exchange of cash flows between two agents, here denoted with $k_0$ for \emph{domestic} and $k_3$ for \emph{foreign}, over a schedule of dates, say $T_0, T_1, \dots, T_n$. Since the cash flows are indexed on the floating rates\footnote{Depending on the currency pairs involved, the floating rates could be secured or unsecured.} of two currencies, cross-currency swaps can be seen as a long-short position on two floating-rate bonds denominated in two different currencies. In particular, at time $T_0$ the two agents lend to each other the notional amounts $\mathcal{N}^{k_0}$ and $\mathcal{N}^{k_3}$ in domestic and foreign currency, respectively. Then, at each time $T_i$, $i=1, \dots, n$, the agents receive floating-rate interests for the notionals lent. In addition, at time $T_N$ the notionals are swapped back. If the covered interest rate parity were to hold, then the sum of the value of the two legs at time $t\le T_0$ should be zero in the absence of any adjustment. More precisely, taking the perspective of the domestic $k_0$ agent, we should observe that
\begin{equation}
\label{eq:notirp2}
\begin{aligned}
    0=&\mathcal{N}^{k_0}\left(-B^{k_0}(t,T_0)+\sum_{i=1}^{N} (T_i-T_{i-1})L^{k_0}_t(T_{i-1},T_i) B^{k_0}(t,T_i)+B^{k_0}(t,T_N)\right)\\
    &-\mathcal{X}^{k_0,k_3}_t\mathcal{N}^{k_3}\left(-B^{k_3}(t,T_0)+\sum_{i=1}^{N} (T_i-T_{i-1})L^{k_3}_t(T_{i-1},T_i) B^{k_3}(t,T_i)+B^{k_3}(t,T_N)\right),
\end{aligned}
\end{equation}
where $B^{k_0}(t,\cdot)$ and $B^{k_3}(t,\cdot)$ denote risk-free zero-coupon bonds in domestic and foreign currency, respectively.

However, when looking at market data, one observes that the covered interest parity is systematically violated. More precisely, the relations \eqref{eq:notirp} and \eqref{eq:notirp2} were approximately satisfied before the 2007 financial crisis. Since then, persistent violations have been observed, with the consequence that cross-currency forward values can not be reconstructed from unsecured funding rates. In particular, the relations \eqref{eq:notirp} and \eqref{eq:notirp2} must be adjusted by introducing the so-called \emph{cross-currency basis swap spread}. For cross-currency swaps against USD, for example, the market practice involves to introduce a spread over the floating rate for the non-USD leg of the contract. If ${k_0}$ corresponds to USD, then this means that for relation \eqref{eq:notirp2} to hold, we must substitute $L^{k_3}_t(T_{i-1},T_i)$ with $L^{k_3}_t(T_{i-1},T_i) + \mathcal{S}_0(T_N)$ for all $i=1, \dots, N$, where $\mathcal{S}_0(T_N)$ is the cross-currency basis swap spread which is a function of the contract's maturity $T_N$ and is set at the stipulation of the contract. Similarly, for relation \eqref{eq:notirp} to hold, we must substitute $L^{k_3}_t(t,t+\delta)$ with $L^{k_3}_t(t,t+\delta) + \mathcal{S}_0(t+\delta)$. We report in Figure \ref{fig:ccySwapSpreads} the time series for the cross-currency basis swap spreads $\mathcal{S}_0(T_N)$ for the currencies pairs EUR-USD and GBP-USD, and for several maturities $T_N$ ranging from one to thirty years.

\begin{figure}[t]
	\resizebox{1\textwidth}{!}{
		\begin{tabular}{@{}>{\centering\arraybackslash}m{0.5\textwidth}@{}>{\centering\arraybackslash}m{0.5\textwidth}@{}}
			 \textbf{\large EUR-USD} & \textbf{\large GBP-USD}\\
			\includegraphics[width=0.47\textwidth]{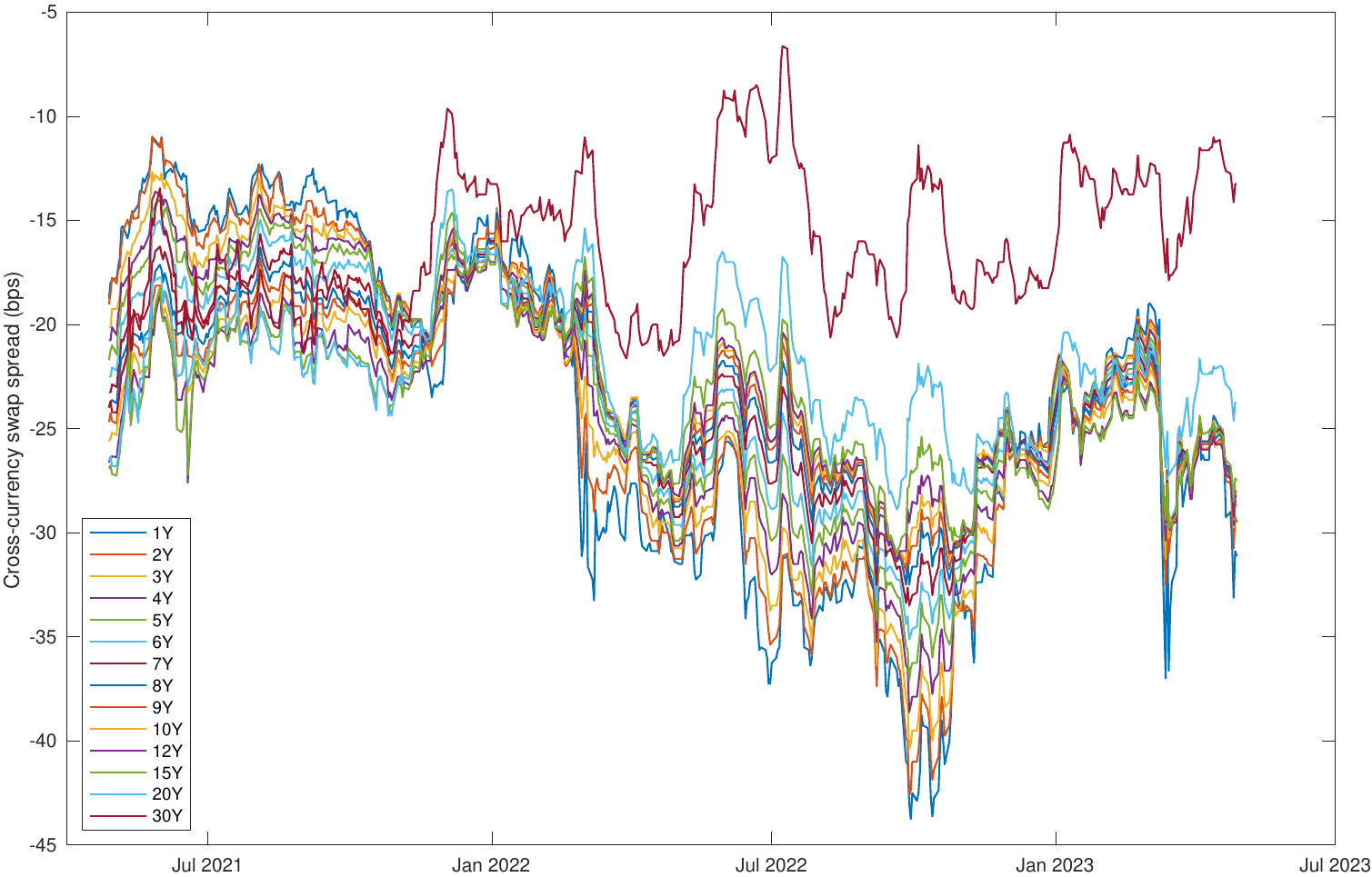} &	
			\includegraphics[width=0.47\textwidth]{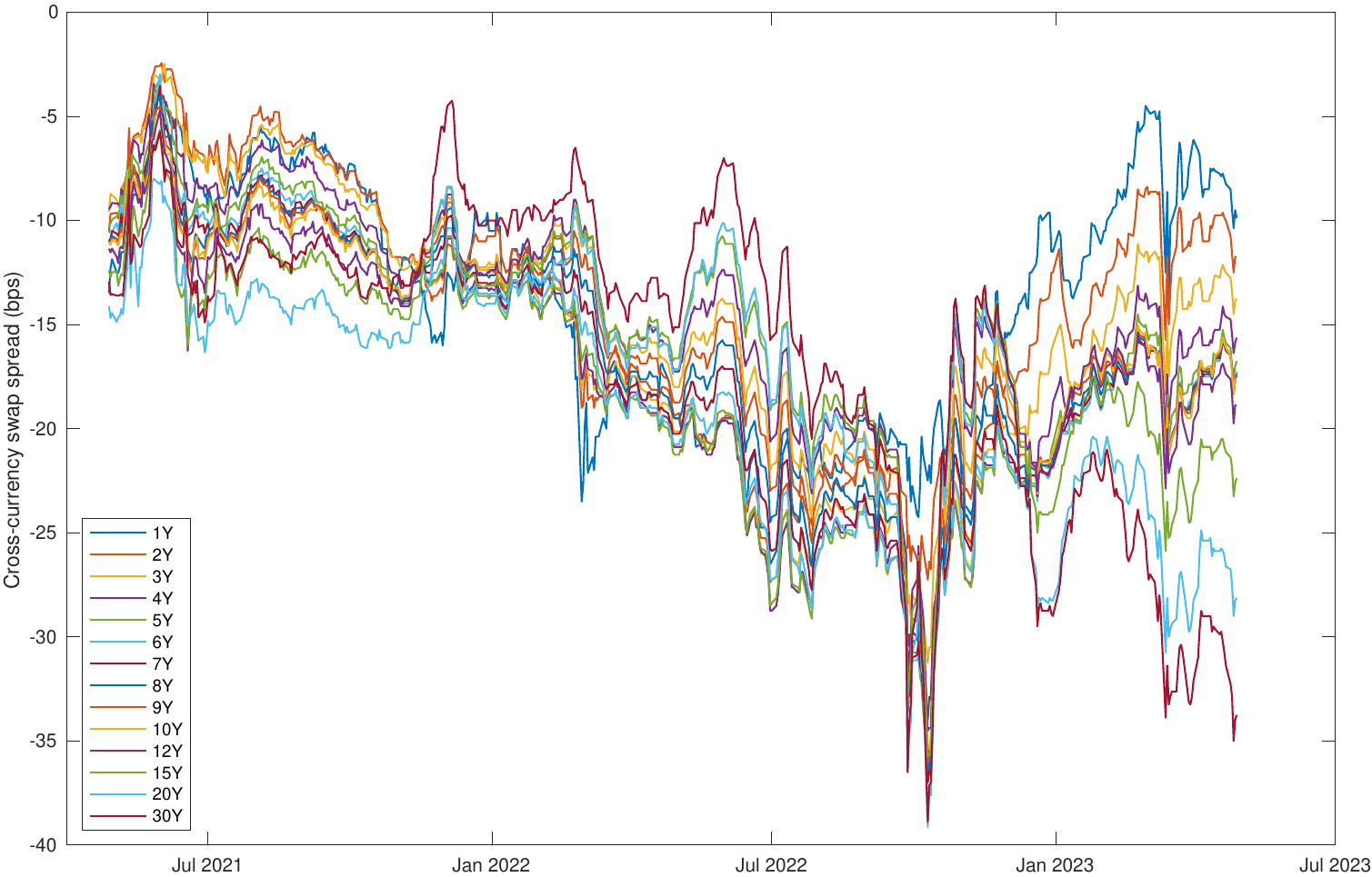}\\
		\end{tabular}
	}
    \caption{Cross-currency basis swap spread time series for the pair EUR-USD (left panel) and for the pair GBP-USD (right panel). Each curve corresponds to a different maturity.}
    \label{fig:ccySwapSpreads}
\end{figure}

To explain this phenomenon, we need to look into the nature of the contracts under consideration. In particular, the market quotes refer to perfectly collateralized instruments. In other words, the published quotes assume the existence of an ideal collateralization agreement (Credit Support Annex - CSA), in which the two agents exchange margin calls in continuous time so to perfectly annihilate  any outstanding credit exposure. On the other hand, the replication strategy involves IBOR rates. Hence, it is subject to (at least) the liquidity risk, since the lending activity is not supported by any guarantee (unsecured lending). In summary, there is a discrepancy between a perfectly collateralized derivative security and a wrongly postulated replication strategy using unsecured borrowing/lending. This highlights the importance of studying the cross-currency basis swap spread. This, however, has received limited coverage in the financial mathematics literature so far: it was analyzed in \cite{fuji10}, \cite{McCloud2013}, and \cite{mopa17}. We also mention \cite{ding2024}, that first appeared after the present work: here the authors study the pricing of cross-currency swaps and focus on contracts referencing overnight rates that we mention at the end of Section \ref{sec:example}. \textcolor{black}{In Remark \ref{rem:Ding} we relate their model with our general setting. Finally,} up to our knowledge, the only reference providing a modeling framework for this spread in an HJM setting is \cite{fujiiPhd}: our work greatly generalizes this setting. 

\subsection{IBOR-OIS spread} A similar discrepancy is observed in single-currency interest-rate markets when trying to replicate the market quotes of forward-rate agreements (FRA) with unsecured borrowing/lending on IBOR zero-coupon bonds. Let $L^{k_0}_t(T_1,T_2)$ be the (collateralized) FRA rate at time $t$ for the period $[T_1,T_2]$, where $k_0$ denotes a generic currency. If the replication was possible, we should empirically observe that
\begin{equation*}
    L^{k_0}_t(T_1,T_2)= \frac{1}{T_2-T_1}\left(\frac{B^{k_0}(t,T_1)}{B^{k_0}(t,T_2)}-1\right),
\end{equation*}
where $B^{k_0}(t,\cdot)$ denotes the unsecured (IBOR) zero-coupon bonds  for the currency $k_0$. However, this replication argument fails, as shown empirically e.g. in \cite{BiaCar2013}.  Moreover, it is also observed that the forward rate can not be reconstructed from collateralized zero-coupon bonds $B^{k_0, k_0}(t,\cdot)$ in the currency $k_0$, namely
\begin{equation*}
    L^{k_0}_t(T_1,T_2)\neq L^{k_0,k_0,D}_t(T_1,T_2):=\frac{1}{T_2-T_1}\left(\frac{B^{k_0, k_0}(t,T_1)}{B^{k_0, k_0}(t,T_2)}-1\right).
\end{equation*}
This discrepancy has led to the multiple-curve framework initiated by the seminal work by \cite{hen07}, and later studied by several authors, e.g. \cite{backwell2019} and \cite{cre12}, \cite{Cuchiero2016}, \cite{Cuchiero2019}, \cite{egg2020}, \cite{fitr12}, \cite{GM:14}, \cite{GMR:15}, \cite{GPSS14}, \cite{GR15}, \cite{hen10}, \cite{Henr14}, \cite{ken10}, \cite{kitawo09}, \cite{macrina2018}, \cite{mer10b}, \cite{mer10}, \cite{merxie12},  \cite{mopa10},  \cite{MR14}, \cite{pata10}.

\subsection{The LIBOR discontinuation is not the end of IBORs} A further element of complexity in this picture is the ongoing reform of certain interest rate benchmarks. First of all, we need a word of clarity: the term LIBOR refers to the London inter-bank offered rate, which is an unsecured inter-bank rate available for several tenors, maturities, and currencies. It has been administrated by the British Bankers Association (BBA) until 2014, and by the Inter Continent Exchange (ICE) afterwards. It is ICE that is managing its discontinuation by publishing selected tenors for the USD and GBP areas via an unrepresentative synthetic methodology. LIBOR, however, is only an example of unsecured inter-bank offered rate subject to a certain jurisdiction. There are indeed several other unsecured inter-bank rates, such as EURIBOR for the EUR area or TIBOR for the JPY area. Hence, we should not take LIBOR as a synonym for inter-bank offered rates in general, and the fact that LIBOR rates are being discontinued does not mean that unsecured inter-bank rates are being discontinued in general. In the following, we clarify the ongoing situation for the EUR, JPY and the USD area.

In the EUR area the reform of interest rate benchmarks led to the discontinuation of the unsecured EONIA (Euro Overnight Index Average) which was substituted by ESTR (Euro Short Term Rate). The calculation methodology of the EURIBOR rate has been updated using a three-step waterfall methodology\footnote{\url{https://www.emmi-benchmarks.eu/benchmarks/euribor/reforms/}}. There are no plans for a discontinuation of EURIBOR, meaning that for the EUR area, a multiple-curve model is still needed to properly describe the market of interest rate products\footnote{\url{https://www.esma.europa.eu/sites/default/files/2023-12/ESMA81-1071567537-121_EUR_RFR_WG_Final_Statement.pdf}}. Similarly, in the JPY area there are no plans to discontinue the Tokyo Inter Bank Offered Rate (TIBOR)\footnote{\url{https://www.jbatibor.or.jp/english/reform/}}.

The situation in the USD area is more involved. Here the overnight Fed Fund rate has not been discontinued. However, a second overnight rate has been introduced as the central building block of the interest rate market: this is the secured overnight financing rate (SOFR) which is a repo rate where the collateral is given by treasury bills. This means that for the USD area, there are two overnight rates, namely an unsecured one (Fed Fund) and a secured one (SOFR). In particular, it is SOFR which is now the market standard for the remuneration of collateral. This means that, for example, a proper valuation of swaps depending on the Fed Fund rate should be performed using a two-curve setting to account for the spread between the Fed Fund rate and SOFR. Notice that swaps on the Fed Fund rate used to be ``old'' OIS swaps in the terminology of \cite{Cuchiero2016}. 

With the demise of USD LIBOR, the market of interest rate swaps and interest rate options mostly moved, in terms of liquidity, to SOFR-based instruments, where the floating rate relevant for a certain coupon is constructed by compounding SOFR over the relevant time window. However, these overnight-based instruments are not suitable for the Asset Liability Management hedging needs of medium and smaller financial institutions. This led on the one hand to criticisms against SOFR, see for example \cite{cdlwy2023}, and on the other hand to the introduction of alternative inter-bank rates such as AMERIBOR T30 or AMERIBOR T90 administrated by the American Financial Exchange\footnote{The American Financial Exchange is  an electronic exchange for direct lending and borrowing for American banks.}. As previously mentioned, in the US area, interest rate option markets moved to SOFR-based instruments which have been analyzed in several papers such as \cite{andersen2020},  \cite{backwell2022}, \cite{brace2022}, \cite{fontana2023}, \cite{fontana2023b},    \cite{gellert2021}, \cite{heitfield2019},   \cite{huggins2022}, \cite{LyasMer2019}, \cite{macrina2020}, \cite{mercurio2018}, \cite{rutkowski2021}, \cite{schloegl2023},  \cite{skov2021}, \cite{turfus2020}, \cite{willems2020}.

\subsection{Summary of the requirements}
Our objective is to devise a general framework for cross-currency markets that makes it possible to jointly capture all the previously mentioned stylized facts. The paper is structured as follows. In Section \ref{sec:HJMs} we introduce the HJM framework for the multiple discount curves which we need to account for the presence of the cross-currency basis spread. In Section \ref{sec:hjmforward} we study abstract indices, allowing us to span the whole interest rate market, and, more generally, any market with quoted forwards on indices. Finally, Section \ref{sec:example} shows the relevance of the framework in the context of cross-currency swaps valuation. Moreover, the paper is complemented by an extended analysis that provides all details regarding pricing of contingent claims in a market with multiple curves and collateralization, by following and extending \cite{gnoSei2021}: in Section \ref{sec:MultiCurrTrading} we construct the cross-currency basis market  which includes general risky assets. In Section \ref{sec:pricing} we obtain valuation formulas for fully-collateralized contingent claims by extending \cite{gnoSei2021}. As a preparatory step in view of the HJM framework and as an application, we thoroughly study zero-coupon bonds (ZCBs) as basic building blocks for term-structure models. Section \ref{sec:measureChanges} presents all the measure changes that are relevant for defining our HJM framework.

\section{Cross-currency HJM framework}\label{sec:HJMs}
We start by introducing some notations. Let $T>0$ be a fixed time horizon and $\left(\Omega,\cG,\GG,\PP\right)$ be a filtered probability space  with the filtration $\GG=\left(\cG_t\right)_{t\in [0,T]}$ satisfying the usual conditions. Here $\cG_0$ is assumed to be trivial, and all the processes to be introduced in the sequel are assumed to be $\GG$-adapted right-continuous with left limits (RCLL) semimartingales. We postulate the existence of $L\in\mathbb{N}$ currency areas, and we denote with the index $k_0$, ranging from $1$ to $L$, the domestic currency. Following Section \ref{sec:pricing}, we then introduce collateralization and we use the index $k_3$, ranging from $1$ to $L$, for the currency of denomination of the collateral. In particular, we consider the cash-collateral convention, meaning that collateral is exchanged in units of the currency $k_3$ and not in units of a risky security. Moreover, the agent receives or pays interest contingent on being the poster or the receiver of the collateral. We refer the reader to Section \ref{sec:pricing} for more details. In particular, we work under the assumption of perfect or full collateralization, accordingly to equation \eqref{eq:fullcoll}. 

We further denote by $\mathcal{X}^{k_0,k}$ the price of one unit of a certain currency $k$ in terms of currency $k_0$, for every $k\ne k_0$. Following the usual FORDOM convention, we have, e.g. for EURUSD, that $\mathcal{X}^{USD,EUR}$ is the price in USD of 1 EUR. 

We introduce the following interest rates:
\begin{enumerate}
    \item $r^{k_0}=(r^{k_0}_t)_{t\geq 0}$ represents the unsecured rate of the currency area $k_0$;
    \item $r^{c,k_0}=(r^{c,k_0}_t)_{t\geq 0}$ is the rate of remuneration of cash collateral, when collateral is posted in units of the currency $k_0$.
\end{enumerate}
Next, we introduce the following spreads capturing the discrepancy between unsecured rates and collateral rates of different currency denominations.
\begin{definition}\label{def:spreads}
Let $1\leq k_0\leq L$. We define:
\begin{enumerate}
\item The \emph{liquidity spread} $q^{k_0}=(q^{k_0}_t)_{t\geq 0}$ as the difference between the unsecured funding rate $r^{k_0}$ and the collateral rate $r^{c,k_0}$, namely $q^{k_0}:=r^{k_0}-r^{c,k_0}$;
\item For any $k_3\ne k_0$, the \emph{cross-currency basis spread} $q^{k_0,k_3}=(q^{k_0, k_3}_t)_{t\geq 0}$ as the difference between the liquidity spread for the currency $k_0$ and the liquidity spread for the currency $k_3$, namely $q^{k_0,k_3}:=q^{k_0}-q^{k_3}$.
\end{enumerate}
\end{definition}
We then introduce the family of collateral cash accounts $B^{c, k_0, k_3}=(B^{c, k_0, k_3}_t)_{t\geq 0}$ with interest rate $r^{c, k_0, k_3}:= r^{c, k_0}+ q^{k_0, k_3}$, namely
\begin{equation}\label{eq:Bck0k3def}
    B^{c,k_0,k_3}_t:=\exp\left\{\int_0^t \left(r^{c,k_0}_s+q^{k_0,k_3}_s\right)ds\right\}, \qquad 1\leq k_0,k_3\leq L.
\end{equation}
Whenever $k_3=k_0$, we have by definition that $q^{k_0,k_3}\equiv 0$ $d \PP \otimes  dt$-a.s. and we write $B^{c,k_0}$ in place of $B^{c,k_0,k_0}$ for simplicity of notation.

Let $T\geq0$. Following the notation in Section \ref{sec:pricing}, we denote the price process of a domestic ZCB collateralized in domestic currency by $\left\{B^{k_0,k_0}(t,T), \ 0\leq t\leq T\right\}$, and the price process of a domestic ZCB collateralized in foreign currency by $\left\{B^{k_0,k_3}(t,T), \ 0\leq t\leq T\right\}$. Based on the preliminary work in Section \ref{sec:pricing}, the price of these ZCBs in the presence of a perfect collateral agreement (see equation \eqref{eq:fullcoll}), for any $0\leq t\leq T$, is given respectively by
\begin{align*}
B^{k_0,k_0}(t,T)=B^{c,k_0}_t\mathbb{E}^{\QQ^{k_0}}\left[\left.\frac{1}{B^{c,k_0}_T}\right|\mathcal{G}_t\right]\quad \mbox{ and } \quad B^{k_0,k_3}(t,T)=B^{c,k_0,k_3}_t\mathbb{E}^{\QQ^{k_0}}\left[\left.\frac{1}{B^{c,k_0,k_3}_T}\right|\mathcal{G}_t\right],
\end{align*}
with $\QQ^{k_0}$ being the domestic pricing measure introduced in Section \ref{sec:MultiCurrTrading}. In particular, these pricing formulas highlight dependencies of the various term structures of ZCBs on a common set of interest rates and spreads. 
We aim to construct a modeling framework that takes into account the intimate links existing among the different curves. In view of this, instead of directly modeling all the families of ZCBs, we choose to fix a \textit{reference curve} for each currency area, and to model the remaining curves by means of (either positive or real-valued) spreads with respect to the reference curve. This is intuitive from a financial point of view and in line with the modeling philosophy of \cite{Cuchiero2016}.

In the following, for ${d}_{X}\in\mathbb{N}$, let $X=(X_t)_{t\geq 0}$ be an $\RR^{{d}_{X}}$-valued It\^o semimartingale with differential characteristics $(b, c, K)$ with respect to a truncation function $\chi$. We recall the notion of local exponent (see also \cite[Definition A.6]{KK13}), which we state under the probability space $(\Omega, \mathcal{G}, \mathbb{G}, \PP)$. 
\begin{definition}
    Let $\beta = (\beta_t)_{t\ge 0}$ an $\RR^{{d}_{X}}$-valued predictable and $X$-integrable process. The \emph{local exponent} of $X$ at $\beta$ under the measure $\PP$ is a predictable real-valued process $(\Psi_t^{\PP, X}(\beta_t))_{t\ge0}$ such that $\left( \exp(\int_0^t\beta_sdX_s - \int_0^t \Psi_s^{\PP, X}(\beta_s)ds)\right)_{t\ge0}$ is a local $(\PP,\GG)$-martingale. We denote by $\mathcal{U}^{\PP, X}$ the set of processes $\beta$ such that $\Psi^{\PP, X}(\beta)$ exists. 
\end{definition}

Moreover, with the following lemma we can express the local exponent in Lévy-Kintchine form.

\begin{lemma}\label{prop:localexponent} 
    For any $\beta\in \mathcal{U}^{\PP, X}$, outside some $d\PP\otimes dt$-nullset, it holds that
    \begin{equation*}
        \Psi_t^{\PP, X}(\beta_t) = \beta_t^\top b_t + \frac{1}{2}\beta_t^\top c_t\beta_t + \int\left( e^{\beta_t^\top \xi} - 1 -\beta_t^\top \chi(\xi)\right) K_t(d\xi).
    \end{equation*}
    Moreover, the gradient $\nabla_{\beta_t} \Psi_t^{\PP, X}(\beta_t)$ of $\Psi_t^{\PP, X}(\beta_t)$ in the direction of $\beta_t$ is the $\RR^{{d}_{X}}$-valued vector given by 
    \begin{equation}\label{eq:levykintder}
    \nabla_{\beta_t} \Psi_t^{\PP, X}(\beta_t) = b_t +  c_t\beta_t + \int\left( e^{\beta_t^\top \xi} \xi- \chi(\xi)\right) K_t(d\xi).
    \end{equation}
    We shall use the shorthand $\nabla\Psi_t^{\PP, X}(\beta_t) := \nabla_{\beta_t} \Psi_t^{\PP, X}(\beta_t)$.
\end{lemma}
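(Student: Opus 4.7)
The plan is to proceed via Itô's formula applied to the exponential. Let $Z_t := \int_0^t \beta_s\,dX_s$. Using the canonical semimartingale decomposition of $X$ relative to its characteristics $(b,c,K)$, namely
\begin{equation*}
X_t = X_0 + \int_0^t b_s\,ds + X_t^c + \int_0^t\!\!\int \chi(\xi)(\mu^X-\nu^X)(ds,d\xi) + \int_0^t\!\!\int (\xi-\chi(\xi))\mu^X(ds,d\xi),
\end{equation*}
where $\nu^X(ds,d\xi)=K_s(d\xi)\,ds$ is the predictable compensator of the jump measure $\mu^X$ and $[X^c,X^c]_t = \int_0^t c_s\,ds$, I would first read off the characteristics of the scalar stochastic integral $Z=\int \beta\,dX$. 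Its continuous martingale part $Z^c$ satisfies $d[Z^c,Z^c]_s = \beta_s^\top c_s\beta_s\,ds$, and its jump measure $\mu^Z$ is the push-forward of $\mu^X$ under the map $(s,\xi)\mapsto(s,\beta_s^\top\xi)$.

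Next, apply Itô's formula to $f(z)=e^z$ at the process $Z$. The continuous part contributes the drift $e^{Z_{s-}}\bigl(\beta_s^\top b_s + \tfrac12\beta_s^\top c_s\beta_s\bigr)ds$, while for the jump term one uses $\Delta e^{Z_s} = e^{Z_{s-}}(e^{\Delta Z_s}-1)$ and compensates by $\nu^X$ to obtain the additional drift contribution
\begin{equation*}
e^{Z_{s-}}\int \left(e^{\beta_s^\top\xi}-1-\beta_s^\top\chi(\xi)\right)K_s(d\xi)\,ds.
\end{equation*}
Summing these contributions, $e^{Z_t}\exp\!\bigl(-\int_0^t A_s\,ds\bigr)$ is a local $(\PP,\GG)$-martingale exactly when $A_s$ equals the claimed Lévy-Khintchine expression. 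Since $\beta\in\mathcal{U}^{\PP,X}$ by assumption, such a predictable drift correction $\Psi^{\PP,X}(\beta)$ exists, and uniqueness of predictable compensators identifies $\Psi_t^{\PP,X}(\beta_t)$ with the stated expression outside a $d\PP\otimes dt$-null set.

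For the gradient formula \eqref{eq:levykintder}, I differentiate term by term in $\beta_t$. The polynomial part immediately yields $b_t + c_t\beta_t$. For the jump integral, formal differentiation under the integral produces $\int\bigl(e^{\beta_t^\top\xi}\xi-\chi(\xi)\bigr)K_t(d\xi)$; the interchange is justified by local domination, using that $\beta\in\mathcal{U}^{\PP,X}$ ensures $\int_{\{|\xi|\ge 1\}}e^{(\beta_t+h)^\top\xi}K_t(d\xi)<\infty$ for $h$ in some neighbourhood of $0$ outside a null set, while near zero the integrand and its $\beta$-derivative are controlled by $C|\xi|^2$ via a second-order Taylor bound combined with the Lévy integrability $\int(|\xi|^2\wedge 1)K_t(d\xi)<\infty$.

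The main obstacle is precisely the rigorous justification of the differentiation under the integral in the jump part on a neighbourhood of $\beta_t$, which hinges on the exponential integrability built into the definition of $\mathcal{U}^{\PP,X}$ and an openness property of this set. Everything else reduces to a bookkeeping exercise in Itô's formula and an appeal to uniqueness of predictable compensators to pin down $\Psi^{\PP,X}(\beta)$ up to indistinguishability.
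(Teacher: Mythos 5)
Your derivation of the Lévy--Khintchine form is correct and is essentially the standard argument that the paper does not spell out: the paper simply cites \cite[Proposition 3.3]{Cuchiero2016} for the first identity and dismisses \eqref{eq:levykintder} as ``simple computations''. What you do — read off the characteristics of the scalar integral $Z=\int\beta\,dX$ (drift $\beta^\top b$, quadratic variation $\beta^\top c\,\beta$, jump measure pushed forward by $\xi\mapsto\beta^\top\xi$), apply It\^o's formula to $e^{Z}$, compensate the jumps, and identify the unique predictable drift $\Psi^{\PP,X}(\beta)$ that makes $e^{Z-\int\Psi\,ds}$ a local martingale — is precisely the proof behind that citation (it is the absolutely continuous/It\^o-semimartingale case of the exponential compensator formula of Kallsen--Shiryaev type). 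So your route is not genuinely different; it is a self-contained reconstruction of the outsourced argument, which buys the reader an actual proof at the cost of re-verifying that the relevant jump integral is finite, a fact equivalent to $\beta\in\mathcal{U}^{\PP,X}$ and which you correctly invoke.

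One step of your justification of \eqref{eq:levykintder} is overstated: membership $\beta\in\mathcal{U}^{\PP,X}$ only gives $\int_{\{\beta_t^\top\xi>1\}}e^{\beta_t^\top\xi}\,K_t(d\xi)<\infty$ for ($d\PP\otimes dt$-a.e.) the point $\beta_t$ itself; it does \emph{not} yield exponential integrability for $\beta_t+h$ with $h$ in a neighbourhood of $0$, because the set of exponents with finite exponential moment is convex but in general not open, and $\beta_t$ may sit on its boundary (so there is no ``openness property'' of $\mathcal{U}^{\PP,X}$ to appeal to). Consequently the dominated-convergence argument for differentiating the jump integral requires a genuinely additional (mild) integrability hypothesis, e.g.\ $\int_{\{|\xi|>1\}}e^{\beta_t^\top\xi}|\xi|\,K_t(d\xi)<\infty$, or one-sided/directional differentiability along the specific paths where the formula is used. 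The paper glosses over exactly the same point with its ``simple computations'', so your proposal is at least as careful as the source, but if you want the gradient statement to be airtight you should state that extra integrability condition rather than derive it from $\beta\in\mathcal{U}^{\PP,X}$.
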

\begin{proof}
    For the first part, see \cite[Proposition 3.3]{Cuchiero2016}. Equation \eqref{eq:levykintder} is obtained by simple computations.
\end{proof}

\begin{remark}[NAFLVR]
    From Section \ref{sec:pricing}, we know that
    \begin{align*}
    \left(\frac{B^{k_0,k_0}(t,T)}{B^{c,k_0}_t}\right)_{t\in[0,T]} \quad \mbox{ and } \quad \left(\frac{B^{k_0,k_3}(t,T)}{B^{c,k_0,k_3}_t}\right)_{t\in[0,T]}, \quad \mbox{ for } 1\leq k_0,k_3\leq L, \ k_3\neq k_0,
    \end{align*}
    are $(\QQ^{k_0}, \mathbb{G})$-martingales. In particular, each term structure of ZCBs in the equation above
    constitutes a large financial market with uncountably many traded instruments. The relevant notion of absence of arbitrage is provided by no asymptotic free lunch with vanishing risk (NAFLVR) introduced in \cite[ Section 3]{cuchiero2014}. With respect to the setting of \cite{cuchiero2014}, we need to take into account that we are in the presence of a multitude of term structures of ZCBs and money market accounts. However, in line with \cite{gnoSei2021} it is sufficient to introduce a suitable version of the repo constraint \eqref{eq:repoConstraint}: let $T^\star\leq \infty$ and consider $\mathtt{T}$ as the family of all subsets $\mathbb{T}\subseteq [0,T^\star]$ with $|\mathbb{T}|=M$, $M\in\mathbb{N}$. Then, for every possible choice of $\mathbb{T}=(T_1,\ldots,T_M)^\top\in \mathtt{T}$, the repo constraint has the form
    \begin{align}
    \label{eq:repoConstraint2}
        \psi^{k_0,k_3}_tB^{c,k_0,k_3}_t+\sum_{m=1}^{M}\xi^{T_m,k_0,k_3}_tB^{k_0,k_3}(t,T_m)=0, \quad 1\leq k_0,k_3\leq L, 
    \end{align}
    where $\psi^{k_0,k_3}$ is the amount of units invested in $B^{c,k_0,k_3}$ and $\xi^{T_m,k_0,k_3}$ represents the investment in the ZCB $B^{k_0,k_3}(\cdot,T_m)$.
    In this paper, we identify no arbitrage with the existence of an equivalent measure $\QQ^{k_0}$ that satisfies the conditions of Proposition \ref{prop:aoabasemkt}, together with the repo constraint \eqref{eq:repoConstraint2} and such that every ZCB denominated in units of its own specific cash account is a martingale.
\end{remark}


\subsection{HJM framework for collateral discount curves}\label{sec:hjmCollCurve}
We first concentrate on the domestic bond with domestic collateral. This is the case that is considered by the whole literature on multiple-curve models. The terminology OIS bond is common for this process, since the market-observed initial term structure of such ZCBs is obtained from a bootstrap procedure applied to overnight indexed swaps (OIS). 

We consider a filtered probability space $(\Omega,\cG,\mathbb{G},\QQ^{1},\ldots, \QQ^L)$ with the filtration $\mathbb{G} = (\cG_t)_{t\geq 0}$ satisfying the usual assumptions. On the probability space we postulate the existence of multiple probability measures $\QQ^{k_0}$, for $1\leq k_0\leq L$. Following \cite{Cuchiero2016}, we introduce a term-structure model for $\{(B^{k_0,k_0}(t,T))_{t\in[0,T]},\ T\geq 0\}$ for each currency $k_0$. Let $\cP$ denote the predictable $\sigma$-algebra.

\begin{definition}\label{def:basiccond}
    We say that the triple $(f, \alpha, \sigma)$ satisfies the \emph{HJM-basic condition}, if:
    \begin{enumerate}
    \item The map $f:\RR_+ \to \RR$ is measurable with $\int_0^T|f(u)|du<\infty$, $\QQ^{k_0}$-a.s. for all $T\in\RR_+$;
    \item The map $(\omega,t,T)\mapsto \alpha_t(T)(\omega)$ is a $\cP\otimes \mathcal{B}(\RR_+)$-measurable $\RR$-valued process such that \newline $\int_0^t\int_0^T|\alpha_s(u)|duds<\infty$, $\QQ^{k_0}$-a.s. for all $t,T\in\RR_+$;
    \item The map $(\omega,t,T)\mapsto \sigma_t(T)(\omega)$ is $\cP\otimes \mathcal{B}(\RR_+)$-measurable $\RR^{{d}_{X}}$-valued process such that \newline $\int_0^T\sigma_t(u)^\top\sigma_t(u) du<\infty$, $\QQ^{k_0}$-a.s. for all $t,T\in\RR_+$, and the process $\left(\left(\int_0^T|\sigma_{t,j}(u)|^2du\right)^{\frac{1}{2}}\right)_{t\geq 0}$ is integrable with respect to the $j$-th component of the semimartingale $X$.
    \end{enumerate}
\end{definition}

\begin{definition}
\label{def:bondpricemodel}
For any $1\leq k_0\leq L$, a \emph{bond-price model} for the currency $k_0$ is a quintuple $(B^{c,k_0}, X, f^{c,k_0}_0,\alpha^{c,k_0},\sigma^{c,k_0})$ where:
\begin{enumerate}
    \item The collateral cash account $B^{c,k_0}$ is absolutely continuous with respect to the Lebesgue measure, i.e. $B^{c,k_0}_t=e^{\int_0^t r^{c,k_0}_s ds}$ with collateral short rate $r^{c,k_0}=(r^{c,k_0}_t)_{t\geq 0}$;
    \item $X$ is an $\RR^{{d}_{X}}$-valued It\^o semimartingale;
    \item The triple $(f^{c,k_0}_0, \alpha^{c,k_0}, \sigma^{c,k_0})$  satisfies the HJM-basic condition in Definition \ref{def:basiccond};
    \item For every $T\ge 0$, the instantaneous collateral forward rate $(f^{c,k_0}_t(T))_{t\in[0,T]}$ is given by
    \begin{align}\label{eq:HJMfck0}
f^{c,k_0}_t(T)=f^{c,k_0}_0(T)+\int_0^t\alpha^{c,k_0}_s(T)ds+\int_0^t\sigma^{c,k_0}_s(T)dX_s;
    \end{align}
    \item The $k_0$-collateralized $k_0$-ZCB prices $\{(B^{k_0,k_0}(t,T))_{t\in[0,T]}, \ T\geq 0\}$ satisfy 
    \begin{equation}\label{eq:expfck0}
    B^{k_0,k_0}(t,T)=e^{-\int_t^T f^{c,k_0}_t(u)du},  
    \end{equation}
    for all $t\leq T$ and $T\geq 0$. Moreover $B^{k_0,k_0}(t,t)=1$ for all $t\geq 0$.
\end{enumerate}
\end{definition}

Next, we characterize the measures $\QQ^{k_0}$ as the ``domestic" risk-neutral measure for each economy $k_0$. This is in line with the previously introduced measure $\QQ^{k_0}$. The next definition is in line with \cite[Definition 3.9]{Cuchiero2016}.

\begin{definition}\label{def:bondModelRiskNeutral} Let $1\leq k_0 \leq L$. We say that the bond price model for the $k_0$-collateralized $k_0$-ZCBs is \emph{risk-neutral} if the processes
\begin{align}\label{eq:bondModelRiskNeutral}
    \left\{\left(\frac{B^{k_0,k_0}(t,T)}{B^{c,k_0}_t}\right)_{t\in [0,T]}, \ T\geq 0\right\}
\end{align}
are $(\QQ^{k_0},\GG)$-martingales.    
\end{definition}

We stress the fact that \eqref{eq:bondModelRiskNeutral} being martingales does not mean that the cash account $B^{c,k_0}$ is the num\'eraire of the measure $\QQ^{k_0}$. In fact, it is not. As discussed in Section \ref{sec:pricing}, $B^{c,k_0}$ is the specific funding account of the $k_0$-collateralized $k_0$-ZCB. When considering multiple risky assets in the $k_0$ economy as in Section \ref{sec:MultiCurrTrading}, it becomes clear that $B^{c,k_0}$ is only the funding account for one of the several risky assets in the economy. 

We shall now characterize the martingale property for the family of processes in \eqref{eq:bondModelRiskNeutral}. Let $$\Sigma^{c,k_0}_t(T):=\int_t^T \sigma^{c,k_0}_t(u) d u.$$ We state the following result. 

\begin{proposition} \label{prop:driftcondbondmodel}
Let $1\leq k_0\leq L$ and $0\le t\le T$. The followings are equivalent:
\begin{enumerate}
    \item The bond-price model for the currency $k_0$ is risk neutral;
    \item The conditional expectation hypothesis holds, i.e.
    \begin{align*}
        \Excond{\QQ^{k_0}}{\frac{B^{c,k_0}_t}{B^{c,k_0}_T}}{\cG_t}=e^{-\int_t^Tf^{c,k_0}_t(u) du};
    \end{align*}
    \item The process $-\Sigma^{c,k_0}_t(T)\in\mathcal{U}^{\QQ^{k_0}, X}$ and the following conditions are satisfied:
    \begin{enumerate}
        \item The process
        \begin{align*}
            \left(\exp\left\{-\int_0^t\Sigma^{c,k_0}_s(T)dX_s-\int_0^t\Psi^{\QQ^{k_0}, X}_s(-\Sigma^{c,k_0}_s(T))ds\right\}\right)_{t\in[0,T]}
        \end{align*}
        is a $(\QQ^{k_0}, \GG)$-martingale;
        \item The consistency condition holds, meaning that
        \begin{align}\label{eq:conscondftt}
            \Psi_t^{\QQ^{k_0},-\int_0^\cdot r^{c,k_0}_sds}(1)=-r^{c,k_0}_{t-}=-f^{c,k_0}_{t-}(t);
        \end{align}
        \item The HJM drift condition 
        \begin{align}\label{eq:HJMcondck0}
            \int_t^T\alpha^{c,k_0}_t(u)du=\Psi^{\QQ^{k_0},X}_t(-\Sigma^{c,k_0}_t(T))
        \end{align}
        holds.
    \end{enumerate}
\end{enumerate}
\end{proposition}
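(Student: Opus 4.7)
The plan is to prove the equivalences $(1)\Leftrightarrow(2)$ and $(2)\Leftrightarrow(3)$ separately.

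For $(1)\Leftrightarrow(2)$: The bond-price model enforces $B^{k_0,k_0}(T,T)=e^{-\int_T^T f^{c,k_0}_T(u)du}=1$ by Definition \ref{def:bondpricemodel}. Hence the martingale property of $V_t(T):=B^{k_0,k_0}(t,T)/B^{c,k_0}_t$ is equivalent, through its terminal value $V_T(T)=1/B^{c,k_0}_T$, to the identity
\[
\frac{B^{k_0,k_0}(t,T)}{B^{c,k_0}_t}=\Excond{\QQ^{k_0}}{\frac{1}{B^{c,k_0}_T}}{\cG_t},
\]
which upon multiplying by $B^{c,k_0}_t$ and invoking \eqref{eq:expfck0} is exactly (2).

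For $(2)\Leftrightarrow(3)$, the heart of the argument is to compute the $\QQ^{k_0}$-dynamics of $V(T)=\exp(Y(T))$, where $Y_t(T):=-\int_t^T f^{c,k_0}_t(u)du-\int_0^t r^{c,k_0}_s ds$. Substituting the HJM ansatz \eqref{eq:HJMfck0} into $F_t(T):=\int_t^T f^{c,k_0}_t(u)du$, applying the stochastic Fubini theorem (justified by the integrability assumptions of Definition \ref{def:basiccond}) to interchange the $du$-integral with the $ds$- and $dX_s$-integrals, splitting $\int_t^T=\int_s^T-\int_s^t$, and finally using the consistency $f^{c,k_0}_s(s)=r^{c,k_0}_s$ from (3)(b) to collapse the diagonal term, one obtains
\[
F_t(T)=F_0(T)-\int_0^t r^{c,k_0}_s ds+\int_0^t A^{c,k_0}_s(T)\,ds+\int_0^t \Sigma^{c,k_0}_s(T)\,dX_s,
\]
with $A^{c,k_0}_s(T):=\int_s^T\alpha^{c,k_0}_s(u)du$. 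The $r^{c,k_0}$ terms cancel in $Y_t(T)=-F_t(T)-\int_0^t r^{c,k_0}_s ds$, yielding
\[
V_t(T)=V_0(T)\exp\left(-\int_0^t A^{c,k_0}_s(T)\,ds-\int_0^t \Sigma^{c,k_0}_s(T)\,dX_s\right).
\]
By Lemma \ref{prop:localexponent}, $V(T)$ is a $(\QQ^{k_0},\GG)$-local martingale iff $-\Sigma^{c,k_0}_\cdot(T)\in\mathcal{U}^{\QQ^{k_0},X}$ and the finite-variation drift cancels the compensator, i.e.\ $A^{c,k_0}_t(T)=\Psi^{\QQ^{k_0},X}_t(-\Sigma^{c,k_0}_t(T))$, which is precisely the HJM drift condition \eqref{eq:HJMcondck0}. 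Condition (3)(a) then upgrades local to true martingale, delivering (2). Conversely, assuming (2), $V(T)$ is by construction a $(\QQ^{k_0},\GG)$-martingale whose stochastic exponential representation forces (a), (c) and matches the drift cancellation requiring (b).

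The main obstacle I anticipate is the careful coupling between the stochastic Fubini step and the consistency condition \eqref{eq:conscondftt}. The equality $\Psi_t^{\QQ^{k_0},-\int_0^\cdot r^{c,k_0}_s ds}(1)=-r^{c,k_0}_{t-}$ follows automatically from Lemma \ref{prop:localexponent} applied to the continuous finite-variation process $-\int_0^\cdot r^{c,k_0}_s ds$ with characteristics $(-r^{c,k_0},0,0)$, but the further identification with $-f^{c,k_0}_{t-}(t)$ is the genuine consistency needed so that the diagonal contribution arising from the moving lower limit $t$ in $F_t(T)$ cancels the funding term $\int_0^t r^{c,k_0}_s ds$. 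In the direction $(2)\Rightarrow(3)$, this consistency is retrieved by differentiating the conditional expectation hypothesis in $T$ at $T=t$: the left-hand side produces $-f^{c,k_0}_t(t)$ via \eqref{eq:expfck0}, while the right-hand side produces $-r^{c,k_0}_t$ via the fundamental theorem of calculus applied to $\Excond{\QQ^{k_0}}{\exp(-\int_t^T r^{c,k_0}_s ds)}{\cG_t}$, thereby closing the equivalence.
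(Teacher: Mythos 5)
Your overall architecture is the one the paper intends (its own proof defers to \cite{Cuchiero2016}, and the same machinery is spelled out in the proof of Theorem \ref{prop:driftcondspreadmodel}): the equivalence (i)$\Leftrightarrow$(ii) via the terminal value $B^{k_0,k_0}(T,T)=1$, then stochastic Fubini applied to the HJM dynamics, the exponential representation of $B^{k_0,k_0}(\cdot,T)/B^{c,k_0}$, and the local exponent of Lemma \ref{prop:localexponent} to read off the drift condition \eqref{eq:HJMcondck0} and condition (a). Those parts are sound, up to the standard uniqueness (dP$\otimes$dt-a.e.) of the exponential compensator when you pass from ``the drift must cancel the compensator'' to $A^{c,k_0}_t(T)=\Psi^{\QQ^{k_0},X}_t(-\Sigma^{c,k_0}_t(T))$.

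The genuine weak point is how you obtain the consistency condition \eqref{eq:conscondftt} in the direction (ii)$\Rightarrow$(iii). You propose to differentiate the conditional expectation hypothesis in $T$ at $T=t$: on the left this requires $u\mapsto f^{c,k_0}_t(u)$ to be differentiable (or at least a Lebesgue point) at $u=t$, which is not guaranteed by the HJM-basic condition (only measurability and local integrability in the maturity variable are assumed); on the right you must interchange $\partial/\partial T$ with $\Excond{\QQ^{k_0}}{\cdot}{\cG_t}$, which needs a domination argument that is not available from the standing assumptions (the integrand $\frac{1}{h}\bigl(1-e^{-\int_t^{t+h}r^{c,k_0}_s ds}\bigr)$ need not be dominated by an integrable variable uniformly in $h$, since $r^{c,k_0}$ is neither bounded nor of fixed sign). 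Moreover \eqref{eq:conscondftt} is stated with left limits/predictable versions, which a right-derivative at $T=t$ does not directly deliver. The fix is the route the paper uses for Theorem \ref{prop:driftcondspreadmodel}: do \emph{not} invoke consistency in the Fubini step, but keep the diagonal term, i.e. write
\begin{equation*}
\frac{B^{k_0,k_0}(t,T)}{B^{c,k_0}_t}=\frac{B^{k_0,k_0}(0,T)}{B^{c,k_0}_0}\exp\left\{-\int_0^t\Sigma^{c,k_0}_s(T)dX_s-\int_0^t\Bigl(\textstyle\int_s^T\alpha^{c,k_0}_s(u)du - f^{c,k_0}_s(s)+r^{c,k_0}_s\Bigr)ds\right\},
\end{equation*}
then use the martingale property for each $T$ together with \cite[Lemma A.13]{KK13} to get the identity $\Psi^{\QQ^{k_0},X}_t(-\Sigma^{c,k_0}_t(T))=\int_t^T\alpha^{c,k_0}_t(u)du - f^{c,k_0}_{t-}(t)+r^{c,k_0}_{t-}$ outside a dP$\otimes$dt-nullset, and set $T=t$: both $\Sigma^{c,k_0}_t(t)$ and the drift integral vanish, which yields \eqref{eq:conscondftt} without any interchange of limits, and substituting back gives \eqref{eq:HJMcondck0} and then (a). With that replacement your proof is complete and coincides with the paper's.
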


\begin{proof}
    See the proof of \cite[Proposition 3.9]{Cuchiero2016}.
\end{proof}

\begin{corollary}\label{cor:driftcondk0k0}
    If the bond-price model for the currency $k_0$ is risk neutral, then:
    \begin{enumerate}
        \item For every $T>0$, the instantaneous collateral forward rate $(f^{c,k_0}_t(T))_{t\in[0,T]}$ is given by
    \begin{align}\label{eq:instFwdCollRate}
       f^{c,k_0}_t(T)=f^{c,k_0}_0(T)-\int_0^t\sigma^{c,k_0}_s(T)\nabla\Psi_s^{\QQ^{k_0}, X}(-\Sigma^{c,k_0}_s(T))ds+\int_0^t\sigma^{c,k_0}_s(T)dX_s;
    \end{align}
    \item For every $t\ge0$, the collateral short rate $r^{c,k_0}_t$ at time $t$ is given by
       \begin{align*}
       r^{c,k_0}_t=f^{c,k_0}_0(t)-\int_0^t\sigma^{c,k_0}_s(t)\nabla\Psi_s^{\QQ^{k_0}, X}(-\Sigma^{c,k_0}_s(t))ds+\int_0^t\sigma^{c,k_0}_s(t)dX_s.
    \end{align*}
    \end{enumerate}
\end{corollary}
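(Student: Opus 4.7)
The plan is to derive both statements directly from the HJM drift condition \eqref{eq:HJMcondck0} in Proposition \ref{prop:driftcondbondmodel} by differentiating in the maturity variable $T$, and then to specialize to $T=t$ for the short-rate formula.

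First, I would fix $t\geq 0$ and differentiate the drift condition with respect to $T$. The left-hand side $\int_t^T \alpha^{c,k_0}_t(u)du$ differentiates to $\alpha^{c,k_0}_t(T)$ at a.e. $T\geq t$, using the integrability of $\alpha^{c,k_0}_t(\cdot)$ from the HJM-basic condition in Definition \ref{def:basiccond}. For the right-hand side, note that $\partial_T\Sigma^{c,k_0}_t(T)=\sigma^{c,k_0}_t(T)$, again a.e. in $T$. Applying the chain rule and the Lévy--Khintchine representation of the gradient given in Lemma \ref{prop:localexponent} (equation \eqref{eq:levykintder}), I obtain
\begin{align*}
    \alpha^{c,k_0}_t(T)=\partial_T\Psi^{\QQ^{k_0},X}_t(-\Sigma^{c,k_0}_t(T))= -\sigma^{c,k_0}_t(T)\nabla\Psi_t^{\QQ^{k_0}, X}(-\Sigma^{c,k_0}_t(T)),
\end{align*}
outside some $d\QQ^{k_0}\otimes dt\otimes dT$-nullset. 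Substituting this expression back into the dynamics \eqref{eq:HJMfck0} from Definition \ref{def:bondpricemodel}(4) yields equation \eqref{eq:instFwdCollRate}, which proves part (i).

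For part (ii), I would evaluate \eqref{eq:instFwdCollRate} at $T=t$. By the consistency condition \eqref{eq:conscondftt} in Proposition \ref{prop:driftcondbondmodel}, we have $r^{c,k_0}_{t-}=f^{c,k_0}_{t-}(t)$, and since both sides admit right-continuous modifications (all processes are RCLL by assumption in Section \ref{sec:HJMs}), we may pass to the right-continuous version to write $r^{c,k_0}_t=f^{c,k_0}_t(t)$. Substituting $T=t$ in \eqref{eq:instFwdCollRate} then gives the desired formula for the collateral short rate.

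The main obstacle I anticipate is justifying the termwise differentiation in $T$: one needs Fubini-type arguments to interchange the $u$-integral with the $ds$-integral and with the stochastic integral against $X$ in the definition of $f^{c,k_0}_t(T)$, and also to make sense of $\partial_T$ applied to the local exponent $\Psi^{\QQ^{k_0},X}_t(-\Sigma^{c,k_0}_t(T))$. This is handled by the measurability and integrability requirements in Definition \ref{def:basiccond}, which ensure that $\alpha^{c,k_0}_t(\cdot)$ and $\sigma^{c,k_0}_t(\cdot)$ are locally integrable in the maturity variable, so that differentiation of the respective indefinite integrals is licit for almost every $T$. The identification on an exceptional-null set is then extended to all $T\leq T^\star$ by the right-continuity of both sides in $T$, which follows from the continuity assumptions on the coefficients.
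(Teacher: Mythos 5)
Your proposal is correct and follows essentially the same route as the paper: differentiate the drift condition \eqref{eq:HJMcondck0} in $T$ to get $\alpha^{c,k_0}_t(T)=-\sigma^{c,k_0}_t(T)\,\nabla\Psi_t^{\QQ^{k_0},X}(-\Sigma^{c,k_0}_t(T))$, substitute into \eqref{eq:HJMfck0} for part (i), and evaluate at $T=t$ (using the consistency condition to identify $r^{c,k_0}_t=f^{c,k_0}_t(t)$) for part (ii). Your extra remarks on Fubini-type justification and right-continuity only flesh out steps the paper leaves implicit.
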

\begin{proof}
    By taking the derivative with respect to $T>0$ on both sides of equation \eqref{eq:HJMcondck0} we get
    \begin{equation}\label{eq:alphack0}
        \alpha^{c,k_0}_t(T)=\frac{\partial \Psi^{\QQ^{k_0},X}_t(-\Sigma^{c,k_0}_t(T))}{\partial T}= -\sigma^{c,k_0}_t(T) \,\nabla\Psi_t^{\QQ^{k_0}, X}(-\Sigma^{c,k_0}_t(T)).
    \end{equation}
    By substituting equation \eqref{eq:alphack0} into \eqref{eq:HJMfck0} we get (i). By letting $t \to T$ we then obtain (ii).
\end{proof}

\begin{remark}
    We remark that if $X$ is a standard Brownian motion, then the differential characteristics of $X$ are $(0, I_{d_X}, 0)$, where $I_{d_X}$ denotes the identity matrix of dimension $d_X$. Hence $$\nabla\Psi_t^{\QQ^{k_0}, X}(-\Sigma^{c,k_0}_t(T)) = -  \Sigma^{c,k_0}_t(T),$$ and we recover the classical formulation for the instantaneous collateral forward rate, namely 
    \begin{align*}
f^{c,k_0}_t(T)=f^{c,k_0}_0(T)+\int_0^t\sigma^{c,k_0}_s(T)\left( \int_s^T \sigma^{c,k_0}_s(u) d u\right)ds+\int_0^t\sigma^{c,k_0}_s(T)dX_s,
    \end{align*}
as found, e.g, in \cite[Theorem 6.1]{fil09}.
\end{remark}

\subsection{HJM framework for cross-currency basis curves}\label{sec:HjmCcyBsis}
The previous results conveniently summarize the HJM methodology applied to the domestic-collateralized domestic-ZCBs, $\left(B^{k_0, k_0}(t, T)\right)_{t\in [0, T]}$, for every currency denomination $1\leq k_0\leq L$. The next step is to model foreign-collateralized domestic-ZCBs, $\left(B^{k_0, k_3}(t, T)\right)_{t\in [0, T]}$, for every $T\ge 0$. The starting point is \eqref{eq:priceBk0k3} according to which the processes
\begin{align*}
    \left(\frac{B^{k_0,k_3}(t,T)}{B^{c,k_0,k_3}_t}\right)_{t\in[0,T]}
\end{align*}
should be $(\QQ^{k_0},\GG)$-martingales. We formulate this requirement in the following assumption.

\begin{assumption}\label{assu:assumptionk0k3ZCBs}
    For all $1\leq k_0,k_3 \leq L$ with $k_0\neq k_3$, we assume that the processes
    \begin{align*}
        \left\{\left(\frac{B^{k_0,k_3}(t,T)}{B^{c,k_0,k_3}_t}\right)_{t\in[0,T]},\, T\geq 0\right\}
    \end{align*}
    are $(\QQ^{k_0},\GG)$-martingales.
\end{assumption}

One possible approach is to introduce a specific bond-price model for $B^{k_0, k_3}(t, T)$ in the spirit of Definition \ref{def:bondpricemodel}. This, however, would not exploit the link between $B^{k_0, k_0}(t, T)$ and $B^{k_0, k_3}(t, T)$, and would lead to some redundancy. With a similar approach to \cite{Cuchiero2016}, one can instead model the multiplicative spread between $B^{k_0, k_0}(t, T)$ and $B^{k_0, k_3}(t, T)$. 
By doing this, the previously studied domestic-collateralized domestic-ZCB price models serve as reference curves, while the remaining curves are obtained by means of spreads with respect to these reference curves. In particular, this will lead us to construct a HJM framework for the instantaneous cross-currency basis spreads $\{(q^{k_0,k_3}_t(T))_{t\in[0,T]}, \ T\geq 0\}$.

We first define the modeling quantities.
\begin{definition}
    Let $t\in[0,T]$ with $T\geq 0$, and $1\leq k_0,k_3 \leq L$ such that $k_0\neq k_3$. We define the \emph{$k_0$-$k_3$ cross-currency spread bond} via
    \begin{align}\label{eq:bondspread}
        Q^{k_0,k_3}(t,T):=\frac{B^{k_0,k_3}(t,T)}{B^{k_0,k_0}(t,T)},
    \end{align}
    and the \emph{$k_0$-$k_3$ cross-currency spread cash account} $Q^{k_0,k_3}=(Q^{k_0,k_3}_t)_{t\geq 0}$ by setting
    \begin{align*}
        Q^{k_0,k_3}_t:=e^{\int_0^t q^{k_0,k_3}_s ds},
    \end{align*}
    with $q^{k_0,k_3}$ being the $k_0$-$k_3$ cross-currency basis spreads introduced in Definition \ref{def:spreads}.
\end{definition}

The following lemma states the relevant martingale property for our purposes.

\begin{lemma}\label{lem:bayesforQ}
    Let $1\leq k_0,k_3\leq L$ with $k_0\neq k_3$ and $T\geq 0$. Assume that the bond-price model for the $k_0$-collateralized $k_0$-ZCB is risk neutral. Then Assumption \ref{assu:assumptionk0k3ZCBs} holds if and only if the processes
    \begin{align*}
        \left\{\left(\frac{Q^{k_0,k_3}(t,T)}{Q^{k_0,k_3}_t}\right)_{t\in[0,T]},\ T\geq 0\right\}
    \end{align*}
    are $(\QQ^{T,k_0,k_0},\GG)$-martingales, where $\QQ^{T,k_0,k_0}$ is the domestic-collateralized domestic $T$-forward measure introduced in Definition \ref{def:forwardmeasure}.
\end{lemma}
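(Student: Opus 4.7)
The plan is to prove the equivalence directly via Bayes' rule, using the multiplicative decomposition of the foreign-collateralized ZCB in terms of the reference domestic-collateralized ZCB and the cross-currency spread bond.

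\textbf{Step 1 (algebraic decomposition).} From Definition \ref{def:spreads} and equation \eqref{eq:Bck0k3def}, observe that the collateral cash account factors as
\begin{align*}
B^{c,k_0,k_3}_t = \exp\!\Big\{\int_0^t r^{c,k_0}_s ds\Big\}\exp\!\Big\{\int_0^t q^{k_0,k_3}_s ds\Big\} = B^{c,k_0}_t\, Q^{k_0,k_3}_t.
\end{align*}
Combined with the definition $Q^{k_0,k_3}(t,T)=B^{k_0,k_3}(t,T)/B^{k_0,k_0}(t,T)$ in \eqref{eq:bondspread}, this gives the key identity
\begin{align*}
\frac{B^{k_0,k_3}(t,T)}{B^{c,k_0,k_3}_t} = \frac{Q^{k_0,k_3}(t,T)}{Q^{k_0,k_3}_t}\cdot \frac{B^{k_0,k_0}(t,T)}{B^{c,k_0}_t}.
\end{align*}

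\textbf{Step 2 (forward measure and Bayes).} Since by assumption the bond-price model for the $k_0$-collateralized $k_0$-ZCB is risk neutral, Definition \ref{def:bondModelRiskNeutral} guarantees that $(B^{k_0,k_0}(t,T)/B^{c,k_0}_t)_{t\in[0,T]}$ is a strictly positive $(\QQ^{k_0},\GG)$-martingale, so the forward measure $\QQ^{T,k_0,k_0}$ from Definition \ref{def:forwardmeasure} is well-defined through the density process
\begin{align*}
Z_t:=\frac{d\QQ^{T,k_0,k_0}}{d\QQ^{k_0}}\bigg|_{\cG_t} = \frac{1}{B^{k_0,k_0}(0,T)}\cdot \frac{B^{k_0,k_0}(t,T)}{B^{c,k_0}_t}.
\end{align*}
The abstract Bayes rule then states that a process $M$ is a $(\QQ^{T,k_0,k_0},\GG)$-martingale if and only if $MZ$ is a $(\QQ^{k_0},\GG)$-martingale.

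\textbf{Step 3 (conclusion).} Applying Bayes' rule with $M_t=Q^{k_0,k_3}(t,T)/Q^{k_0,k_3}_t$ and using the identity from Step~1, we see that $(M_t)_{t\in[0,T]}$ is a $(\QQ^{T,k_0,k_0},\GG)$-martingale if and only if
\begin{align*}
M_t Z_t = \frac{1}{B^{k_0,k_0}(0,T)}\cdot \frac{B^{k_0,k_3}(t,T)}{B^{c,k_0,k_3}_t}
\end{align*}
is a $(\QQ^{k_0},\GG)$-martingale, which, up to the deterministic positive constant $B^{k_0,k_0}(0,T)$, is exactly the content of Assumption \ref{assu:assumptionk0k3ZCBs}. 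This yields the claimed equivalence for each fixed $T\ge 0$. The only subtle point is confirming that the density process $Z$ is indeed a true martingale (not merely local), but this is built into the very notion of risk neutrality used above, so no additional obstacle arises.
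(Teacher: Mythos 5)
Your proof is correct and follows essentially the same route as the paper: both arguments rest on Bayes' rule for the forward measure $\QQ^{T,k_0,k_0}$ combined with the identity $\frac{Q^{k_0,k_3}(t,T)}{Q^{k_0,k_3}_t}\frac{B^{k_0,k_0}(t,T)}{B^{c,k_0}_t}=\frac{B^{k_0,k_3}(t,T)}{B^{c,k_0,k_3}_t}$, which you derive from $B^{c,k_0,k_3}=B^{c,k_0}Q^{k_0,k_3}$ and \eqref{eq:bondspread}. Your added remarks on the density process being a true martingale by risk neutrality are consistent with the paper's (more terse) argument.
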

\begin{proof} We use Bayes's formula for conditional expectations. For some fixed $1\leq k_0,k_3\leq L$ with $k_0\neq k_3$ and $T\geq 0$, the process $\left(\frac{Q^{k_0,k_3}(t,T)}{Q^{k_0,k_3}_t}\right)_{t\in[0,T]}$ is a $(\QQ^{T,k_0,k_0},\GG)$-martingale if and only if the process
\begin{align*}
        \left(\frac{Q^{k_0,k_3}(t,T)}{Q^{k_0,k_3}_t}\frac{B^{k_0,k_0}(t,T)}{B^{c,k_0}_tB^{k_0,k_0}(0,T)}\right)_{t\in[0,T]}
\end{align*}
is a $(\QQ^{k_0},\GG)$-martingale. Since $\frac{Q^{k_0,k_3}(t,T)}{Q^{k_0,k_3}_t}\frac{B^{k_0,k_0}(t,T)}{B^{c,k_0}_t}=\frac{B^{k_0,k_3}(t,T)}{B^{c,k_0,k_3}_t}$, we then conclude thanks to Assumption \ref{assu:assumptionk0k3ZCBs}.
\end{proof}

\begin{remark}
We understand from Lemma \ref{lem:bayesforQ} that cross-currency spread models should satisfy the martingale property under the forward measure $\QQ^{T,k_0,k_0}$. Looking at the time series of cross-currency basis swaps in Figure \ref{fig:ccySwapSpreads}, we also observe that cross-currency spreads could be positive or negative and could exhibit a term structure of different shapes (either increasing or decreasing). This is different from the IBOR-OIS basis considered in \cite{Cuchiero2016}: the multiplicative spot spreads between OIS and IBORs is indeed greater than one and is increasing with respect to the tenor length. This poses us in a more flexible modelling setting.
\end{remark}

For any given currency $k_0$, we shall now put together a bond price model for the $k_0$-collateralized $k_0$-ZCBs with a family of $L-1$ models for the $k_0$-$k_3$ cross-currency bond spreads. We call such a combination an \textit{extended bond price model}.

\begin{definition}\label{def:extendedbondpricemodel}
Let $1\leq k_0\leq L$ be fixed. We call a model consisting of
\begin{enumerate}\renewcommand{\labelenumi}{\Roman{enumi}.}
    \item The $\RR^{{d}_{X}+L}$-valued It\^o semimartingale $(X,Q^{k_0,1},\ldots,Q^{k_0,k_0-1},Q^{k_0,k_0+1} \ldots,Q^{k_0,L},B^{c,k_0})$;
    \item The functions $f^{c,k_0}_0, q^{k_0,1}_0, \dots, q^{k_0,k_0-1}_0, q^{k_0,k_0+1}_0, \dots, q^{k_0,L}_0$;
    \item The processes $$\alpha^{c,k_0}, \alpha^{k_0,1}, \dots, \alpha^{k_0,k_0-1}, \alpha^{k_0,k_0+1}, \dots, \alpha^{k_0,L},$$ and $$\sigma^{c,k_0}, \sigma^{k_0,1}, \dots, \sigma^{k_0,k_0-1}, \sigma^{k_0,k_0+1}, \dots, \sigma^{k_0,L};$$
\end{enumerate}
an \emph{extended bond-price model} for the currency $k_0$, 
if for every $k_3\neq k_0$ the following conditions are satisfied:
\begin{enumerate}
    \item The quintuple $(B^{c,k_0}, X, f^{c,k_0}_0,\alpha^{c,k_0},\sigma^{c,k_0})$ is a bond-price model in the sense of Definition \ref{def:bondpricemodel}.
    \item The $k_0$-$k_3$ cross-currency spread cash account $Q^{k_0,k_3}$ is absolutely continuous with respect to the Lebesgue measure, i.e. $Q^{k_0,k_3}_t=e^{\int_0^t q^{k_0,k_3}_s ds}$ with cross-currency basis rate $q^{k_0,k_3}=(q^{k_0,k_3}_t)_{t\geq 0}$;
    \item The triple $(q^{k_0, k_3}_0, \alpha^{k_0, k_3}, \sigma^{k_0, k_3})$  satisfies the HJM-basic condition in Definition \ref{def:basiccond};
    \item For $T\ge 0$, the instantaneous cross-currency basis spread $(q^{k_0,k_3}_t(T))_{t\in[0,T]}$ is given by
    \begin{align*}
q^{k_0,k_3}_t(T)=q^{k_0,k_3}_0(T)+\int_0^t\alpha^{k_0,k_3}_s(T)ds+\int_0^t\sigma^{k_0,k_3}_s(T)dX_s;
    \end{align*}
    \item[vi)] The $k_0$-$k_3$ cross-currency spread bonds $\{(Q^{k_0,k_3}(t,T))_{t\in[0,T]}, \ T\geq 0\}$ satisfy
    \begin{equation}\label{eq:expqk0k3}
    Q^{k_0,k_3}(t,T)=e^{-\int_t^T q^{k_0,k_3}_t(u)du}
    \end{equation}
    for all $t\leq T$ and $T\geq 0$. Moreover $Q^{k_0,k_3}(t,t)=1$ for all $t\geq 0$.
\end{enumerate}
\end{definition}

\begin{remark}
    We point out that for each currency $k_0$, the dynamics of the instantaneous cross-currency basis spreads $q^{k_0,k_3}_t(T)$ are defined under the measure $\QQ^{k_0}$.
\end{remark}

The next definition naturally collects the martingale conditions that are relevant in the current setting.

\begin{definition}
\label{def:combinedmartingale}
    Let $1\leq k_0\leq L$. We say that the extended bond-price model for the  currency $k_0$ is \emph{risk neutral} if the following conditions hold:
    \begin{enumerate}
        \item The bond-price model for the $k_0$-collateralized $k_0$-ZCB $(B^{c,k_0}, X, f^{c,k_0}_0,\alpha^{c,k_0},\sigma^{c,k_0})$ is risk neutral in the sense of Definition \ref{def:bondModelRiskNeutral};
        \item For each $1\leq k_3 \leq L$ with $k_3\neq k_0$ and $T\geq 0$, the processes
        \begin{align*}
        \left\{\left(\frac{Q^{k_0,k_3}(t,T)}{Q^{k_0,k_3}_t}\right)_{t\in[0,T]},\ T\geq 0\right\}
    \end{align*}
    are $(\QQ^{T,k_0,k_0},\GG)$-martingales.
    \end{enumerate}
\end{definition}

The next result characterizes condition (ii) of Definition \ref{def:combinedmartingale}. We define $$\Sigma^{k_0,k_3}_t(T):=\int_t^T \sigma^{k_0,k_3}_t(u) d u.$$

\begin{theorem}\label{prop:driftcondspreadmodel}
    Let $1\leq k_0\leq L$ and $0\le t\le T$. For an extended bond-price model for the currency $k_0$ satisfying condition \emph{(i)} of Definition \ref{def:combinedmartingale}, the followings are equivalent:
    \begin{enumerate}
        \item The extended bond-price model satisfies condition \emph{(ii)} of Definition \ref{def:combinedmartingale};
        \item For every $k_3\neq k_0$, the conditional expectation hypothesis holds, i.e.
        \begin{equation*}
            \Excond{\QQ^{T, k_0, k_0}}{\frac{Q^{k_0,k_3}_t}{Q^{k_0,k_3}_T}}{\cG_t}=e^{-\int_t^Tq^{k_0,k_3}_t(u) du};
        \end{equation*}
        \item For every $k_3\neq k_0$, the process $-\left(\Sigma^{c, k_0}(T)+\Sigma^{k_0, k_3}(T)\right)\in\mathcal{U}^{\QQ^{k_0},X}$ and the following conditions are satisfied:
    \begin{enumerate}
        \item The process
        \begin{align}\label{eq:martexp}
            \left(\exp\left\{-\int_0^t\left(\Sigma^{c, k_0}_s(T)+\Sigma^{k_0,k_3}_s(T)\right)dX_s-\int_0^t\Psi^{\QQ^{k_0},X}_s(-\Sigma^{c, k_0}_s(T)-\Sigma^{k_0,k_3}_s(T))ds\right\}\right)_{t\in[0,T]}
        \end{align}
        is a $(\QQ^{k_0}, \mathbb{G})$-martingale;
        \item The consistency condition holds, meaning that
        \begin{align}\label{eq:consistency}
            \Psi_t^{\QQ^{k_0},-\int_0^\cdot q^{k_0,k_3}_sds}(1)=-q^{k_0,k_3}_{t-}=-q^{k_0,k_3}_{t-}(t);
        \end{align}
        \item The HJM drift condition 
        \begin{align}\label{eq:driftcondak3k0}
            \int_t^T\alpha^{k_0,k_3}_t(u)du=\Psi^{\QQ^{k_0},X}_t(-\Sigma^{c, k_0}_t(T)-\Sigma^{k_0,k_3}_t(T))- \Psi^{\QQ^{k_0},X}_t(-\Sigma^{c,k_0}_t(T))
        \end{align}
        holds for  every $k_3\neq k_0$.
    \end{enumerate}
    \end{enumerate}
\end{theorem}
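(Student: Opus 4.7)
The plan is to adapt the argument behind \cite[Proposition 3.9]{Cuchiero2016} (which is the single-curve analogue cited in the proof of Proposition \ref{prop:driftcondbondmodel}) to the cross-currency setting, where the key novelty is that the relevant discounting now combines the collateral cash account $B^{c,k_0}$ with the cross-currency spread cash account $Q^{k_0,k_3}$.

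\emph{First step: (i)$\,\Leftrightarrow\,$(ii).} By \eqref{eq:expqk0k3}, $Q^{k_0,k_3}(T,T)=1$, so the identity in (ii) can be rewritten as
$$\Excond{\QQ^{T,k_0,k_0}}{\frac{Q^{k_0,k_3}(T,T)}{Q^{k_0,k_3}_T}}{\cG_t}=\frac{Q^{k_0,k_3}(t,T)}{Q^{k_0,k_3}_t},$$
which is exactly the $(\QQ^{T,k_0,k_0},\GG)$-martingale property of $(Q^{k_0,k_3}(\cdot,T)/Q^{k_0,k_3}_\cdot)$ required by condition (ii) of Definition \ref{def:combinedmartingale}.

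\emph{Second step: (ii)$\,\Leftrightarrow\,$(iii).} By Lemma \ref{lem:bayesforQ}, condition (ii) is equivalent to $(B^{k_0,k_3}(\cdot,T)/B^{c,k_0,k_3}_\cdot)_{t\in[0,T]}$ being a $(\QQ^{k_0},\GG)$-martingale. Taking logarithms and using the product decomposition $B^{k_0,k_3}(t,T)/B^{c,k_0,k_3}_t=(B^{k_0,k_0}(t,T)/B^{c,k_0}_t)\cdot (Q^{k_0,k_3}(t,T)/Q^{k_0,k_3}_t)$ together with the HJM representations of $f^{c,k_0}$ and $q^{k_0,k_3}$, a two-fold application of the stochastic Fubini theorem yields
$$\frac{B^{k_0,k_3}(t,T)}{B^{c,k_0,k_3}_t}=B^{k_0,k_3}(0,T)\exp\Bigl\{-\int_0^t\bigl(\Sigma^{c,k_0}_s(T)+\Sigma^{k_0,k_3}_s(T)\bigr)dX_s-\int_0^t\int_s^T\bigl(\alpha^{c,k_0}_s(u)+\alpha^{k_0,k_3}_s(u)\bigr)du\,ds\Bigr\},$$
where the consistency conditions \eqref{eq:conscondftt} and \eqref{eq:consistency} are used to replace $\int_0^t(r^{c,k_0}_s+q^{k_0,k_3}_s)ds$ by the required short-end integrals of the forward curves. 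Assuming $-(\Sigma^{c,k_0}(T)+\Sigma^{k_0,k_3}(T))\in\mathcal{U}^{\QQ^{k_0},X}$, Lemma \ref{prop:localexponent} identifies the compensator of the stochastic integral, so the right-hand side is a $(\QQ^{k_0},\GG)$-martingale if and only if (iii)(a) holds and the drift integrand matches the local exponent, that is,
$$\int_s^T\bigl(\alpha^{c,k_0}_s(u)+\alpha^{k_0,k_3}_s(u)\bigr)du=\Psi^{\QQ^{k_0},X}_s\bigl(-\Sigma^{c,k_0}_s(T)-\Sigma^{k_0,k_3}_s(T)\bigr).$$
Subtracting the HJM drift condition \eqref{eq:HJMcondck0} already available through Corollary \ref{cor:driftcondk0k0} gives precisely \eqref{eq:driftcondak3k0}, which is (iii)(c).

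\emph{Main obstacle.} The technical core is the stochastic Fubini manipulation in the general It\^o semimartingale setting, combined with the careful distinction between the local martingale property (which is automatic from the structure of the stochastic exponential) and the genuine martingale property (which has to be postulated explicitly as (iii)(a)). A further subtle point is that, without continuity of $X$, the short rate $q^{k_0,k_3}_s$ need not coincide path-by-path with the short end $q^{k_0,k_3}_{s-}(s)$ of the instantaneous forward spread curve; the consistency condition (iii)(b) is precisely what guarantees this identification and allows the Fubini rearrangement above to produce the compensator $\int_0^t\Psi^{\QQ^{k_0},X}_s(-\Sigma^{c,k_0}_s(T)-\Sigma^{k_0,k_3}_s(T))ds$ of the driving stochastic integral, so that (iii)(a)--(iii)(c) together are equivalent to the martingale property of $(B^{k_0,k_3}(\cdot,T)/B^{c,k_0,k_3}_\cdot)$.
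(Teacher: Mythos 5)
Your overall route is the same as the paper's: reduce (i) to the $(\QQ^{k_0},\GG)$-martingale property of $B^{k_0,k_3}(\cdot,T)/B^{c,k_0,k_3}_\cdot$ via Lemma \ref{lem:bayesforQ}, expand the exponent with classical and stochastic Fubini, identify the drift with the local exponent, and subtract the base drift condition \eqref{eq:HJMcondck0} (available because condition (i) of Definition \ref{def:combinedmartingale} is a standing hypothesis) to get \eqref{eq:driftcondak3k0}. However, there is a genuine gap in the direction (i)/(ii)\,$\Rightarrow$\,(iii): you invoke the consistency condition \eqref{eq:consistency} as an ingredient of the Fubini rearrangement (``to replace $\int_0^t(r^{c,k_0}_s+q^{k_0,k_3}_s)ds$ by the required short-end integrals''), but \eqref{eq:consistency} is precisely part (iii)(b) of what has to be \emph{proved} in that direction. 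As written, the argument is circular: you never explain how the identification $q^{k_0,k_3}_{t-}=q^{k_0,k_3}_{t-}(t)$ follows from the martingale property, and your closing paragraph even frames (iii)(b) as the hypothesis that ``guarantees this identification,'' confirming that it is being assumed rather than derived.

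The paper closes this gap by \emph{not} cancelling the two short-rate/short-end integrals a priori. It keeps the full process $R_t=-\int_t^T(f^{c,k_0}_t(u)+q^{k_0,k_3}_t(u))du-\int_0^t(r^{c,k_0}_s+q^{k_0,k_3}_s)ds$, notes that the martingale property of $\exp(R)$ forces $\Psi^{\QQ^{k_0},R}_t(1)=0$, and applies \cite[Lemma A.13]{KK13} to split this into the local exponent of the joint process $\bigl(-\int_0^\cdot(r^{c,k_0}_s+q^{k_0,k_3}_s)ds,\,X\bigr)$ at $\bigl(1,-(\Sigma^{c,k_0}_t(T)+\Sigma^{k_0,k_3}_t(T))\bigr)$, minus the integrated drifts, plus the short-end terms $f^{c,k_0}_{t-}(t)+q^{k_0,k_3}_{t-}(t)$. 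Evaluating at $T=t$, where $\Sigma^{c,k_0}_t(t)=\Sigma^{k_0,k_3}_t(t)=0$, yields $\Psi_t^{\QQ^{k_0},-\int_0^\cdot(r^{c,k_0}_s+q^{k_0,k_3}_s)ds}(1)=-f^{c,k_0}_{t-}(t)-q^{k_0,k_3}_{t-}(t)$, and combining this with the base-model consistency condition \eqref{eq:conscondftt} (which does hold by hypothesis) isolates \eqref{eq:consistency}. Only then is the consistency condition substituted back to obtain the drift condition and, finally, to identify $B^{k_0,k_3}(\cdot,T)/B^{c,k_0,k_3}_\cdot$ with $B^{k_0,k_3}(0,T)$ times the candidate exponential, which upgrades (iii)(a) from a local-martingale observation to the required true-martingale statement. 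You need to add this ``set $T=t$'' step (or an equivalent derivation of (iii)(b) from the martingale property) for your proof of the forward implication to be complete; the converse direction (iii)\,$\Rightarrow$\,(i) in your sketch is fine.
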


\begin{proof} Let $T>0$ and $1\leq k_3\leq L$ with $k_3\neq k_0$ be fixed.
    \begin{itemize}
        \item[] (i) $\Rightarrow$ (ii) Since the process $\left(\frac{Q^{k_0,k_3}(t,T)}{Q^{k_0,k_3}_t}\right)_{t\in[0,T]}$ is a $(\QQ^{T,k_0,k_0},\GG)$-martingale, it follows that
        \begin{equation*}
            \Excond{\QQ^{T, k_0, k_0}}{\frac{1}{Q^{k_0,k_3}_T}}{\cG_t} = \frac{Q^{k_0,k_3}(t,T)}{Q^{k_0,k_3}_t},
        \end{equation*}
        which, rearranged, gives (ii).

        \item[] (i) $\Rightarrow$ (iii) Since the process $\left(\frac{Q^{k_0,k_3}(t,T)}{Q^{k_0,k_3}_t}\right)_{t\in[0,T]}$ is a $(\QQ^{T,k_0,k_0},\GG)$-martingale, we know from Lemma \ref{lem:bayesforQ} that the process
    \begin{align}\label{eq:Qk0mart}
        \left(\frac{B^{k_0,k_3}(t,T)}{B^{c,k_0,k_3}_t} = \frac{B^{k_0,k_0}(t,T) Q^{k_0,k_3}(t,T)}{B^{c,k_0,k_3}_t} = \frac{e^{-\int_t^T(f_t^{c, k_0}(u)+q_t^{k_0, k_3}(u))du}}{e^{\int_0^t(r_s^{c, k_0}+q_s^{k_0, k_3})ds}}\right)_{t\in[0,T]}
    \end{align}
    is a $(\QQ^{k_0},\GG)$-martingale. Let $R_t := -\int_t^T(f_t^{c, k_0}(u)+q_t^{k_0, k_3}(u))du - \int_0^t(r_s^{c, k_0}+q_s^{k_0, k_3})ds$. Then the martingale property of \eqref{eq:Qk0mart} is equivalent to the martingale property of $\exp\left(R\right)$, which implies that $1\in \mathcal{U}^{\QQ^{k_0},R}$ and $\Psi_t^{\QQ^{k_0},R}(1) = 0$. Due to the integrability conditions on $\alpha^{c, k_0}$ and $\sigma^{c, k_0}$ in Definition \ref{def:bondpricemodel}, and on $\alpha^{k_0, k_3}$ and $\sigma^{k_0, k_3}$ in Definition \ref{def:extendedbondpricemodel}, we can apply the classical and the stochastic Fubini theorem, which yield
        \begin{equation}\label{eq:conscondproof}
        \begin{aligned}
            &\int_t^T (f_t^{c, k_0}(u) + q_t^{k_0, k_3}(u))du \\&= \int_0^T (f_0^{c, k_0}(u)+ q_0^{k_0, k_3}(u))du +  \int_0^t\int_s^T(\alpha_s^{c, k_0}(u) + \alpha_s^{k_0, k_3}(u))duds \\&+ \int_0^t(\Sigma^{c, k_0}_s(T) + \Sigma^{k_0, k_3}_s(T)) dX_s - \int_0^t(f_u^{c, k_0}(u)+q_u^{k_0, k_3}(u))du.
        \end{aligned}
        \end{equation}
        By applying \cite[Lemma A.13]{KK13} we then obtain that
        \begin{align*}
            0 = \Psi^{\QQ^{k_0},R}_t(1) &= \Psi_t^{\QQ^{k_0},(-\int_0^\cdot (r_s^{c, k_0} + q^{k_0,k_3}_s)ds, X)}\left( \left(1, -(\Sigma_t^{c, k_0}(T) + \Sigma_t^{k_0, k_3}(T))^\top\right)^\top\right)\\&
            -\int_t^T(\alpha_t^{c, k_0}(u) + \alpha_t^{k_0, k_3}(u))du + f^{c, k_0}_{t-}(t)+ q^{k_0, k_3}_{t-}(t).
        \end{align*}
        Set now $T=t$. Since $\Sigma_t^{c, k_0}(t)=\Sigma_t^{k_0, k_3}(t)=0$, we get 
        \begin{equation}\label{eq:conscondproof2}
        \Psi_t^{\QQ^{k_0},-\int_0^\cdot r_s^{c, k_0}ds}(1)+\Psi_t^{\QQ^{k_0},-\int_0^\cdot  q^{k_0,k_3}_sds}(1) = \Psi_t^{\QQ^{k_0},-\int_0^\cdot (r_s^{c, k_0} + q^{k_0,k_3}_s)ds}(1)= -f^{c, k_0}_{t-}(t)-q^{k_0, k_3}_{t-}(t),
        \end{equation}
        hence \eqref{eq:consistency} due to the consistency condition \eqref{eq:conscondftt}. Moreover, substituting \eqref{eq:conscondproof2} into \eqref{eq:conscondproof} yields the following drift condition:
    \begin{equation*}
        -\int_t^T \left( \alpha_t^{c, k_0}(u) +  \alpha_t^{k_0, k_3}(u) \right) du = -  \Psi^{\QQ^{k_0},X}_t( -\Sigma^{c, k_0}_t(T)-\Sigma^{k_0,k_3}_t(T)),
    \end{equation*}
    hence 
    \begin{align*}
    \int_t^T\alpha^{k_0,k_3}_t(u)du&=\Psi^{\QQ^{k_0},X}_t(-\Sigma^{c, k_0}_t(T)-\Sigma^{k_0,k_3}_t(T))- \int_t^T\alpha^{c, k_0}_t(u)du\\&= \Psi^{\QQ^{k_0},X}_t(-\Sigma^{c, k_0}_t(T)-\Sigma^{k_0,k_3}_t(T))- \Psi^{\QQ^{k_0},X}_t(-\Sigma^{c,k_0}_t(T)),
    \end{align*}
    where the last equality is due to the drift condition for the bond-price model in Proposition~\ref{prop:driftcondbondmodel}. We now have both the consistency condition and the drift condition. By substituting them into \eqref{eq:conscondproof}, together with \eqref{eq:Qk0mart} we can write that 
    \begin{equation}\label{eq:conscondproof3}
    \begin{aligned}
        \frac{B^{k_0,k_3}(t,T)}{B^{c,k_0,k_3}_t} &=\exp\left\{-\int_t^T(f_t^{c, k_0}(u)+q_t^{k_0, k_3}(u))du-\int_0^t(r_s^{c, k_0}+q_s^{k_0, k_3})ds\right\}\\
        & = \exp \left\{-\int_0^T (f_0^{c, k_0}(u)+ q_0^{k_0, k_3}(u))du -\int_0^t(\Sigma^{c, k_0}_s(T) + \Sigma^{k_0, k_3}_s(T)) dX_s \right.\\&\left.\quad \quad \quad - \int_0^t\Psi^{\QQ^{k_0},X}_s(-\Sigma^{c, k_0}_s(T) - \Sigma^{k_0, k_3}_s(T)) dX_s\right\},
    \end{aligned}
    \end{equation}
    from which we deduce that the process \eqref{eq:martexp} is a $(\QQ^{k_0}, \mathbb{G})$-martingale for every $T\geq 0$.
    \item[] (iii) $\Rightarrow$ (i) The consistency condition and the drift condition yield again equation \eqref{eq:conscondproof3}. The martingale property of \eqref{eq:martexp} together with Lemma  \ref{lem:bayesforQ} implies then the last statement. 

    \end{itemize}
\end{proof}

\begin{remark}\label{rmk:changepsi}
    We emphasise that the HJM drift condition \eqref{eq:driftcondak3k0} is given in terms of the local exponent $\Psi^{\QQ^{k_0},X}$ under the measure $\QQ^{k_0}$. However, one can show that the local exponent $\Psi^{\QQ^{T, k_0, k_0},X}$ under the measure $\QQ^{T, k_0, k_0}$ is obtained from  $\Psi^{\QQ^{k_0},X}$ through the following relation:
    \begin{equation}\label{eq:relationlocalexp2}
        \Psi^{\QQ^{T, k_0, k_0},X}(\beta) = \Psi^{\QQ^{k_0},X}(\beta - \Sigma^{c, k_0}(T))- \Psi^{\QQ^{k_0},X}(- \Sigma^{c, k_0}(T)),
    \end{equation}
    for any $\beta \in \mathcal{U}^{\QQ^{k_0}, X}\cap\, \mathcal{U}^{\QQ^{T, k_0,k_0}, X}$. 
    Hence the HJM drift condition \eqref{eq:driftcondak3k0} can be rewritten as
    \begin{align*}
    \int_t^T\alpha^{k_0,k_3}_t(u)du=\Psi_t^{\QQ^{T, k_0, k_0},X}(-\Sigma^{k_0,k_3}_t(T)).
    \end{align*}
    Notice also that relation \eqref{eq:relationlocalexp2} could be used to formulate an alternative proof for Theorem \ref{prop:driftcondspreadmodel} by working under the measure $\QQ^{T, k_0, k_0}$ instead of under the measure $\QQ^{k_0}$.
\end{remark}

\begin{corollary}\label{cor:driftcondk0k3}
    If the extended bond-price model for the currency $k_0$ is risk neutral, then for every $1\leq k_3\leq L$ with $k_3\neq k_0$, we have that:
\begin{enumerate}
    \item For every $T>0$, the instantaneous cross-currency basis spread $(q^{k_0,k_3}_t(T))_{t\in[0,T]}$ is given by
    \begin{equation}\label{eq:ccbs}
    \begin{aligned}
   q^{k_0,k_3}_t(T)&=q^{k_0,k_3}_0(T)\\&-\int_0^t\left((\sigma^{c, k_0}_s(T) + \sigma^{k_0, k_3}_s(T))\nabla\Psi_s^{\QQ^{k_0}, X}(-\Sigma^{c, k_0}_s(T)-\Sigma^{k_0, k_3}_s(T))\right.\\
   &\left.-\sigma^{c,k_0}_s(T)\nabla\Psi_s^{\QQ^{k_0}, X}(-\Sigma^{c, k_0}_s(T))\right)ds+\int_0^t\sigma^{k_0,k_3}_s(T)dX_s;
    \end{aligned}
    \end{equation}
    \item For every $t\ge0$, the $k_0$-$k_3$ cross-currency basis rate $q^{k_0,k_3}_t$ at time $t$ is given by
    \begin{equation*}
    \begin{aligned}
   &q^{k_0,k_3}_t=q^{k_0,k_3}_0\\&-\int_0^t\left((\sigma^{c, k_0}_s(t) + \sigma^{k_0, k_3}_s(t))\nabla\Psi_s^{\QQ^{k_0}, X}(-\Sigma^{c, k_0}_s(t)-\Sigma^{k_0, k_3}_s(t))-\sigma^{c,k_0}_s(t)\nabla\Psi_s^{\QQ^{k_0}, X}(-\Sigma^{c, k_0}_s(t))\right)ds\\&+\int_0^t\sigma^{k_0,k_3}_s(t)dX_s.
    \end{aligned}
    \end{equation*}
\end{enumerate}
\end{corollary}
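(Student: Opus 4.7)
My plan mirrors the proof of Corollary \ref{cor:driftcondk0k0}, differentiating the new HJM drift condition \eqref{eq:driftcondak3k0} in the maturity variable and then substituting back into the prescribed dynamics from Definition \ref{def:extendedbondpricemodel}. Fix $1\le k_0\le L$ and $k_3\neq k_0$, and assume the extended bond-price model for currency $k_0$ is risk neutral, so that both the HJM drift condition \eqref{eq:HJMcondck0} from Proposition \ref{prop:driftcondbondmodel} and the HJM drift condition \eqref{eq:driftcondak3k0} from Theorem \ref{prop:driftcondspreadmodel} are in force.

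For part (i), I would differentiate both sides of \eqref{eq:driftcondak3k0} with respect to $T$. On the left-hand side I obtain $\alpha^{k_0,k_3}_t(T)$. On the right-hand side, I use the chain rule together with the gradient formula from Lemma \ref{prop:localexponent}: since $\frac{\partial}{\partial T}\Sigma^{c,k_0}_t(T) = \sigma^{c,k_0}_t(T)$ and $\frac{\partial}{\partial T}\Sigma^{k_0,k_3}_t(T) = \sigma^{k_0,k_3}_t(T)$, I get
\begin{align*}
\alpha^{k_0,k_3}_t(T) &= -\bigl(\sigma^{c,k_0}_t(T)+\sigma^{k_0,k_3}_t(T)\bigr)\nabla\Psi^{\QQ^{k_0},X}_t\bigl(-\Sigma^{c,k_0}_t(T)-\Sigma^{k_0,k_3}_t(T)\bigr) \\
&\quad + \sigma^{c,k_0}_t(T)\nabla\Psi^{\QQ^{k_0},X}_t\bigl(-\Sigma^{c,k_0}_t(T)\bigr).
\end{align*}
Plugging this expression into the prescribed It\^o representation
$$q^{k_0,k_3}_t(T)=q^{k_0,k_3}_0(T)+\int_0^t\alpha^{k_0,k_3}_s(T)ds+\int_0^t\sigma^{k_0,k_3}_s(T)dX_s$$
from Definition \ref{def:extendedbondpricemodel}(v) yields exactly formula \eqref{eq:ccbs}.

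For part (ii), I would set $T=t$ in the expression obtained in (i). By the consistency condition \eqref{eq:consistency}, we have $q^{k_0,k_3}_{t-}(t) = q^{k_0,k_3}_{t-}$, so evaluating the left-hand side at $T=t$ gives $q^{k_0,k_3}_t$ (up to $d\PP\otimes dt$-null sets, which are harmless). The right-hand side is obtained by substituting $T\mapsto t$ throughout, noting in particular that $\Sigma^{c,k_0}_s(t)=\int_s^t \sigma^{c,k_0}_s(u)du$ and analogously for $\Sigma^{k_0,k_3}_s(t)$, so these remain well defined for $s\le t$. This delivers the stated representation of the $k_0$-$k_3$ cross-currency basis rate.

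The only subtlety I anticipate is justifying the interchange of differentiation in $T$ with the integrals defining $\Psi^{\QQ^{k_0},X}_t$ in L\'evy-Khintchine form, so that the chain rule computation producing \eqref{eq:levykintder} is legitimate pointwise in $(\omega,t)$. This is precisely the content of the integrability requirements on $\sigma^{c,k_0}$ and $\sigma^{k_0,k_3}$ in the HJM-basic condition of Definition \ref{def:basiccond}, which ensure that the relevant dominated-convergence arguments go through, exactly as in the proof of Corollary \ref{cor:driftcondk0k0}. Beyond this routine verification, the statement follows directly from Theorem \ref{prop:driftcondspreadmodel}.
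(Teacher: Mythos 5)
Your proposal is correct and follows essentially the same route as the paper: the paper's proof of this corollary simply refers back to the argument for Corollary \ref{cor:driftcondk0k0}, namely differentiating the drift condition \eqref{eq:driftcondak3k0} in $T$ via the gradient formula \eqref{eq:levykintder}, substituting the resulting $\alpha^{k_0,k_3}_t(T)$ into the dynamics of Definition \ref{def:extendedbondpricemodel}, and letting $t\to T$ for the short rate. Your explicit chain-rule computation and the remark on the consistency condition for part (ii) match this argument.
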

\begin{proof}
    The proof proceeds similarly to the proof of Corollary \ref{cor:driftcondk0k0}.
\end{proof}

\begin{remark}
    We remark that if $X$ is a standard Brownian motion, then the differential characteristics of $X$ are $(0, I_{d_X}, 0)$, hence $$\nabla\Psi_t^{\QQ^{k_0}, X}(-\Sigma^{c,k_0}_t(T)) = -  \Sigma^{c,k_0}_t(T),$$ and $$\nabla\Psi_t^{\QQ^{k_0}, X}(-\Sigma^{c, k_0}_t(T)-\Sigma^{k_0, k_3}_t(T)) = -\Sigma^{c, k_0}_t(T)-\Sigma^{k_0, k_3}_t(T).$$
    Then, after some simplifications, the dynamics for the instantaneous cross-currency basis spread in equation \eqref{eq:ccbs} becomes
    \begin{equation*}
    \begin{aligned}
q^{k_0,k_3}_t(T)&=q^{k_0,k_3}_0(T)\\&+\int_0^t\left(\sigma^{k_0,k_3}_s(T)\left( \int_s^T (\sigma^{c, k_0}_s(u)+\sigma^{k_0,k_3}_s(u) )d u\right)+\sigma^{c,k_0}_s(T)\left( \int_s^T \sigma^{k_0,k_3}_s(u) d u\right)\right)ds\\&+\int_0^t\sigma^{k_0,k_3}_s(T)dX_s.
    \end{aligned}
    \end{equation*}
This is in line with the results found in \cite{pit12}.
\end{remark}

\subsection{HJM framework for foreign-collateral discount curves}\label{sec:forCollDiscCurv}
We consider now the domestic bond with foreign collateral. More precisely, for a fixed currency $1 \le k_0 \le L$, we consider $L-1$ bonds $\{(B^{k_0, k_3}(t, T))_{t\in [0, T]}, T\ge 0 \}$, one for each currency $k_3 \ne k_0$. From equation \eqref{eq:bondspread}, $B^{k_0, k_3}(t, T)$ is obtained as the product between the domestic bond with domestic collateral, $B^{k_0, k_0}(t, T)$, and the $k_0$-$k_3$ cross-currency bond spread, $Q^{k_0,k_3}(t,T)$, namely 
\begin{equation}\label{eq:bondspread2}
    B^{k_0,k_3}(t,T) = B^{k_0,k_0}(t,T) Q^{k_0,k_3}(t,T).
\end{equation}
Moreover, by combining equations \eqref{eq:expfck0}, \eqref{eq:expqk0k3} and \eqref{eq:bondspread2}, we can rewrite $B^{k_0,k_3}(t,T)$ in terms of the instantaneous collateral  forward rate and of the instantaneous cross-currency basis spread, namely
\begin{equation}\label{eq:bondspread3}
    B^{k_0,k_3}(t,T) = e^{-\int_t^T(f_t^{c, k_0}(u)+q_t^{k_0, k_3}(u))du}.
\end{equation}
Remember also that, by definition, the $k_0$-$k_3$ collateral cash account at time $t$ is given by
\begin{equation}\label{eq:Bck0k3}
    B^{c, k_0,k_3}_t = e^{\int_0^t(r_s^{c, k_0}+q_s^{k_0, k_3})ds}.
\end{equation}
In other words, given an extended bond-price model as introduced in the previous section, the foreign collateral discount curve is also implicitly modelled. In particular, for every $T\ge 0$, the instantaneous foreign-collateral  forward rate  $(f_t^{c, k_0, k_3}(T))_{t\in [0, T]}$ follows the HJM dynamics    
\begin{align}
    \notag f_t^{c, k_0, k_3}(T) &= f_t^{c, k_0}(T)+q_t^{k_0, k_3}(T)\\\label{eq:hjmck0k3}&=f_0^{c, k_0}(T)+q_0^{k_0, k_3}(T)+\int_0^t(\alpha^{c, k_0}_s(T)+\alpha^{k_0, k_3}_s(T))ds+\int_0^t(\sigma^{c, k_0}_s(T)+\sigma^{k_0, k_3}_s(T))dX_s,
\end{align}
and the $k_0$-$k_3$ collateral short rate $r^{c, k_0,k_3}_t$ at time $t$ is obtained by
\begin{align}
    \notag r^{c, k_0,k_3}_t  &= r_t^{c, k_0}+q_t^{k_0, k_3}\\\label{eq:hjmck0k3r}&=r_0^{c, k_0}+q_0^{k_0, k_3}+\int_0^t(\alpha^{c, k_0}_s(t)+\alpha^{k_0, k_3}_s(t))ds+\int_0^t(\sigma^{c, k_0}_s(t)+\sigma^{k_0, k_3}_s(t))dX_s.
\end{align}

We conclude with the following results.
\begin{lemma}
    Let $1\leq k_0\leq L$ and $0\le t\le T$. For an extended bond-price model for the currency $k_0$, the followings are equivalent:
    \begin{enumerate}
        \item The extended bond-price model is risk neutral;
        \item For every $k_3\neq k_0$, the conditional expectation hypothesis holds, i.e.
        \begin{equation*}
            \Excond{\QQ^{k_0}}{\frac{B^{c, k_0,k_3}_t}{B^{c, k_0,k_3}_T}}{\cG_t}=e^{-\int_t^Tf^{c, k_0,k_3}_t(u) du};
        \end{equation*}
        \item For every $k_3\neq k_0$, the following conditions are satisfied:
    \begin{enumerate}
        \item The consistency condition holds, meaning that
        \begin{align}\label{eq:consistencyk0k3}
            \Psi_t^{\QQ^{k_0},-\int_0^\cdot r^{c, k_0, k_3}_sds}(1)=-r^{c, k_0, k_3}_{t-}=-f^{c, k_0,k_3}_{t-}(t);
        \end{align}
        \item The HJM drift condition 
        \begin{align}\label{eq:driftcondak3k0f}
     \int_t^T (\alpha_t^{c, k_0}(u) + \alpha_t^{k_0, k_3}(u))du = \Psi^{\QQ^{k_0},X}_t(-\Sigma^{c, k_0}_t(T)-\Sigma^{k_0,k_3}_t(T))
        \end{align}
        holds.
    \end{enumerate}
    \end{enumerate}
\end{lemma}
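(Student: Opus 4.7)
The plan is to observe that the foreign-collateral discount curve satisfies, mutatis mutandis, all the hypotheses of Proposition~\ref{prop:driftcondbondmodel}, and therefore the entire characterization given there applies directly to the foreign-collateralized setup. By \eqref{eq:hjmck0k3} and \eqref{eq:Bck0k3}, the quintuple $(B^{c,k_0,k_3}, X, f^{c,k_0}_0 + q^{k_0,k_3}_0, \alpha^{c,k_0}+\alpha^{k_0,k_3}, \sigma^{c,k_0}+\sigma^{k_0,k_3})$ fits the mold of a bond-price model in the sense of Definition~\ref{def:bondpricemodel}: the cash account $B^{c,k_0,k_3}$ is absolutely continuous with short rate $r^{c,k_0,k_3} = r^{c,k_0}+q^{k_0,k_3}$, the HJM-basic condition (Definition~\ref{def:basiccond}) is preserved under sums and so is satisfied by the combined triple (since it is satisfied component-wise by Definitions~\ref{def:bondpricemodel} and~\ref{def:extendedbondpricemodel}), and the associated ZCB representation $B^{k_0,k_3}(t,T) = e^{-\int_t^T f^{c,k_0,k_3}_t(u)du}$ is exactly \eqref{eq:bondspread3}.

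A direct application of Proposition~\ref{prop:driftcondbondmodel} to this reformulated bond-price model yields, for each fixed $k_3\neq k_0$, the equivalence among the following three statements: first, the process $(B^{k_0,k_3}(t,T)/B^{c,k_0,k_3}_t)_{t\in[0,T]}$ is a $(\QQ^{k_0}, \GG)$-martingale for every $T\ge 0$; second, the conditional expectation hypothesis in (ii) of the lemma holds; third, the consistency condition \eqref{eq:consistencyk0k3} and the HJM drift condition \eqref{eq:driftcondak3k0f} hold. This settles (ii)~$\Leftrightarrow$~(iii) immediately.

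For the equivalence with (i), I would appeal to Lemma~\ref{lem:bayesforQ}, which identifies the martingale property of $(B^{k_0,k_3}(t,T)/B^{c,k_0,k_3}_t)_{t\in[0,T]}$ under $\QQ^{k_0}$ with condition (ii) of Definition~\ref{def:combinedmartingale} (the martingale property of the cross-currency spread $Q^{k_0,k_3}(\cdot,T)/Q^{k_0,k_3}$ under $\QQ^{T,k_0,k_0}$), once the bond-price model for the $k_0$-collateralized $k_0$-ZCB is itself risk-neutral (condition~(i) of Definition~\ref{def:combinedmartingale}). Coupled together these two ingredients are, by definition, the risk neutrality of the extended bond-price model, closing the circle.

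The only potentially delicate point is to check that no conflict arises between \eqref{eq:consistencyk0k3}--\eqref{eq:driftcondak3k0f} and the previously derived component conditions \eqref{eq:conscondftt}--\eqref{eq:HJMcondck0} for the domestic part and \eqref{eq:consistency}--\eqref{eq:driftcondak3k0} for the cross-currency-spread part. This is harmless: summing \eqref{eq:HJMcondck0} with \eqref{eq:driftcondak3k0} gives precisely \eqref{eq:driftcondak3k0f}, and summing \eqref{eq:conscondftt} with \eqref{eq:consistency} gives \eqref{eq:consistencyk0k3} (using that the local exponent at $1$ of the sum of two drift processes equals the sum of the individual local exponents, exactly as exploited in the proof of Theorem~\ref{prop:driftcondspreadmodel} around equation \eqref{eq:conscondproof2}). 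Thus the combined characterization is consistent with, and in fact a compact restatement of, the separate characterizations already established.
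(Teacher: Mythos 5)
Your proposal is correct in substance, but it takes a different route from the paper: the paper's proof of this lemma is a one-line pointer to the proof of Theorem \ref{prop:driftcondspreadmodel}, where the combined consistency identity \eqref{eq:conscondproof2} and the combined drift identity (the unlabeled display preceding \eqref{eq:conscondproof3}) are derived as intermediate steps, so the lemma is simply read off from that argument. You instead observe that $(B^{c,k_0,k_3},X,f^{c,k_0}_0+q^{k_0,k_3}_0,\alpha^{c,k_0}+\alpha^{k_0,k_3},\sigma^{c,k_0}+\sigma^{k_0,k_3})$ is itself a bond-price model in the sense of Definition \ref{def:bondpricemodel} (the HJM-basic condition is stable under sums, the class of $X$-integrable processes being a vector space, and \eqref{eq:bondspread3} supplies the exponential representation with $B^{k_0,k_3}(t,t)=1$), apply Proposition \ref{prop:driftcondbondmodel} to this repackaged model, and then connect to (i) via Lemma \ref{lem:bayesforQ}. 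This is more modular---no re-entry into the Fubini and local-exponent computations, which are encapsulated once and for all in Proposition \ref{prop:driftcondbondmodel}---and it makes the additive structure of the conditions transparent, as your final consistency check (summing \eqref{eq:conscondftt} with \eqref{eq:consistency} and \eqref{eq:HJMcondck0} with \eqref{eq:driftcondak3k0}) shows. What the paper's route buys is that every identity needed is literally already displayed in the earlier proof, so nothing new (such as stability of the integrability conditions under addition) has to be verified.

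Two caveats, which concern the statement as much as your argument. First, a literal application of Proposition \ref{prop:driftcondbondmodel} also returns, as part of the characterization, the martingale property of the stochastic exponential driven by $-(\Sigma^{c,k_0}(T)+\Sigma^{k_0,k_3}(T))$ (this is condition (a) in Theorem \ref{prop:driftcondspreadmodel} as well); item (iii) of the lemma, and your summary of what the proposition yields, omit it, and without it (iii) only delivers a local martingale, hence only one implication. Second, for the equivalence with (i) you rightly invoke Lemma \ref{lem:bayesforQ} \emph{under} risk-neutrality of the $k_0$-collateralized $k_0$-bond model (condition (i) of Definition \ref{def:combinedmartingale}); this hypothesis is genuinely needed for (iii) $\Rightarrow$ (i), since the summed conditions alone cannot separate the two components (compensating violations of \eqref{eq:HJMcondck0} and \eqref{eq:driftcondak3k0} are possible), and it is only implicit in the paper through its reference to the standing assumption of Theorem \ref{prop:driftcondspreadmodel}. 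State it explicitly when you close the circle.
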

\begin{proof}
    These results are implicitly obtained in the proof of Theorem \ref{prop:driftcondspreadmodel}.
\end{proof}

\begin{corollary}    
   If the extended bond-price model for the currency $k_0$ is risk neutral, then for every $k_3\neq k_0$ we have that:
\begin{enumerate}
    \item For every $T>0$, the instantaneous foreign-collateral  forward rate $(f_t^{c, k_0, k_3}(T))_{t\in[0,T]}$ is given by
    \begin{equation*}
    \begin{aligned}
   f_t^{c, k_0, k_3}(T)&=f_0^{c, k_0}(T)+q_0^{k_0, k_3}(T)\\&-\int_0^t(\sigma^{c, k_0}_s(T) + \sigma^{k_0, k_3}_s(T))\nabla\Psi_s^{\QQ^{k_0}, X}(-\Sigma^{c, k_0}_s(T)-\Sigma^{k_0, k_3}_s(T))ds\\&+\int_0^t(\sigma^{c, k_0}_s(T)+\sigma^{k_0, k_3}_s(T))dX_s;
    \end{aligned}
    \end{equation*}
    \item For every $t\ge0$, the $k_0$-$k_3$ collateral short rate  $r^{c, k_0,k_3}_t$ at time $t$ is given by
    \begin{equation*}
    \begin{aligned}
   r^{c, k_0,k_3}_t&=r_0^{c, k_0}+q_0^{k_0, k_3}\\&-\int_0^t(\sigma^{c, k_0}_s(t) + \sigma^{k_0, k_3}_s(t))\nabla\Psi_s^{\QQ^{k_0}, X}(-\Sigma^{c, k_0}_s(t)-\Sigma^{k_0, k_3}_s(t))ds\\&+\int_0^t(\sigma^{c, k_0}_s(t)+\sigma^{k_0, k_3}_s(t))dX_s.
    \end{aligned}
    \end{equation*}
\end{enumerate}
\end{corollary}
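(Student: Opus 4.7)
The plan is to mimic the derivation of Corollary \ref{cor:driftcondk0k0} (and of Corollary \ref{cor:driftcondk0k3}), using the already-established HJM representation \eqref{eq:hjmck0k3} of $f_t^{c,k_0,k_3}(T)$ together with the combined HJM drift condition \eqref{eq:driftcondak3k0f}. Since the extended bond-price model is risk neutral by assumption, both \eqref{eq:hjmck0k3} and \eqref{eq:driftcondak3k0f} are at our disposal.

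First I would differentiate \eqref{eq:driftcondak3k0f} with respect to $T$. By the HJM-basic condition (Definition \ref{def:basiccond}) the maps $u\mapsto \alpha_t^{c,k_0}(u)$ and $u\mapsto \alpha_t^{k_0,k_3}(u)$ are integrable, so the fundamental theorem of calculus yields $\alpha_t^{c,k_0}(T)+\alpha_t^{k_0,k_3}(T)$ on the left. On the right, since $\partial_T\bigl(-\Sigma^{c,k_0}_t(T)-\Sigma^{k_0,k_3}_t(T)\bigr) = -(\sigma^{c,k_0}_t(T)+\sigma^{k_0,k_3}_t(T))$, the chain rule combined with the explicit gradient formula \eqref{eq:levykintder} of Lemma \ref{prop:localexponent} gives
\begin{equation*}
\alpha_t^{c,k_0}(T)+\alpha_t^{k_0,k_3}(T) = -\bigl(\sigma^{c,k_0}_t(T)+\sigma^{k_0,k_3}_t(T)\bigr)\,\nabla\Psi_t^{\QQ^{k_0},X}\bigl(-\Sigma^{c,k_0}_t(T)-\Sigma^{k_0,k_3}_t(T)\bigr),
\end{equation*}
exactly in the spirit of the computation of $\alpha^{c,k_0}$ carried out in equation \eqref{eq:alphack0}. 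Substituting this identity into the drift integral in \eqref{eq:hjmck0k3} immediately produces statement (i).

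To obtain (ii), I would then pass from $T$ to $t$ in the expression from (i). Since $r^{c,k_0,k_3}_t = f^{c,k_0,k_3}_{t-}(t)$ by the consistency condition \eqref{eq:consistencyk0k3} (and by the definition \eqref{eq:Bck0k3} of $B^{c,k_0,k_3}$), evaluating the integrands at $T=t$ yields the claimed expression; no additional work beyond a relabelling is required.

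The only delicate point is the interchange of $\partial_T$ with the local-exponent operator $\Psi^{\QQ^{k_0},X}_t$ in step one. This is however not really an obstacle: $\Psi^{\QQ^{k_0},X}_t(\beta)$ is given explicitly in Lévy-Khintchine form by Lemma \ref{prop:localexponent}, and the HJM-basic condition together with the integrability of $\sigma^{c,k_0}$ and $\sigma^{k_0,k_3}$ against $X$ guarantees that dominated convergence applies uniformly in $T$ on compacts, so differentiation under the integral in $\xi$ is justified and produces the gradient \eqref{eq:levykintder}. The remaining algebra is routine.
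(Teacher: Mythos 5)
Your proposal is correct and follows essentially the same route as the paper: the paper's proof simply inserts the drift condition \eqref{eq:driftcondak3k0f} into \eqref{eq:hjmck0k3} and \eqref{eq:hjmck0k3r}, which in practice means differentiating the drift condition in $T$ exactly as in \eqref{eq:alphack0} and then evaluating at $T=t$ for the short rate, just as you do.
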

\begin{proof}
     The two results are obtained by inserting the drift condition \eqref{eq:driftcondak3k0f} into equation \eqref{eq:hjmck0k3} and equation \eqref{eq:hjmck0k3r}, respectively.
\end{proof}

\subsection{Foreign exchange rate models and changes of measure}\label{sec:FXrate}
We considered so far HJM models for the collateral discount curves, for the cross-currency basis curves, and for the foreign collateral discount curves. All these models have been presented under a domestic currency measure $\QQ^{k_0}$, with the index $k_0$ ranging from $1$ to $L$. In this section, we derive the corresponding dynamics under a different measure $\QQ^{k}$ for $k \ne k_0$. This allows to specify the term-structure models of every economy under a single probability measure, which is essential when performing Monte Carlo simulations.

The first step is to link the different economies by means of general foreign exchange (FX) rate processes. From Proposition \ref{prop:aoabasemkt} we learned that, to guarantee absence of arbitrage in the unextended market, it is enough to assume the $(\QQ^{k_0},\GG)$-local martingale property of \eqref{eq:secondMartingale}, namely, we require that, given a fixed currency $k_0$, for any choice of 
$k\neq k_0$, the processes
\begin{align}\label{eq:martXBB}
\left(\frac{\mathcal{X}^{k_0,k}_tB^{k}_t}{B^{k_0}_t}\right)_{t\in [0,T]}
\end{align}
are $(\QQ^{k_0},\GG)$-local martingales. Moreover, in Section \ref{sec:pricing} we worked under Assumption \ref{assu:martingales}, stating that the processes \eqref{eq:martXBB} are true $(\QQ^{k_0},\GG)$-martingales. We now proceed to construct models which are consistent with this setting.

\begin{definition}
    Let $1\leq k_0\leq L$. We call a model consisting of
    \begin{enumerate}\renewcommand{\labelenumi}{\Roman{enumi}.}
        \item The $\RR^{d_X+2L-1}$-valued It\^o semimartingale $$(X,B^{c,1},\ldots,B^{c,L},Q^{k_0,1},\ldots,Q^{k_0,k_0-1},Q^{k_0,k_0+1},\ldots,Q^{k_0,L});$$
        \item The initial conditions 
        $(\mathcal{X}^{k_0,1}_0,\ldots, \mathcal{X}^{k_0,k_0-1}_0,\mathcal{X}^{k_0,k_0+1}_0,\ldots, \mathcal{X}^{k_0,L}_0)\in\RR^{L-1}_{+}$ ;  \item The processes $(\sigma^{\mathcal{X},k_0,1},\ldots,\sigma^{\mathcal{X},k_0,k_0-1},\sigma^{\mathcal{X},k_0,k_0+1},\ldots, \sigma^{\mathcal{X},k_0,L})$;
    \end{enumerate}
    an \emph{FX market model} for the currency $k_0$, 
    if for every $k\neq k_0$ the following conditions are satisfied:
    \begin{enumerate}
        \item The collateral cash accounts $B^{c,k}$ satisfies condition \emph{(i)} of Definition \ref{def:bondpricemodel};
        \item The $k_0$-$k$ cross-currency spread cash accounts $Q^{k_0,k}$ satisfies condition \emph{(ii)} of Definition \ref{def:extendedbondpricemodel};
        \item The maps $(\omega,t)\mapsto \sigma^{\mathcal{X},k_0,k}_t$ are $\cP\otimes\cB (\RR_+)$-measurable $\RR^{d_X}$-valued processes such that:
        \begin{enumerate}
            \item $\sigma^{\mathcal{X},k_0,k}_{t,j}$ is integrable with respect to the $j$-th component of the semimartingale $X$;
            \item $\sigma^{\mathcal{X},k_0,k}_{t}\in\cU^{\QQ^{k_0},X}$.
        \end{enumerate}
        \end{enumerate} 
        We then postulate the following dynamics for the FX rates:
        \begin{align}
        \label{eq:fxRateDyn}
\mathcal{X}^{k_0,k}_t=\mathcal{X}^{k_0,k}_0\frac{B^{c,k_0}_tQ^{k_0,k}_t}{B^{c,k}_t}\exp\left\{- \int_0^t \Psi_s^{\QQ^{k_0}, X}(\sigma^{\mathcal{X},k_0,k}_s)ds +\int_0^t\sigma^{\mathcal{X},k_0,k}_sdX_s\right\},
        \end{align} 
or, analogously, 
\begin{align*}
\begin{aligned}
\mathcal{X}^{k_0,k}_t
&=\mathcal{X}^{k_0,k}_0+\int_0^t\mathcal{X}^{k_0,k}_{s}(r^{c,k_0}_{s}-r^{c,k}_{s}+q^{k_0,k}_{s})ds+ \int_0^t\mathcal{X}^{k_0,k}_{s-}\sigma^{\mathcal{X},k_0,k}_{s}dX_s.
\end{aligned}
\end{align*}
\end{definition}
The dynamics of the FX rates \eqref{eq:fxRateDyn} immediately implies that the processes \eqref{eq:martXBB} are $(\QQ^{k_0},\GG)$-local martingales. Next, we must further impose the assumption that the processes \eqref{eq:fxRateDyn} are such that \eqref{eq:martXBB} are $(\QQ^{k_0},\GG)$-true martingales in line with Assumption \ref{assu:martingales}. Concretely, for general It\^o semimartingales, this can be achieved by imposing the conditions of \cite{kalshi02}, see also \cite{cgg2017}. This will allow us to introduce changes between different spot martingale measures via the processes \eqref{eq:martXBB}.

But before defining the change of measure starting from \eqref{eq:martXBB}, we introduce some assumptions on the differential characteristics of the It\^o semimartingale $X$. This is in line with \cite[Condition (B2)]{cgg2017} and leads us to a concrete version of Assumption \ref{assu:martingales} for the process \eqref{eq:martXBB} in terms of the semimartingale characteristics.
\begin{assumption}\label{ass:X}
    Let $1\le k_0\le L$ be fixed. For the $\RR^{d_X}$-valued It\^o semimartingale $X$ with $\QQ^{k_0}$-differential characteristics $(b^{\QQ^{k_0}}, c, K^{\QQ^{k_0}})$ with respect to the truncation function $\chi$, we assume that:
    \begin{enumerate}
        \item For all $t\ge 0$ and all $k \ne k_0$,
        \begin{equation*}
            \int_0^t \int_{\bigl|\bigl(\sigma^{\mathcal{X},k_0,k}_{s}\bigr)^\top\xi\bigr| > 1} \exp\left\{\bigl(\sigma^{\mathcal{X},k_0,k}_{s}\bigr)^\top\xi\right\}\bigl(\sigma^{\mathcal{X},k_0,k}_{s}\bigr)^\top\xi \,K^{\QQ^{k_0}}_s(d\xi) ds < \infty, \quad \QQ^{k_0}\mbox{-a.s.};
        \end{equation*}
        \item For all $T\ge 0$ and all $k \ne k_0$,
        \begin{equation*}
        \begin{aligned}
            \sup_{t\le T} \;&\EE^{\QQ^{k_0}}\left[\exp\left\{\frac{1}{2} \int_0^t \bigl(\sigma^{\mathcal{X},k_0,k}_{s}\bigr)^\top c_s\sigma^{\mathcal{X},k_0,k}_{s} ds \right.\right.\\& \qquad \left.\left.+ \int_0^t \int_{\RR^{d_X}} \left(\exp\left\{\bigl(\sigma^{\mathcal{X},k_0,k}_{s}\bigr)^\top\xi\right\}\left(\bigl(\sigma^{\mathcal{X},k_0,k}_{s}\bigr)^\top\xi -1\right)+1 \right)K^{\QQ^{k_0}}_s(d\xi) ds \right\} \right]<\infty, \quad \QQ^{k_0}\mbox{-a.s.}.
            \end{aligned}
        \end{equation*}
    \end{enumerate} 
\end{assumption}

Assumption \ref{ass:X} allows us to derive the following representation for the semimartingale $X$.
\begin{lemma}
Under Assumption \ref{ass:X}, the semimartingale $X$ admits the following representation  under the measure $\QQ^{k_0}$:
\begin{equation}\label{eq:reprX}
   X_t = X_0 + \int_0^t b_s^{\QQ^{k_0}} ds + \int_0^t \sqrt{c_s}dW^{\QQ^{k_0}}_s  +\int_0^t \int_{\RR^{d_X}} \xi \left(\mu^X -K_s^{\QQ^{k_0}}\right) (d\xi, ds),
\end{equation}
where $W^{\QQ^{k_0}}$ is an $\RR^{d_X}$-valued $(\QQ^{k_0}, \GG)$-Brownian motion, $\sqrt{\cdot}$ denotes the matrix-square root for symmetric positive semidefinite matrices, and $\mu^X$ is the random measure for the jump components of $X$ with compensator $K^{\QQ^{k_0}}$ under the measure $\QQ^{k_0}$.
\end{lemma}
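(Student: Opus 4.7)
The plan is to derive the representation \eqref{eq:reprX} as an instance of the canonical representation of a semimartingale, as formulated e.g.\ in Jacod--Shiryaev (Theorem II.2.34), and then to justify, using Assumption \ref{ass:X}, that the truncation function can be dropped in the jump integral so that the two jump pieces collapse into a single compensated integral against $\mu^X - K^{\QQ^{k_0}}$.

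First, since $X$ is an $\RR^{d_X}$-valued It\^o semimartingale with $\QQ^{k_0}$-differential characteristics $(b^{\QQ^{k_0}}, c, K^{\QQ^{k_0}})$ relative to the truncation function $\chi$, the canonical representation yields
\begin{equation*}
X_t = X_0 + \int_0^t b_s^{\QQ^{k_0}}ds + X_t^c + \int_0^t\!\!\int_{\RR^{d_X}}\!\chi(\xi)\,(\mu^X - K^{\QQ^{k_0}})(d\xi,ds) + \int_0^t\!\!\int_{\RR^{d_X}}\!(\xi - \chi(\xi))\,\mu^X(d\xi,ds),
\end{equation*}
where $X^c$ is the continuous local martingale part satisfying $\langle X^{c,i}, X^{c,j}\rangle_t = \int_0^t c_s^{ij}\,ds$. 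For the continuous part, I would apply the standard martingale representation for continuous local martingales with absolutely continuous quadratic variation: after possibly enlarging the probability space so that on a null-eigenspace of $c_s$ an independent Brownian increment is available, one can define a $(\QQ^{k_0}, \GG)$-standard Brownian motion $W^{\QQ^{k_0}}$ via $dW^{\QQ^{k_0}}_s = (\sqrt{c_s})^{+}\, dX^c_s + (I - (\sqrt{c_s})^{+}\sqrt{c_s})dW'_s$, where $(\sqrt{c_s})^{+}$ denotes the Moore--Penrose pseudo-inverse and $W'$ is an auxiliary Brownian motion. This gives $X^c_t = \int_0^t \sqrt{c_s}\, dW^{\QQ^{k_0}}_s$.

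For the jump part, the remaining task is to merge the two jump integrals into $\int_0^t\!\int \xi\,(\mu^X - K^{\QQ^{k_0}})(d\xi, ds)$. This requires $\int_0^t\!\int_{|\xi|>1}|\xi|\,K^{\QQ^{k_0}}_s(d\xi)\,ds < \infty$ on $[0,T]$, $\QQ^{k_0}$-a.s.\ (equivalently that the large-jump part is of integrable variation, so that its compensator exists and can be split off). Here I would invoke Assumption \ref{ass:X}(i): the exponential moment bound $\int_0^t\!\int_{|(\sigma^{\mathcal{X},k_0,k}_s)^\top\xi|>1} e^{(\sigma^{\mathcal{X},k_0,k}_s)^\top\xi}(\sigma^{\mathcal{X},k_0,k}_s)^\top\xi\,K^{\QQ^{k_0}}_s(d\xi)\,ds<\infty$ controls, a fortiori, all polynomial moments of the projected jumps in the directions $\sigma^{\mathcal{X},k_0,k}_s$, and the implicit structural assumption that these span $\RR^{d_X}$ (needed elsewhere to change measure coherently across all currencies) promotes this into finiteness of $\int_0^t\!\int_{|\xi|>1}|\xi|K^{\QQ^{k_0}}_s(d\xi)ds$. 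Once this is established, $(\id - \chi)\star\mu^X$ has an absolutely continuous compensator $(\id - \chi)\star K^{\QQ^{k_0}}$, and adding and subtracting this compensator recombines the two jump integrals into a single compensated integral $\xi\star(\mu^X - K^{\QQ^{k_0}})$.

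The main obstacle will be the last step: turning the exponential moment bound of Assumption \ref{ass:X}(i), which is stated only along the directions $\sigma^{\mathcal{X},k_0,k}_s$, into the full-space integrability $\int_{|\xi|>1}|\xi|\,K^{\QQ^{k_0}}_s(d\xi)<\infty$ needed to compensate the big jumps and thus write the unified integral \eqref{eq:reprX}. Once that integrability is in place, the rest is a bookkeeping exercise with the canonical decomposition and Lévy's characterization theorem.
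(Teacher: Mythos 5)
Your proposal follows essentially the same route as the paper: write the canonical decomposition of the It\^o semimartingale under $\QQ^{k_0}$ and then, invoking Assumption \ref{ass:X}, compensate the large-jump term so that the two jump integrals merge into a single compensated integral against $\mu^X-K^{\QQ^{k_0}}$. The paper is no more detailed than you are on the integrability point you flag (it simply cites Assumption \ref{ass:X}), but it does make explicit the one bookkeeping item you leave implicit: after adding and subtracting the compensator, the leftover term $\int_0^t\int(\xi-\chi(\xi))K_s^{\QQ^{k_0}}(d\xi)\,ds$ is absorbed into the drift, so that $b^{\QQ^{k_0}}$ in \eqref{eq:reprX} is really $b^{\QQ^{k_0}}+\int(\xi-\chi(\xi))K^{\QQ^{k_0}}(d\xi)$, denoted by the same symbol with an abuse of notation.
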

\begin{proof}
    The canonical decomposition of $X$ under $\QQ^{k_0}$ is
\begin{align*}
   X_t =& X_0 + \int_0^t b_s^{\QQ^{k_0}} ds + \int_0^t \sqrt{c_s}dW_s^{\QQ^{k_0}}  \\&+\int_0^t \int_{\RR^{d_X}} \chi(\xi) \left(\mu^X -K_s^{\QQ^{k_0}}\right) (d\xi, ds)+\int_0^t \int_{\RR^{d_X}} \left(\xi-\chi(\xi)\right) \mu^X (d\xi, ds).
\end{align*}
Because of Assumption \ref{ass:X}, for the last term we notice that we can add and subtract
\begin{equation*}
    \int_0^t \int_{\RR^{d_X}} \left(\xi-\chi(\xi)\right) \mu^X (d\xi, ds)<\infty,
\end{equation*}
hence we obtain the decomposition \eqref{eq:reprX} with 
\begin{equation*}
    \tilde{b}_s^{\QQ^{k_0}} := b_s^{\QQ^{k_0}} + \int_{\RR^{d_X}} \left(\xi-\chi(\xi)\right)K_s^{\QQ^{k_0}} (d\xi),
\end{equation*}
which, with an abuse of notation, we denoted again by $b^{\QQ^{k_0}}$.
\end{proof}

Measure changes between different spot martingale measures are now defined via the processes \eqref{eq:martXBB}. We detail the measure transformation in the following result.
\begin{lemma}\label{lem:measureChange}
    Under Assumption \ref{ass:X}, for any $k \ne k_0$, we introduce the risk-neutral measure $\QQ^{k}\sim \QQ^{k_0}$ on $\mathcal{G}_T$ by
    \begin{equation*}
        \frac{\partial \QQ^{k}}{\partial \QQ^{k_0}}:= \frac{\mathcal{X}^{k_0,k}_TB^{k}_T}{B^{k_0}_T}\frac{B^{k_0}_0}{\mathcal{X}^{k_0,k}_0B^{k}_0},
    \end{equation*}
    such that
        \begin{equation}\label{eq:changek0k0p}
        \left.\frac{\partial \QQ^{k}}{\partial \QQ^{k_0}}\right|_{\mathcal{G}_t}= \Excond{\QQ^{k_0}}{\frac{\partial \QQ^{k}}{\partial \QQ^{k_0}}}{\mathcal{G}_t}=\frac{\mathcal{X}^{k_0,k}_tB^{k}_t}{B^{k_0}_t}\frac{B^{k_0}_0}{\mathcal{X}^{k_0,k}_0B^{k}_0}.
    \end{equation}
    Then
    \begin{align}
        &\label{eq:Wprime} W^{\QQ^{k}} := W^{\QQ^{k_0}} -\int_0^{\cdot} \bigl( \sigma^{\mathcal{X},k_0,k}_s\bigr)^{\top}\sqrt{c_s}ds,\,\mbox{ and}\\
        &\label{eq:Kprime} K^{\QQ^{k}} (d\xi) := \exp\left\{\bigl( \sigma^{\mathcal{X},k_0,k}\bigr)^{\top}\xi\right\} K^{\QQ^{k_0}} (d\xi),
    \end{align}
    represent, respectively, a $(\QQ^{k}, \GG)$-Brownian motion and the compensator of $\mu^{X}$ under $\QQ^{k}$. 
    Finally, the semimartingale $X$ admits the following representation under $\QQ^{k}$:
\begin{equation}\label{eq:reprXprime}
   X_t = X_0 + \int_0^t b_s^{\QQ^{k}} ds + \int_0^t \sqrt{c_s}dW^{\QQ^{k}}_s  +\int_0^t \int_{\RR^{d_X}} \xi \left(\mu^X -K_s^{\QQ^{k}}\right) (d\xi, ds),
\end{equation}
where
\begin{equation*}
    b_s^{\QQ^{k}}:= b_s^{\QQ^{k_0}}+\bigl( \sigma_s^{\mathcal{X},k_0,k}\bigr)^{\top}c_s- \int_{\RR^{d_X}}\xi \biggl(1- \exp\left\{\bigl( \sigma_s^{\mathcal{X},k_0,k}\bigr)^{\top}\xi\right\} \biggr)K_s^{\QQ^{k_0}} (d\xi).
\end{equation*}
\end{lemma}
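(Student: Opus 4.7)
The result is a Girsanov-type statement, so the plan divides naturally into three stages: (i) show that the candidate Radon--Nikodym density is a true $(\QQ^{k_0},\GG)$-martingale, (ii) identify its stochastic logarithm, and (iii) apply the semimartingale version of Girsanov's theorem to read off the new Brownian motion and the new jump compensator.

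\textbf{Step 1: Reduction of the density.} First I would rewrite the candidate density
\[
Z_t := \frac{\mathcal{X}^{k_0,k}_tB^{k}_t}{B^{k_0}_t}\frac{B^{k_0}_0}{\mathcal{X}^{k_0,k}_0B^{k}_0}
\]
using the explicit FX dynamics \eqref{eq:fxRateDyn}. Writing $B^{k_0}_t=B^{c,k_0}_t\exp(\int_0^tq^{k_0}_sds)$, $B^{k}_t=B^{c,k}_t\exp(\int_0^tq^{k}_sds)$, and $Q^{k_0,k}_t=\exp(\int_0^t(q^{k_0}_s-q^{k}_s)ds)$, the ratio $B^{c,k_0}_tQ^{k_0,k}_tB^{k}_t/(B^{c,k}_tB^{k_0}_t)$ collapses to $1$. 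Hence
\[
Z_t=\exp\left\{-\int_0^t \Psi_s^{\QQ^{k_0},X}(\sigma^{\mathcal{X},k_0,k}_s)\,ds+\int_0^t\sigma^{\mathcal{X},k_0,k}_s\,dX_s\right\},
\]
which by the very definition of the local exponent is a strictly positive $(\QQ^{k_0},\GG)$-local martingale.

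\textbf{Step 2: Martingale property.} To promote $Z$ from a local martingale to a true martingale I invoke Assumption \ref{ass:X}. Part (i) of that assumption guarantees the integrability required to split the exponential compensator into its continuous and jump contributions, and part (ii) is precisely the sufficient condition of \cite{kalshi02} (see also the adaptation in \cite{cgg2017}) ensuring that the stochastic exponential associated with $Z$ is uniformly integrable on $[0,T]$. Consequently $Z$ is a $(\QQ^{k_0},\GG)$-martingale with $Z_0=1$, so defining $d\QQ^{k}/d\QQ^{k_0}:=Z_T$ yields an equivalent probability measure on $\GG_T$ and \eqref{eq:changek0k0p} follows from the tower property.

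\textbf{Step 3: Girsanov's theorem and identification of the new characteristics.} Next I would compute the stochastic logarithm $N$ of $Z$, i.e.\ the local $(\QQ^{k_0},\GG)$-martingale with $Z=\cE(N)$. Using the representation \eqref{eq:reprX} and the L\'evy--Khintchine form given by Lemma \ref{prop:localexponent}, a direct computation gives
\[
N_t=\int_0^t (\sigma^{\mathcal{X},k_0,k}_s)^{\top}\sqrt{c_s}\,dW^{\QQ^{k_0}}_s+\int_0^t\int_{\RR^{d_X}}\left(e^{(\sigma^{\mathcal{X},k_0,k}_s)^{\top}\xi}-1\right)(\mu^X-K^{\QQ^{k_0}}_s)(d\xi,ds),
\]
where the drift and truncation terms produced by It\^o's formula cancel exactly against $\Psi^{\QQ^{k_0},X}(\sigma^{\mathcal{X},k_0,k})$. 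The Girsanov--Jacod--M\'emin theorem for semimartingales (see \cite[III.3.24]{KK13}-style references) then states that the $\QQ^{k}$-Brownian motion is obtained from $W^{\QQ^{k_0}}$ by subtracting the covariation kernel of $N^c$ with $W^{\QQ^{k_0}}$, which yields exactly \eqref{eq:Wprime}, and that the compensator of $\mu^X$ under $\QQ^{k}$ is multiplied by $1+(e^{(\sigma^{\mathcal{X},k_0,k})^{\top}\xi}-1)$, producing \eqref{eq:Kprime}. Finally, plugging these into the canonical decomposition of $X$ and re-associating the truncation term gives \eqref{eq:reprXprime} with the stated drift $b^{\QQ^{k}}$.

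\textbf{Main obstacle.} The only non-mechanical step is verifying that the post-change jumps still admit a suitable canonical decomposition: under $K^{\QQ^{k}}$ the large jumps may be non-integrable, so the absorbing of $\xi-\chi(\xi)$ into the drift has to be done consistently with the bookkeeping already carried out in the lemma preceding the statement. This is where Assumption \ref{ass:X}(i), which bounds the relevant exponential moment on the set $\{|(\sigma^{\mathcal{X},k_0,k})^{\top}\xi|>1\}$, is essential and closes the argument.
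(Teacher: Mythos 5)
Your proposal is correct and follows essentially the same route as the paper: identify the density with the exponential form of $\mathcal{X}^{k_0,k}B^{k}/B^{k_0}$ (a true $(\QQ^{k_0},\GG)$-martingale under Assumption \ref{ass:X}), apply the semimartingale Girsanov transform to obtain \eqref{eq:Wprime} and \eqref{eq:Kprime}, and then regroup the canonical decomposition \eqref{eq:reprX} after adding and subtracting $\int_0^t\int_{\RR^{d_X}}\xi K_s^{\QQ^{k}}(d\xi)ds$ to get \eqref{eq:reprXprime}. You merely make explicit (via the stochastic logarithm and the Kallsen--Shiryaev condition) the steps the paper delegates to citations, which is a welcome but not substantively different argument.
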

\begin{proof}
    Equality \eqref{eq:changek0k0p} is easily satisfied since the processes \eqref{eq:martXBB} are $(\QQ^{k_0},\GG)$-local martingales, while the transformations \eqref{eq:Wprime} and \eqref{eq:Kprime} are due to the Girsanov transform, see, e.g., \cite{jashi03}. It is left to show \eqref{eq:reprXprime}.
    
    Combining \eqref{eq:reprX} with \eqref{eq:Wprime} and \eqref{eq:Kprime}, and by adding and subtracting $\int_0^t \int_{\RR^{d_X}} \xi K_s^{\QQ^{k}} (d\xi, ds)$,  we get 
    \begin{align*}
           X_t &= X_0 + \int_0^t b_s^{\QQ^{k_0}} ds + \int_0^t \sqrt{c_s}\left(dW^{\QQ^{k}}_s +\bigl( \sigma^{\mathcal{X},k_0,k}_s\bigr)^{\top}\sqrt{c_s}ds  \right) \\
           &+\int_0^t \int_{\RR^{d_X}} \xi \left(\mu^X -K_s^{\QQ^{k}}\right) (d\xi, ds)+\int_0^t \int_{\RR^{d_X}} \xi\left(e^{\bigl( \sigma_s^{\mathcal{X},k_0,k}\bigr)^{\top}\xi}-1\right)K_s^{\QQ^{k_0}} (d\xi, ds).
    \end{align*}
   Regrouping the terms concludes the proof.
\end{proof}

We provide an application of the previous computations. Consider two currency areas $k_0,k$. To underline the fact that the characteristics of the semimartingale $X$ are specified under a certain probability measure, we write $X^{\QQ^{k_0}}$ and $X^{\QQ^{k}}$, respectively. We consider the following hybrid model resulting from the combination of \eqref{eq:instFwdCollRate} for the two currency areas, \eqref{eq:ccbs} and \eqref{eq:fxRateDyn}:
\begin{align}
\label{eq:hybridFXIRModel}
    \begin{aligned}
     f^{c,k_0}_t(T)&=f^{c,k_0}_0(T)-\int_0^t\sigma^{c,k_0}_s(T)\nabla\Psi_s^{\QQ^{k_0}, X}(-\Sigma^{c,k_0}_s(T))ds+\int_0^t\sigma^{c,k_0}_s(T)dX_s^{\QQ^{k_0}};\\
      f^{c,k}_t(T)&=f^{c,k}_0(T)-\int_0^t\sigma^{c,k}_s(T)\nabla\Psi_s^{\QQ^{k}, X}(-\Sigma^{c,k}_s(T))ds+\int_0^t\sigma^{c,k}_s(T)dX_s^{\QQ^{k}};\\
      q^{k_0,k}_t(T)&=q^{k_0,k}_0(T)-\int_0^t\left((\sigma^{c, k_0}_s(T) + \sigma^{k_0, k}_s(T))\nabla\Psi_s^{\QQ^{k_0}, X}(-\Sigma^{c, k_0}_s(T)-\Sigma^{k_0, k}_s(T))\right.\\
      &\left.-\sigma^{c,k_0}_s(T)\nabla\Psi_s^{\QQ^{k_0}, X}(-\Sigma^{c, k_0}_s(T))\right)ds+\int_0^t\sigma^{k_0,k}_s(T)dX_s^{\QQ^{k_0}};\\
      \mathcal{X}^{k_0,k}_t&=\mathcal{X}^{k_0,k}_0\frac{B^{c,k_0}_tQ^{k_0,k}_t}{B^{c,k}_t}\exp\left\{- \int_0^t \Psi_s^{\QQ^{k_0}, X}(\sigma^{\mathcal{X},k_0,k}_s)ds +\int_0^t\sigma^{\mathcal{X},k_0,k}_sdX_s^{\QQ^{k_0}}\right\}.
    \end{aligned}
\end{align}
All the quantities are modelled under the measure $\QQ^{k_0}$, except for the $k$-instantaneous collateral  forward rate $\{(f^{c,k}_t(T))_{t\in [0,T]}, \ T\geq 0\}$. By Lemma \ref{lem:measureChange}, we can express the whole system under a unique measure $\QQ^{k_0}$ as follows:
\begin{align}
\label{eq:foreignRateDomesticMeasure}
    \begin{aligned}
        f^{c,k}_t(T)&=f^{c,k}_0(T)-\int_0^t\sigma^{c,k}_s(T)\nabla\Psi_s^{\QQ^{k}, X}(-\Sigma^{c,k}_s(T))ds\\
        &+\int_0^t\sigma^{c,k}_s(T)\left(b_s^{\QQ^{k}}ds-\bigl( \sigma_s^{\mathcal{X},k_0,k}\bigr)^{\top}c_sds+ \int_{\RR^{d_X}}\xi \bigl(1- e^{\bigl( \sigma_s^{\mathcal{X},k_0,k}\bigr)^{\top}\xi} \bigr)K_s^{\QQ^{k_0}} (d\xi)ds\right.\\
        &\left.+\sqrt{c_s}dW_s^{\QQ^{k_0}}+\int_{\RR^{d_X}}\xi\left(\mu^X -K_s^{\QQ^{k_0}}\right) (d\xi, ds)\right),
    \end{aligned}
\end{align}
where we recognize additional drift terms capturing the so-called \textit{quanto adjustments}.

\begin{remark}\label{rem:Ding}
\textcolor{black}{As an example for the generality of our framework, we mention that in \cite{ding2024} the study of cross-currency swaps is performed on the basis of the following model, presented in their Assumption 4.1. Using their notations, they assume a Hull-White type model as follows
\begin{align*}
\begin{aligned}
d r_t^d & =\left(a-b r_t^d\right) d t+\sigma_d d W_t, \\
d r_t^f & =\left(\widehat{a}-\widehat{\sigma} \bar{\sigma} \rho_{23}-\widehat{b} r_t^f\right) d t+\sigma_f d W_t, \\
d Q_t & =Q_t\left(r_t^d-r_t^f+\alpha_t^d-\alpha_t^f\right) d t+Q_t \sigma_q d W_t.
\end{aligned}
\end{align*}
Such model is a special case of our general framework. In particular it can be mapped to our semimartingale-diven dynamics \eqref{eq:hybridFXIRModel} by observing that
$f_t^{c, k_0}(t)=r_t^d$, $f_t^{c, k}(t) =r_t^f$, 
$q_t^{k_0, k}(t)=\alpha_t^d-\alpha_t^f$,
$\mathcal{X}_t^{k_0, k} =Q_t$. The presence of the quanto adjustments in the drift of the foreign short-rate dynamics indicates that they directly present the model under the domestic measure. The same object is provided in our general semimartingale-driven dynamics \eqref{eq:foreignRateDomesticMeasure}.}
\end{remark}

\section{Forwards of indices}\label{sec:hjmforward}
The aim of the present section is to analyze the financial concept of a forward contract in a general way. For this reason, we first introduce on the usual filtered probability space $(\Omega,\cG,(\cG_t)_{t\geq 0},\QQ^{1},\ldots, \QQ^L)$ an abstract index associated to a generic currency area $k_2$. Then, our definition of a forward price is that of a fixed rate that makes a forward contract fair, in the sense that the price of the contract at initiation time is zero. This principle is not new, but was first used by \cite{mer09} in order to introduce the concept of a forward rate agreement (FRA). The same principle has then been virtually employed in all the subsequent literature on multiple curves, and was also specialized in \cite{fries2016}. However, the definitions currently existing in the literature usually require to perform changes of measure, since the pricing measure is linked to the cash-account numéraire. We have seen instead that under the measure $\QQ^{k_0}$ there is a multitude of martingales, namely each risky asset discounted by means of its own asset-specific cash account. We can then define all the forwards under a unique spot pricing measure $\QQ^{k_0}$: this is a consequence of our fully coherent model of funding from the previous sections.

We denote by $I^{k_2}(T^s, T^f, T^e, T^p)$ the index referring to the period $[T^s, T^e]$ which is fixed in $T^f$ for a payment in $T^p$, with  $T^s\le T^f \le T^e$ and $T^p\ge T^f$, see Figure \ref{fig:index_structure} for an illustration of the structure of the index. This means that the index is treated as a $\cG_{T^f}$-measurable random variable, while $T^p$ fixes the time horizon for discounting. In particular, this definition includes both the case $T^p\ge T^e$ and the case $T^p \le T^e$, meaning that the payment may happen before, at, or after the end of the period, depending on the payment adjustment. In some cases, the index may be observed and paid simultaneously, namely $T^s= T^e= T^f = T^p$. Notice also that this extends the classical notation where $T^f=T^s=T$ and $T^p=T^e= T+\delta$. In this case, indeed, the index $I^{k_2}(T, T, T+\delta, T+\delta)$ corresponds to $I^{k_2}(T, T+\delta)$ in the usual notation, with the additional superscript $k_2$ for the currency denomination of the index.

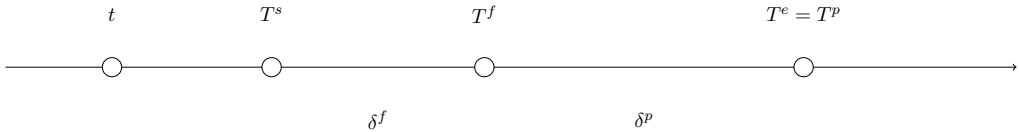
\begin{figure}[t]
    \centering
    \scalebox{0.7}{
    \begin{tikzpicture}
	\begin{pgfonlayer}{nodelayer}
		\node [style=none] (0) at (-9, 0) {};
		\node [style=none] (1) at (10, 0) {};
		\node [style=none] (8) at (0, 1) {$T^f$};
		\node [style=none] (9) at (-7, 1) {$t$};
		\node [style=emptyCircle] (10) at (-4, 0) {};
		\node [style=emptyCircle] (11) at (6, 0) {};
		\node [style=none] (12) at (-4, 1) {$T^s$};
		\node [style=none] (13) at (6, 1) {$T^e=T^p$};
		\node [style=emptyCircle] (14) at (0, 0) {};
		\node [style=none] (16) at (-2, -1) {$\delta^f$};
		\node [style=none] (17) at (3, -1) {$\delta^p$};
		\node [style=emptyCircle] (18) at (-7, 0) {};
	\end{pgfonlayer}
	\begin{pgfonlayer}{edgelayer}
		\draw [style=new edge style 0] (0.center) to (1.center);
	\end{pgfonlayer}
\end{tikzpicture}
}
    \caption{Illustration of an abstract index. The period starts at $T^s$ and ends at $T^e$. The fixing and payment times are adjusted versions of $T^s$ and $T^e$. In this case, we set $T^s < T^f<T^e$ and $T^e=T^p$.}
    \label{fig:index_structure}
\end{figure}

We then denote by $I^{k_0,k_2,k_3}(T^s,T^f,T^{e},T^{p})$ the value for a $k_0$-based agent of a forward,  written on the index $I^{k_2}(T^s, T^f, T^e,T^p)$, collateralized in currency $k_3$, and with payment date $T^p$. We define $I^{k_0,k_2,k_3}_t(T^s, T^f,T^{e},T^{p})$ as the $\cG_t$-measurable random variable that solves the following equation
\begin{align}
\label{eq:AbstractIndexFairFRA_withTe}
\Excond{\QQ^{k_0}}{\frac{B^{c,k_0,k_3}_t}{B^{c,k_0,k_3}_{T^p}}\left(I^{k_2}_{T^f}(T^s, T^f, T^e,T^p)\mathcal{X}^{k_0,k_2}_{T^p}-I^{k_0,k_2,k_3}_t(T^s, T^f,T^{e},T^{p})\right)}{\cG_t}=0,
\end{align}
where we assume that the conditional expectation is finite. We notice however that, from the mathematical point of view, the date $T^e$ does not play any role in the forward definition \eqref{eq:AbstractIndexFairFRA_withTe}. Indeed, the dates that really matter for the pricing of the forward are the start of the period $T^s$, which sets the start of the measuring of the index, the fixing date $T^f$, which sets the $\cG_{T^f}$-measurability of the index, and the payment date $T^p$, which sets the discounting. For this reason, without loss of generality, we shall drop the dependency on $T^e$ and work with $I^{k_2}(T^s, T^f, T^p)$ and $I^{k_0,k_2,k_3}(T^s, T^{f},T^{p})$ for the index and the associated forward, respectively.

In view of specifying a dynamic model for the forward $I^{k_0,k_2,k_3}(T^s, T^{f},T^{p})$, we need however to consider a running version of the index $I^{k_2}(T^s, T^f, T^p)$. For this, let us define $\delta^f:=T^f-T^s$ and $\delta^p:=T^p-T^f$ being, respectively, the fixing and the payment adjustments with respect to the fixing date. 
Then, at a given time $t\ge \delta^f$, we observe the spot index fixed in $t$, i.e. $T^f=t$, and with fixing window which has started in the past in $T^s=t-\delta^f$ for the payment horizon $T^p=t+\delta^p$. In other words, for the spot index with fixing in $t$, 
we shall adopt the notation $I^{k_2}_t(t-\delta^f,t,t+\delta^p)$ instead of $ I^{k_2}_{t}(T^s, t,T^p)$. Notice that for every time instant $t$, the spot index $I^{k_2}_t(t-\delta^f,t,t+\delta^p)$ refers to a measurement period which has started in $t-\delta^f$, hence, with this new notation, the running time must start in $\delta^f$, namely $t\ge \delta^f$, since we need enough past information to evaluate the index. 

By working under this new notation, we now formalize the definition of a forward of an index.
\begin{definition}\label{def:abstractForwardIndex}
    For a fixed $\delta^f\ge 0$ and a fixed $\delta^p\ge 0$, let $\left(I^{k_2}_t(t-\delta^f, t, t+\delta^p)\right)_{t\ge \delta^f}$ be a generic index which at every time instant $t\ge \delta^f$ has reference period starting in $t-\delta^f$ and payment date $t+\delta^p$.
    For any $T\ge \delta^f$, we denote by
    \begin{align*}
    I^{k_0,k_2,k_3}_t(T-\delta^f, T, T+\delta^p), \quad \mbox{for }\, \delta^f\le t \le T+\delta^p,
    \end{align*}
    the value at time $t$ for a $k_0$-based agent of a forward,  written on the index $\left(I^{k_2}_t(t-\delta^f, t, t+\delta^p)\right)_{t\ge \delta^f}$, collateralized in currency $k_3$, with period starting in $T-\delta^f$, fixing in $T$ and payment date $T+\delta^p$. We define $I^{k_0,k_2,k_3}_t(T-\delta^f, T, T+\delta^p)$ as the $\cG_t$-measurable random variable that solves the following equation:
    \begin{align}
        \label{eq:AbstractIndexFairFRA}
        \Excond{\QQ^{k_0}}{\frac{B^{c,k_0,k_3}_t}{B^{c,k_0,k_3}_{T+\delta^p}}\left(I^{k_2}_{T}(T-\delta^f, T, T+\delta^p)\mathcal{X}^{k_0,k_2}_{T+\delta^p}-I^{k_0,k_2,k_3}_{t}(T-\delta^f, T, T+\delta^p)\right)}{\cG_t}=0,
    \end{align}
    where we assume that the conditional expectation is finite. 
    We set $B^{c,k_0,k_3}=B^{c,k_0}$ whenever $k_3=k_0$.
\end{definition}

From the above definition of a forward, we immediately obtain the relation
\begin{align}
    \label{eq:genForwardRate}
I^{k_0,k_2,k_3}_t(T-\delta^f, T, T+\delta^p)=\frac{\Excond{\QQ^{k_0}}{\frac{B^{c,k_0,k_3}_t}{B^{c,k_0,k_3}_{T+\delta^p}}I^{k_2}_{T}(T-\delta^f, T, T+\delta^p)\mathcal{X}^{k_0,k_2}_{T+\delta^p}}{\cG_t}}{\Excond{\QQ^{k_0}}{\frac{B^{c,k_0,k_3}_t}{B^{c,k_0,k_3}_{T+\delta^p}}}{\cG_t}}.
\end{align}
It is important to notice that, for a fixed index $\left(I^{k_2}_t(t-\delta^f, t, t+\delta^p)\right)_{t\ge \delta^f}$, there are as many forward prices as there are possible funding (collateralization) policies. This is captured by the cardinality of the index $1\le k_3 \le L$. In particular, by means of the change of measure \eqref{eq:QTk0k3}, i.e.
\begin{align*}
    \frac{\partial \QQ^{T,k_0,k_3}}{\partial \QQ^{k_0}}:=\frac{B^{k_0,k_3}(T,T)}{B^{c,k_0,k_3}_T}\frac{B^{c,k_0,k_3}_0}{B^{k_0,k_3}(0,T)},
\end{align*}
from \eqref{eq:genForwardRate} we obtain that
    \begin{align}
        \label{eq:fwdForColl}I^{k_0,k_2,k_3}_t(T-\delta^f, T, T+\delta^p)=\Excond{\QQ^{T+\delta^p,k_0,k_3}}{I^{k_2}_{T}(T-\delta^f, T, T+\delta^p)\mathcal{X}^{k_0,k_2}_{T+\delta^p}}{\cG_t},
    \end{align}
    hence the forward index is a $(\QQ^{T+\delta^p,k_0,k_3},\GG)$-martingale.

\begin{remark}
    We point out that in Definition \ref{def:abstractForwardIndex}, after that the spot index $I^{k_2}_{T}(T-\delta^f, T, T+\delta^p)$ has been fixed, namely for any $T< t \le T+\delta^p$, the only varying components are the rate $\mathcal{X}^{k_0,k_2}$ and the collateral cash account $B^{c,k_0,k_3}$, which are fixed at time $T+\delta^p$, see equation \eqref{eq:AbstractIndexFairFRA}. Hence, strictly speaking, the forward $I^{k_0,k_2,k_3}_t(T-\delta^f, T, T+\delta^p)$ is a genuine forward only for $\delta^f\le t\le  T$, but it keeps varying for $T< t \le T+\delta^p$ because of the fluctuations of $\mathcal{X}^{k_0,k_2}$ and  $B^{c,k_0,k_3}$.
\end{remark}

\subsection{Some examples}
We present in this section some examples which serve as motivation for our general framework which allows for the fixing date of the index to vary within the observation interval $[T^s, T^e]$. In particular, the formalism that we introduce can be used to recover various types of interest rates linked to the SOFR rate as introduced by \cite{LyasMer2019}. In what follows, for $M\in\NN$, let $0=T_0, T_1, \ldots, T_M$ be a schedule of times, with $\delta_m := T_m - T_{m-1}$ being the year fraction for the interval $[T_{m-1}, T_m)$,  for $1\le m\le M$.

\subsubsection{Backward-looking rate}
An example of index is the backward-looking rate. In the notation of \cite{LyasMer2019},  this is denoted by $R\left(T_{m-1}, T_m\right)$. In our notation, $T^s=T_{m-1}$, $T^f = T^p = T_m$ and $t = T_m$. We then write $I^{k_0}_{T_m}(T_{m-1}, T_m, T_m)=R(T_{m-1}, T_m)$ for the backward-looking rate in currency $k_2=k_0$, which is given by
\begin{align*}
I^{k_0}_{T_m}(T_{m-1}, T_m, T_m)=\frac{1}{\delta_m}\left(e^{\int_{T_{m-1}}^{T_m} r^{c,k_0}_u du}-1\right)=\frac{1}{\delta_m}\left(\frac{B^{c,k_0}_{T_m}}{B^{c,k_0}_{T_{m-1}}}-1\right), \quad m=1, \dots, M.
\end{align*}
Clearly $I^{k_0}_{T_m}(T_{m-1}, T_m, T_m)$ is a $\cG_{T_m}$-measurable spot index.

\subsubsection{Forward-looking rate}
The forward-looking rate, in the notation of \cite{LyasMer2019} $F(T_{m-1},T_m)$, is the $T_{m-1}$-time value of the fair FRA rate $K_F$ in the swaplet with payoff  $\delta_m\left(R(T_{m-1}, T_m) - K_F\right)$. This is a spot interest rate at time $T_{m-1}$, which we can however map in our general formalism of forwards. In this case, we have $T^s=T_{m-1}$, $T^f = T^p =  T_m$ and $t = T_{m-1}$. Let then $I^{k_0}_{T_{m-1}}(T_{m-1}, T_m, T_m)=F(T_{m-1}, T_m)$ be the forward-looking rate in currency $k_2=k_0$, which, from \eqref{eq:genForwardRate},  is given by 
\begin{align*}
I^{k_0}_{T_{m-1}}(T_{m-1}, T_m, T_m)=\frac{\Excond{\QQ^{k_0}}{\frac{B^{c,k_0}_{T_{m-1}}}{B^{c,k_0}_{T_m}}I^{k_0}_{T_m}(T_{m-1}, T_m, T_m)}{\cG_{T_{m-1}}}}{\Excond{\QQ^{k_0}}{\frac{B^{c,k_0}_{T_{m-1}}}{B^{c,k_0}_{T_m}}}{\cG_{T_{m-1}}}}=\frac{1}{\delta_m}\left(\frac{1}{B^{k_0,k_0}(T_{m-1},T_m)}-1\right).
\end{align*}
This is a $\cG_{T_{m-1}}$-measurable spot index.

\subsubsection{Backward-looking in-arrears forward rate}
\label{ex:FmLyaMer}
The backward-looking in-arrears forward rate, in the notation of \cite{LyasMer2019} $R_m(t)$, is the $t$-time fixed fair value $K_R$ of a FRA with payoff $\delta_m\left(R(T_{m-1}, T_m) - K_R\right)$. In this case, we have a genuine forward rate which can be mapped in the general definition. In particular, we have $T^s=T_{m-1}$, $T^f = T_m$ and $T^p = T_m$.  We then identify this forward rate as a general forward in currency $k_2=k_0$ with collateralization in currency $k_3=k_0$, by setting $I^{k_0,k_0,k_0}_t(T_{m-1},T_m ,T_{m}) = R_m(t)$. Equation \eqref{eq:genForwardRate} takes the form
    \begin{align*}
   I^{k_0,k_0,k_0}_t(T_{m-1},T_m ,T_{m}) =\frac{\Excond{\QQ^{k_0}}{\frac{B^{c,k_0}_t}{B^{c,k_0}_{T_m}}I^{k_0}_{T_m}(T_{m-1}, T_m, T_m)}{\cG_t}}{\Excond{\QQ^{k_0}}{\frac{B^{c,k_0}_t}{B^{c,k_0}_{T_m}}}{\cG_t}}=\frac{1}{\delta_m}\left(\frac{B^{k_0,k_0}(t,T_{m-1})}{B^{k_0,k_0}(t,T_m)}-1\right).
\end{align*}
It is clear that $R_m(T_{m-1})=F(T_{m-1},T_m)$. 

\subsubsection{Forward-looking forward rate}
The forward-looking forward rate, $F_m(t)$ in the notation of \cite{LyasMer2019}, is the $t$-time fixed fair value $K_F$ of a FRA with payoff $\delta_m\left(F(T_{m-1}, T_m) - K_F\right)$. Also in this case we have a genuine forward rate which can be mapped in our general definition by setting $T^s=T_{m-1}$, $T^f = T_{m-1}$ and $T^p = T_m$. We then identify this forward rate as a general forward in currency $k_2=k_0$ with collateralization in currency $k_3=k_0$, by setting $I^{k_0,k_0,k_0}_t(T_{m-1},T_{m-1},T_m)=F_m(t)$. Equation \eqref{eq:genForwardRate} takes then the form
\begin{align*}
I^{k_0,k_0,k_0}_t(T_{m-1},T_{m-1},T_m) =\frac{\Excond{\QQ^{k_0}}{\frac{B^{c,k_0}_t}{B^{c,k_0}_{T_m}}I^{k_0}_{T_{m-1}}(T_{m-1}, T_m, T_m)}{\cG_t}}{\Excond{\QQ^{k_0}}{\frac{B^{c,k_0}_t}{B^{c,k_0}_{T_m}}}{\cG_t}}
=\frac{1}{\delta_m}\left(\frac{B^{k_0,k_0}(t,T_{m-1})}{B^{k_0,k_0}(t,T_m)}-1\right).
\end{align*}
As already observed by \cite{LyasMer2019}, for $t>T_{m-1}$, we have $F_m(t)=F\left(T_{m-1}, T_m\right)$. Moreover, for $t \leq T_{m-1}$, $R_m(t)=F_m(t)$, $d\PP\otimes dt$-a.s., and for $t=T_{m-1}$, we have $R_m\left(T_{m-1}\right)=F_m\left(T_{m-1}\right)=F\left(T_{m-1}, T_m\right)$  .

\subsubsection{Forward-looking inter-bank offered rate}
An example of forward is a contract written on a forward-looking inter-bank offered rate (IBOR), such as LIBOR, EURIBOR, TIBOR or AMERIBOR. We denote by $\mathcal{I}^{k_0}(T^s,T^p)$ the value of the IBOR index for the currency $k_0$. The IBOR index is fixed at the beginning of the period, namely $T^f = T^s$, hence for $t=T^s$, the index $\mathcal{I}_{T^s}^{k_0}(T^s,T^p)$ is $\cG_{T^s}$-measurable. For $k_2=k_0$, we then set $T^f = T^s = T_{m-1}$ and $T^p = T_m$, and equation \eqref{eq:genForwardRate} takes the form
    \begin{align*}
    \begin{aligned}
   I^{k_0,k_0,k_0}_t(T_{m-1},T_{m-1},T_m) &=\frac{\Excond{\QQ^{k_0}}{\frac{B^{c,k_0}_t}{B^{c,k_0}_{T_m}}\mathcal{I}^{k_0}_{T_{m-1}}(T_{m-1},T_m)}{\cG_t}}{\Excond{\QQ^{k_0}}{\frac{B^{c,k_0}_t}{B^{c,k_0}_{T_m}}}{\cG_t}}\\
   &=\Excond{\QQ^{T_m,k_0,k_0}}{\mathcal{I}^{k_0}_{T_{m-1}}(T_{m-1},T_m)}{\cG_t}=:\mathcal{I}^{k_0,k_0}_t(T_{m-1},T_{m-1},T_m),
   \end{aligned}
   \end{align*}
for $m=1, \dots, M$, which serves a definition for the IBOR forward rate for currency $k_0$ collateralized in units of currency $k_0$. This coincides with the definition of forward LIBOR rate originally introduced in \cite{mer09} and then employed in the literature on interest rate modeling in the multiple curve framework.

\subsubsection{Commodity forwards}
Definition \ref{def:abstractForwardIndex} covers also examples in the commodity markets. Let $I^{k_0}(T_1, T_2)$ be an index whose values depend on an underlying process observed over the interval $[T_1, T_2]$ with $T_1\le T_2$, hence in this case $T^s=T_1$ and $T^f = T^p = T_2$. The underlying process could be, e.g., the spot price of electricity or a temperature index, such as the cumulative average temperature (CAT) index. In the notation of \cite{benth2008}, we would have $I^{k_0}(T_1, T_2) = \frac{1}{T_2-T_1}\int_{T_1}^{T_2}\mathcal{S}_udu$ for the electricity index, and $I^{k_0}(T_1, T_2) = \int_{T_1}^{T_2}\mathcal{T}_udu$ for the temperature index, with $\{\mathcal{S}_u\}_{u\ge 0}$ and $\{\mathcal{T}_u\}_{u\ge 0}$, respectively, the spot price of electricity and the instantaneous temperature. For $k_2=k_3 = k_0$, the forward written on $I^{k_0}(T_1, T_2)$ is then, in the notation of \cite{benth2008}, $I_t^{k_0, k_0, k_0}(T_1, T_2, T_2)=F(t; T_1, T_2)$. Equation \eqref{eq:genForwardRate} becomes
\begin{align*}
I_t^{k_0, k_0, k_0}(T_1, T_2, T_2)=\frac{\Excond{\QQ^{k_0}}{\frac{B^{c,k_0}_{t}}{B^{c,k_0}_{T_2}}I^{k_0}_{T_2}(T_1, T_2)}{\cG_{t}}}{\Excond{\QQ^{k_0}}{\frac{B^{c,k_0}_t}{B^{c,k_0}_{T_2}}}{\cG_{t}}} = \Excond{\QQ^{T_2, k_0, k_0}}{I^{k_0}_{T_2}(T_1, T_2)}{\cG_t}.
\end{align*}
This is a $\cG_{T_2}$-measurable contract which is referred to as a forward with delivery period $[T_1, T_2]$. Here $\QQ^{T_2, k_0, k_0}$ is the pricing measure and $I_t^{k_0, k_0, k_0}(T_1, T_2, T_2)$ is a $(\QQ^{T_2, k_0, k_0}, \GG)$-martingale. This is a typical kind of contracts in the electricity markets where the underlying (electricity) must be delivered over a period of time. For those types of commodities with instantaneous delivery, we have $T_1 = T_2$ instead, and we may simply write $I^{k_0}(T_1) = I^{k_0}(T_1, T_1)$ for the index and $I_t^{k_0, k_0, k_0}(T_1) = I_t^{k_0, k_0, k_0}(T_1, T_1, T_1) = \Excond{\QQ^{T_1, k_0, k_0}}{I^{k_0}_{T_1}(T_1)}{\cG_t}$ for the forward contract.

\subsection{HJM framework for abstract indices} \label{sec:abstractIndeces}
The aim of the present section is to study an HJM-type of framework for forward contracts on the abstract index from Definition \ref{def:abstractForwardIndex}. As for discount curves, we can choose between modeling directly the forward contracts, or we can introduce appropriate spread models. We shall follow the approach of \cite{Cuchiero2016}, where the authors modelled the multiplicative spreads between LIBOR FRA rates and OIS FRA rates. In particular, we shall extend their approach in at least two directions. In fact, OIS FRA rates correspond to the forwards on the future performance of the collateral account. However, we have seen that in our setting we have multiple collateral accounts corresponding to multiple collateral currencies. Hence, on the one hand, we will allow for multiple collateral currencies. On the other hand, we will work in a general setting allowing for both forward- and backward-looking rates, namely we will extend the definition of OIS FRA rates to include also backward-looking rates.  

Before doing that, let us discuss the role of the native currency of denomination $k_2$ for the spot index $\left(I^{k_2}_t(t-\delta^f, t, t+\delta^p)\right)_{t\ge \delta^f}$ which appears in Definition \ref{def:abstractForwardIndex}. Starting from equation \eqref{eq:genForwardRate} and switching to the $\QQ^{k_2}$ spot measure as in Definition \ref{def:foreignMeasure}, we obtain that
\begin{align}
\notag 
I^{k_0,k_2,k_3}_t(T-\delta^f, T, T+\delta^p)&=\frac{\frac{\mathcal{X}^{k_0,k_2}_{t}B^{k_2}_{t}}{B^{k_0}_{t}}\Excond{\QQ^{k_2}}{\frac{B^{k_0}_{T+\delta^p}}{\mathcal{X}^{k_0,k_2}_{T+\delta^p}B^{k_2}_{T+\delta^p}}\frac{B^{c,k_0,k_3}_t}{B^{c,k_0,k_3}_{T+\delta^p}}I^{k_2}_{T}(T-\delta^f, T, T+\delta^p)\mathcal{X}^{k_0,k_2}_{T+\delta^p}}{\cG_t}}{\frac{\mathcal{X}^{k_0,k_2}_{t}B^{k_2}_{t}}{B^{k_0}_{t}}\Excond{\QQ^{k_2}}{\frac{B^{k_0}_{T+\delta^p}}{\mathcal{X}^{k_0,k_2}_{T+\delta^p}B^{k_2}_{T+\delta^p}}\frac{B^{c,k_0,k_3}_t}{B^{c,k_0,k_3}_{T+\delta^p}}}{\cG_t}}\\
&=\frac{\mathcal{X}^{k_0,k_2}_{t}\Excond{\QQ^{k_2}}{\frac{B^{c,k_2,k_3}_t}{B^{c,k_2,k_3}_{T+\delta^p}}I^{k_2}_{T}(T-\delta^f, T, T+\delta^p)}{\cG_t}}{\Excond{\QQ^{k_2}}{\frac{\mathcal{X}^{k_0,k_2}_{t}B^{c,k_2,k_3}_t}{\mathcal{X}^{k_0,k_2}_{T+\delta^p}B^{c,k_2,k_3}_{T+\delta^p}}}{\cG_t}},\label{eq:changeI}
\end{align}
where we used that $\frac{B^{c,k_0,k_3}B^{k_2}}{B^{k_0}} = B^{c,k_2,k_3}$ due to the relations
\begin{equation*}
    r^{c, k_0}+ q^{k_0, k_3} + r^{k_2} - r^{k_0} = r^{k_2}-r^{k_3}+r^{c, k_3} = r^{c, k_2}+ q^{k_2, k_3}.
\end{equation*}
Equation \eqref{eq:changeI} shows that we can conveniently postulate a model for the generic forward under the native currency of denomination of the index $k_2$ and then perform a measure change to recover the formulation of Definition \ref{def:abstractForwardIndex}. For this reason, we will limit ourselves to model $I^{k_0,k_2, k_3}(T-\delta^f, T, T+\delta^p)$ for $k_2=k_0$, which will allow us to further simplify the notation and write $I^{k_0,k_3}(T-\delta^f, T, T+\delta^p):=I^{k_0,k_0, k_3}(T-\delta^f, T, T+\delta^p)$. 

Our proposed generalization of OIS FRA rate, or, equivalently, of forward performance of the collateral rate, is the following: 
\begin{definition}\label{def:simpleFwdColl}
For any $1\leq k_0,k_3\leq L$, any $\delta^f, \delta^p\ge 0$ and $T\ge \delta^f$, we define the $k_0$-$k_3$ \emph{simple forward collateral rate} $\left(I^{k_0,k_3,D}_t(T-\delta^f, T,T+\delta^p)\right)_{\delta^f\le t\le T+\delta^p}$ related to the term structure of discount factors $\{(B^{k_0,k_3}(t,\tau))_{t\in[0,\tau]}, \ \tau\geq 0\}$ by
\begin{align*}
   I^{k_0,k_3,D}_t&(T-\delta^f, T,T+\delta^p) \\&:=
   \begin{cases}
        \frac{1}{\delta^f}\left(\frac{1}{B^{k_0,k_3}(t,T+\delta^p)}\EE^{\QQ^{k_0}}\Bigl[\Bigl.e^{-\int_t^{T-\delta^f}r_u^{c, k_0, k_3}du-\int_{T}^{T+\delta^p}r_u^{c, k_0, k_3}du
}\Bigr|\cG_{t}\Bigr]-1\right), & \delta^f \le t\le T-\delta^f,\\
       \frac{1}{\delta^f}\left(\frac{B_t^{c, k_0, k_3}}{B_{T-\delta^f}^{c, k_0, k_3}}\frac{1}{B^{k_0,k_3}(t,T+\delta^p)}\EE^{\QQ^{k_0}}\Bigl[\Bigl.e^{-\int_{T}^{T+\delta^p}r_u^{c, k_0, k_3}du
}\Bigr|\cG_{t}\Bigr]-1\right), & T-\delta^f < t \le  T,\\
       \frac{1}{\delta^f}\left(\frac{B_{T}^{c, k_0, k_3}}{B_{T-\delta^f}^{c, k_0, k_3}}-1\right), & T< t\le T+\delta^p.
   \end{cases}
\end{align*}
\end{definition}

The definition above is very general since it does not restrict the fixing time $T$ to coincide with the start of the period or with the payment date. 
In particular, this latter case is obtained by letting $\delta^p=0$. This leads to
\begin{equation}\label{eq:ID0}
   I^{k_0,k_3,D}_t(T-\delta^f, T,T) =
   \begin{cases}
        \frac{1}{\delta^f}\left(\frac{B^{k_0,k_3}(t,T-\delta^f)}{B^{k_0,k_3}(t,T)}-1\right), & \delta^f \le t\le  T-\delta^f,\\
       \frac{1}{\delta^f}\left(\frac{B_t^{c, k_0, k_3}}{B_{T-\delta^f}^{c, k_0, k_3}} \frac{1}{B^{k_0,k_3}(t,T)}-1\right), & T-\delta^f < t \le T.
   \end{cases}
\end{equation}
Notice that, in the single-currency setting, for $\delta^f \le t\le  T-\delta^f$,  this corresponds to the OIS FRA rate of \cite{Cuchiero2016} with $k_3 = k_0$, namely $I^{k_0,k_0,D}_t(T-  \delta^f,T, T) = L^D_t(T-\delta^f,T)$ in the notation of \cite{Cuchiero2016}. Moreover, in Section \ref{sec:forCollDiscCurv} we derived the HJM dynamics of $B^{k_0,k_3}(t,\cdot)$ as a consequence of the frameworks postulated for $B^{k_0,k_0}(t,\cdot)$ and $Q^{k_0,k_3}(t,\cdot)$. 
It is then immediate to link the $k_0$-$k_3$ simple forward collateral rate with the $k_0$-$k_3$ cross-currency spread bond since, by definition, we have
\begin{equation*}
   I^{k_0,k_3,D}_t(T-\delta^f, T,T) =
   \begin{cases}
        \frac{1}{\delta^f}\left(\frac{B^{k_0,k_0}(t,T-\delta^f)Q^{k_0,k_3}(t,T-\delta^f)}{B^{k_0,k_0}(t,T)Q^{k_0,k_3}(t,T)}-1\right), & \delta^f \le t\le  T-\delta^f,\\
       \frac{1}{\delta^f}\left(\frac{B_t^{c, k_0, k_0}Q_t^{c, k_0, k_3}}{B_{T-\delta^f}^{c, k_0, k_0}Q_{T-\delta^f}^{c, k_0, k_3}}\frac{1}{B^{k_0,k_0}(t,T)Q^{k_0,k_3}(t,T)}-1\right), & T-\delta^f < t \le T.
   \end{cases}
\end{equation*} 
We then see that the dynamics model for $I^{k_0,k_3,D}_t(T-\delta^f, T,T)$ (but also the one for $I^{k_0,k_3,D}_t(T-\delta^f, T,T+\delta^p)$) is fully characterized by the HJM models studied in Section \ref{sec:HJMs}.  

The next step is to introduce appropriate HJM frameworks for the spreads with respect to the forward of a generic index. 
It is obvious that for modelling the multiplicative spread between the forward $I^{k_0, k_3}$ on a generic index and the discount curve $I^{k_0, k_3, D}$, the start of the period for $I^{k_0, k_3}$ must coincide with the start of the period for $I^{k_0, k_3, D}$, say $T-\delta^f$ for some $T\ge \delta^f\ge0$. Similarly, the payment date for $I^{k_0, k_3}$ must coincide with the payment date for $I^{k_0, k_3, D}$, say $T+\delta^p$ for a certain $\delta^p\ge0$. However, we observe at this point that for any fixed start of period $T-\delta^f$ and any fixed payment date $T+\delta^p$, the two quantities $I^{k_0, k_3}(T-\delta^f, \cdot, T+\delta^p)$ and $I^{k_0, k_3, D}(T-\delta^f, \cdot, T+\delta^p)$ may have different fixing dates, both varying between $T-\delta^f$ and $T+\delta^p$. As this would lead to work with four different date indices, we shall let free the fixing date in $I^{k_0, k_3}(T-\delta^f, \cdot, T+\delta^p)$, namely we set it to $T$, and we shall set the fixing date in $I^{k_0, k_3, D}(T-\delta^f, \cdot, T+\delta^p)$ to coincide with the payment date, namely $T+\delta^p$, similarly as in \eqref{eq:ID0}. Under this simplifying assumption for the discount curve index, we generalize \cite{Cuchiero2016} by introducing the following.
\begin{definition}\label{def:multSpreadIndex}
Fix $1\leq k_0,k_3\leq L$ and $\delta^f, \delta^p\ge 0$, and let $\delta:=\delta^f+\delta^p$. For every $T\ge \delta^f$, the \emph{forward index spread} at time $\delta^f \le t\le T+\delta^p$ for the time period $[T-\delta^f,T+\delta^p]$ over the index $I^{k_0}(T-\delta^f,  T, T+\delta^p)$ collateralized according to $B^{c,k_0,k_3}$ and paid at time $T+\delta^p$ is defined by
\begin{equation}\label{def:forwardspread}
\mathcal{S}^{k_0,k_3}_t(T-\delta^f, T,T + \delta^{p}):=\frac{1+\delta I_t^{k_0,k_3}(T-\delta^f, T,T + \delta^{p})}{1+\delta I_t^{k_0,k_3, D}(T-\delta^f, T+ \delta^{p},T+ \delta^{p})}.
\end{equation}
\end{definition}
Notice that, strictly speaking, the quantities $\delta^f$ and $\delta^p$ in \eqref{def:forwardspread} refer to the fixing and payment adjustments for the numerator $I_t^{k_0,k_3}(T-\delta^f, T,T + \delta^{p})$. Indeed, the fixing and payment adjustments for the denominator $I_t^{k_0,k_3, D}(T-\delta^f, T+ \delta^{p},T+ \delta^{p})$ are $\delta = \delta^f+\delta^p$ and $0$, respectively. Hence \eqref{eq:ID0} in this case becomes
\begin{equation*}
   I_t^{k_0,k_3, D}(T-\delta^f, T+ \delta^{p},T+ \delta^{p}) =
   \begin{cases}
        \frac{1}{\delta}\left(\frac{B^{k_0,k_3}(t,T-\delta^f)}{B^{k_0,k_3}(t,T+\delta^p)}-1\right), & \delta^f \le t\le  T-\delta^f,\\
       \frac{1}{\delta}\left(\frac{B_t^{c, k_0, k_3}}{B_{T-\delta^f}^{c, k_0, k_3}} \frac{1}{B^{k_0,k_3}(t,T+\delta^p )}-1\right), & T-\delta^f < t \le T+\delta^p,
   \end{cases}
\end{equation*}
and the forward spread in \eqref{def:forwardspread} can be rewritten by
\begin{equation}\label{eq:spreadexplicit}
\begin{aligned}
   \mathcal{S}^{k_0,k_3}_t&(T-\delta^f, T,T + \delta^{p}) \\&=
   \begin{cases}
        \left(1+\delta I_t^{k_0,k_3}(T-\delta^f, T,T + \delta^{p})\right) \frac{B^{k_0,k_3}(t,T+\delta^p)}{B^{k_0,k_3}(t,T-\delta^f)}, & \delta^f \le t\le  T-\delta^f,\\
       \left(1+\delta I_t^{k_0,k_3}(T-\delta^f, T,T + \delta^{p})\right) \frac{B_{T-\delta^f}^{c, k_0, k_3}}{B_t^{c, k_0, k_3}} B^{k_0,k_3}(t,T+\delta^p ), & T-\delta^f < t \le T,\\
       \left(1+\delta I_T^{k_0,k_3}(T-\delta^f, T,T + \delta^{p})\right) \frac{B_{T-\delta^f}^{c, k_0, k_3}}{B_t^{c, k_0, k_3}} B^{k_0,k_3}(t,T+\delta^p ), & T< t \le T+\delta^p .
   \end{cases}
\end{aligned}
\end{equation}
This definition of spread, explicitly featuring the $k_0$-$k_3$ cross-currency spread bonds, highlights the role that cross-currency convexity adjustments will play in the dynamics of the generalized forward.

In particular, we observe that for $\delta^f \le t\leq T$, both numerator and denominator in the definition of spread \eqref{def:forwardspread} are random quantities. However, for $T-\delta^f<t\leq T$ we are in the monitoring period of both numerator and denominator, hence we expect the volatility to be decreasing. Finally, for $T<t\leq T+\delta^p$ the numerator is no longer random since it has been fixed in $t=T$, while we still observe fluctuations of the denominator with decreasing volatility up to time $t=T+\delta^p$, where the volatility becomes zero. From \eqref{eq:spreadexplicit}, we then deduce that for $T<t\leq T+\delta^p$ the multiplicative spread is of the form
\begin{equation*}
    \mathcal{S}^{k_0,k_3}_t(T-\delta^f, T,T + \delta^{p}) = \mathcal{C}\,\frac{B^{k_0,k_3}(t,T+\delta^p )}{B_t^{c, k_0, k_3}},
\end{equation*}
with $\mathcal{C}:= B_{T-\delta^f}^{c, k_0, k_3}\left(1+\delta I_T^{k_0,k_3}(T-\delta^f, T,T + \delta^{p})\right)\in \RR$ being a $\cG_T$-measurable random variable. Hence for $T<t\leq T+\delta^p$ the model for the multiplicative spread is given by the HJM framework for the foreign-collateral discount curves in Section \ref{sec:forCollDiscCurv}, since the only fluctuating components in the spread are $B^{k_0,k_3}(\cdot,T+\delta^p)$ and $B_\cdot^{c, k_0, k_3}$.

Notice further that for $T=t$, we have that $I^{k_0, k_3}_t(t-\delta^f, t, t+\delta^p)=I_t^{k_0}(t-\delta^f, t,t + \delta^{p})$, hence from \eqref{def:forwardspread} the spot index spread is given by
\begin{equation*}
\mathcal{S}^{k_0, k_3}_t(t-\delta^f, t,t + \delta^{p})=\frac{1+\delta I_t^{k_0}(t-\delta^f, t,t + \delta^{p})}{1+\delta I_t^{k_0, k_3, D}(t-\delta^f, t+ \delta^{p},t+ \delta^{p})}, \quad \mbox{for any } t\ge \delta^f.
\end{equation*}

\begin{remark}
    In some applications, it is desirable to guarantee that the forward index spread is larger than one. These situations arise for example when modeling a forward risky inter-bank rate. This property can be achieved by imposing restrictions on the dynamics of the spread as in \cite{Cuchiero2016} and \cite{Cuchiero2019}. Alternatively, one can define the forward index spread by first modelling
    \begin{align*}
    \begin{aligned}
        1+\delta\tilde{\mathcal{S}}^{k_0,k_3}_t(T-\delta^f, T,T+\delta^{p})&:=\frac{1+\delta I_t^{k_0,k_3}(T-\delta^f, T,T + \delta^{p})}{1+\delta I_t^{k_0,k_3, D}(T-\delta^f, T+ \delta^{p},T+ \delta^{p})},
    \end{aligned}
    \end{align*}
    so that $\mathcal{S}^{k_0,k_3}_t(T-\delta^f, T,T + \delta^{p})=1+\delta\tilde{\mathcal{S}}^{k_0,k_3}_t(T-\delta^f, T,T+\delta^{p})$ is larger than one as soon as $\tilde{\mathcal{S}}^{k_0,k_3}_t(T-\delta^f, T,T+\delta^{p})$ is positive. The drawback of this approach is that it typically leads to more complicated pricing formulas. For example, for a caplet written on the IBOR rate, this approach leads in general to the pricing formula of a two-dimensional basket option.
\end{remark}

The next lemma characterizes the martingale property of the forward index spreads. 
\begin{lemma}\label{lem:bayesforI}
    Let $1\leq k_0,k_3\leq L$ and  $\delta^f, \delta^p\ge 0$. Then the forward index spread $(\mathcal{S}^{k_0,k_3}_t(T-\delta^f, T,T + \delta^{p}))_{t\in[\delta^f,T+\delta^p]}$ is a $(\QQ^{T-\delta^f,k_0,k_3},\GG)$-martingale for every $T\ge \delta^f$.    
\end{lemma}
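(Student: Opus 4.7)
The approach I would take is to apply Bayes's formula in order to translate the $\QQ^{T-\delta^f,k_0,k_3}$-martingale property into a $\QQ^{k_0}$-martingale property, since all the building blocks of the spread are naturally expressed under the spot pricing measure. Let me write
\[
D_t:=\left.\frac{d\QQ^{T-\delta^f,k_0,k_3}}{d\QQ^{k_0}}\right|_{\cG_t},\qquad c_0:=\frac{B^{c,k_0,k_3}_0}{B^{k_0,k_3}(0,T-\delta^f)}.
\]
From the definition of the extended forward measure (see \eqref{eq:QTk0k3}) together with the $(\QQ^{k_0},\GG)$-martingale property of $B^{k_0,k_3}(\cdot,T-\delta^f)/B^{c,k_0,k_3}$, the conditional density is $D_t = c_0\,B^{k_0,k_3}(t,T-\delta^f)/B^{c,k_0,k_3}_t$ for $\delta^f\le t\le T-\delta^f$, and, since the full density is $\cG_{T-\delta^f}$-measurable, $D_t = c_0/B^{c,k_0,k_3}_{T-\delta^f}$ for $T-\delta^f<t\le T+\delta^p$.

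The main computation to carry out is the piecewise verification that, in each of the three regimes appearing in \eqref{eq:spreadexplicit}, the product $\mathcal{S}^{k_0,k_3}_t(T-\delta^f,T,T+\delta^p)\,D_t$ collapses to one and the same expression:
\[
\mathcal{S}^{k_0,k_3}_t(T-\delta^f,T,T+\delta^p)\,D_t \;=\; c_0\bigl(1+\delta\,I^{k_0,k_3}_t(T-\delta^f,T,T+\delta^p)\bigr)\,\frac{B^{k_0,k_3}(t,T+\delta^p)}{B^{c,k_0,k_3}_t},
\]
with the convention $I^{k_0,k_3}_t=I^{k_0,k_3}_T$ for $t\ge T$, which is consistent with \eqref{eq:fwdForColl} (for $k_2=k_0$) since $I^{k_0}_T$ is $\cG_T$-measurable. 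In the first regime $\delta^f\le t\le T-\delta^f$, the factor $B^{k_0,k_3}(t,T-\delta^f)$ in $D_t$ cancels the denominator of $\mathcal{S}^{k_0,k_3}_t$; in the second and third regimes, the factor $B^{c,k_0,k_3}_{T-\delta^f}$ in $D_t$ cancels the corresponding factor in $\mathcal{S}^{k_0,k_3}_t$. Matching at the boundary times $t=T-\delta^f$ and $t=T$ is automatic from this unified expression.

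Once this identity is in hand, it remains to verify that the right-hand side is a $(\QQ^{k_0},\GG)$-martingale on $[\delta^f,T+\delta^p]$, which I would do by splitting the bracketed sum into two terms. The first term, $B^{k_0,k_3}(t,T+\delta^p)/B^{c,k_0,k_3}_t$, is a $(\QQ^{k_0},\GG)$-martingale by Assumption \ref{assu:assumptionk0k3ZCBs}. The second term, $I^{k_0,k_3}_t\,B^{k_0,k_3}(t,T+\delta^p)/B^{c,k_0,k_3}_t$, is (up to the multiplicative constant $1/c_0'$ with $c_0':=B^{c,k_0,k_3}_0/B^{k_0,k_3}(0,T+\delta^p)$) the product of the $(\QQ^{T+\delta^p,k_0,k_3},\GG)$-martingale $I^{k_0,k_3}_t$ granted by \eqref{eq:fwdForColl} with the Radon--Nikodym density of $\QQ^{T+\delta^p,k_0,k_3}$ with respect to $\QQ^{k_0}$, so Bayes's formula yields the $(\QQ^{k_0},\GG)$-martingale property. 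Linearity concludes, and then a second application of Bayes's formula returns the martingale property of $\mathcal{S}^{k_0,k_3}_t(T-\delta^f,T,T+\delta^p)$ under $\QQ^{T-\delta^f,k_0,k_3}$.

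The only non-routine step is the algebraic check in the second paragraph; the rest is a standard change-of-numéraire argument for the forward measures introduced in Section \ref{sec:measureChanges}. I do not anticipate any genuine obstacle beyond book-keeping across the three regimes.
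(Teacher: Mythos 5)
Your proof is correct and follows essentially the same route as the paper: a Bayes change of measure to $\QQ^{k_0}$, the observation that in all three regimes the product $\mathcal{S}^{k_0,k_3}_t D_t$ collapses (up to a constant) to $\bigl(1+\delta I^{k_0,k_3}_t(T-\delta^f,T,T+\delta^p)\bigr)B^{k_0,k_3}(t,T+\delta^p)/B^{c,k_0,k_3}_t$, and the identification of this with the $(\QQ^{T+\delta^p,k_0,k_3},\GG)$-martingale $1+\delta I^{k_0,k_3}_t$ multiplied by the density of $\QQ^{T+\delta^p,k_0,k_3}$ with respect to $\QQ^{k_0}$. The paper merely packages the case distinction via the compact $t\land(T-\delta^f)$ notation and normalizes at $\delta^f$ rather than $0$, which are immaterial differences.
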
    
\begin{proof}
    The proof generalises Lemma 3.11 of \cite{Cuchiero2016}. 
    Fix any schedule $\delta^f, \delta^p\ge 0$.  
    We know from \eqref{eq:fwdForColl} that $({I}^{k_0,k_3}_t(T-\delta^f, T,T+\delta^p))_{t\in[\delta^f,T+\delta^p]}$ is a martingale under $\QQ^{T+\delta^p,k_0,k_3}$. 
    Moreover, using Bayes' formula, the spread process $(\mathcal{S}^{k_0,k_3}_t(T-\delta^f, T,T+\delta^p))_{t\in[\delta^f,T+\delta^p]}$ is a $(\QQ^{T-\delta^f,k_0,k_3},\GG)$-martingale if and only if the process
    \begin{align*}
        \mathcal{S}^{k_0,k_3}_t(T-\delta^f, T,T+\delta^p)\left.\frac{\partial \QQ^{T-\delta^f,k_0,k_3}}{\partial \QQ^{k_0}}\right|_{\cG_{t}}
    \end{align*}
    is a $(\QQ^{k_0},\GG)$-martingale. From Definition \ref{def:forwardmeasure}, we can compactly write that 
    \begin{equation*}
        \left.\frac{\partial \QQ^{T-\delta^f,k_0,k_3}}{\partial \QQ^{k_0}}\right|_{\cG_{t}}=\frac{B^{c,k_0,k_3}_{\delta^f}}{B^{c,k_0,k_3}_{t\land(T-\delta^f)}}\frac{B^{k_0,k_3}(t\land (T-\delta^f),T-\delta^f)}{B^{k_0,k_3}(\delta^f,T-\delta^f)}, \qquad \mbox{for all }\delta^f \le t\le T+\delta^p,
    \end{equation*}
    where $t\land (T-\delta^f)$ is the minimum between $t$ and $T-\delta^f$, and we consider $\delta^f$ as the start of the running time. Similarly, from \eqref{eq:spreadexplicit} we can rewrite compactly the spread as
    \begin{equation*}
   \mathcal{S}^{k_0,k_3}_t(T-\delta^f, T,T + \delta^{p}) =
   \left(1+\delta I_{t}^{k_0,k_3}(T-\delta^f, T,T + \delta^{p})\right) \frac{B_{t\land(T-\delta^f)}^{c, k_0, k_3}}{B_t^{c, k_0, k_3}}\frac{B^{k_0,k_3}(t,T+\delta^p)}{B^{k_0,k_3}(t\land (T-\delta^f),T-\delta^f)}.
\end{equation*}
    By combining the last three equations, we then obtain
    \begin{align*}
    \left(1+\delta I_t^{k_0,k_3}(T-\delta^f, T,T+\delta^p)\right)\frac{B^{c,k_0,k_3}_{\delta^f}B^{k_0,k_3}(t,T+\delta^p)}{B^{c,k_0,k_3}_tB^{k_0,k_3}(\delta^f,T+\delta^p)}\frac{B^{k_0,k_3}(\delta^f,T+\delta^p)}{B^{k_0,k_3}(\delta^f,T-\delta^f)},
    \end{align*}
    which is indeed a $(\QQ^{k_0},\GG)$-martingale because $(I_t^{k_0,k_3}(T-\delta^f, T,T+\delta^p))_{t\in[\delta^f,T+\delta^p]}$ is a $(\QQ^{T+\delta^p,k_0,k_3},\GG)$-martingale and $\left.\frac{\partial \QQ^{T+\delta^p,k_0,k_3}}{\partial \QQ^{k_0}}\right|_{\cG_t}=\frac{B^{c,k_0,k_3}_{\delta^f}B^{k_0,k_3}(t,T+\delta^p)}{B^{c,k_0,k_3}_tB^{k_0,k_3}(\delta^f,T+\delta^p)}$, for all $t\ge \delta^f$. 
\end{proof}
We point out that Lemma \ref{lem:bayesforI} generalises Lemma 3.11 of \cite{Cuchiero2016} in the sense that in our setting the fixing date $T$ may differ from the start-of-period date, $T-\delta^f$. Moreover, since we model the spread up to the payment date, $T+\delta^p$, we obtain a more general result, stating that the multiplicative forward index spread is a martingale under the forward measure $\QQ^{T-\delta^f,k_0,k_3}$ for any $\delta^f\le t\le T+\delta^p$. Notice, however, that for $t\ge T-\delta^f$ the Radon-Nikodym derivative $\left.\frac{\partial \QQ^{T-\delta^f,k_0,k_3}}{\partial \QQ^{k_0}}\right|_{\cG_{t}}$ is a known quantity, hence the two measures $\QQ^{T-\delta^f,k_0,k_3}$ and $\QQ^{k_0}$ are equivalent for any $t\ge T-\delta^f$ up to a multiplicative factor. This means that, in practice, the multiplicative spread is a $\QQ^{k_0}$-martingale after the start of the monitoring period, $T-\delta^f$.

We proceed now to introduce the modeling framework for the forward index spreads. The following definition captures the complexity given by the fact that the forward on the index, for any given schedule $\delta^f, \delta^p\ge 0$ and any currency $k_0$, can be collateralized in any currency $k_3=1,\ldots, L$.

\begin{definition}\label{def:fwdIndSprHJM}
Let $1\leq k_0\leq L$ and $\delta^f, \delta^p\ge 0$ be fixed.  We call a model consisting of
\begin{enumerate}
    \item[I.] An extended bond-price model for the currency $k_0$
    \begin{align*}
         \Bigl(&X,Q^{k_0,1},\ldots,Q^{k_0,k_0-1},Q^{k_0,k_0+1} \ldots,Q^{k_0,L},B^{c,k_0},f^{c,k_0}_{\delta^f}, q^{k_0,1}_{\delta^f}, \dots, q^{k_0,k_0-1}_{\delta^f}, q^{k_0,k_0+1}_{\delta^f}, \dots, q^{k_0,L}_{\delta^f},\\
         &\alpha^{c,k_0}, \alpha^{k_0,1}, \dots, \alpha^{k_0,k_0-1}, \alpha^{k_0,k_0+1}, \dots, \alpha^{k_0,L},\sigma^{c,k_0},\sigma^{k_0,1}, \dots, \sigma^{k_0,k_0-1}, \sigma^{k_0,k_0+1}, \dots, \sigma^{k_0,L}\Bigr)
    \end{align*} 
    in the sense of Definition \ref{def:extendedbondpricemodel};
    \item[II.] The $\RR^L$-valued It\^o semimartingale $\left(\mathcal{S}^{k_0, 1}_t(t-\delta^f,t, t+\delta^p), \dots, \mathcal{S}^{k_0, L}_t(t-\delta^f,t, t+\delta^p)\right)_{t\ge \delta^f}$;
    \item[III.] The functions $h^{\delta^f, \delta^p,k_0,1}_{\delta^f},\ldots, h^{\delta^f, \delta^p, k_0,L}_{\delta^f}$;
    \item[IV.] The processes 
    \begin{equation*}
        \alpha^{\delta^f, \delta^p,k_0,1},\ldots, \alpha^{\delta^f, \delta^p,k_0,L},
    \end{equation*}
    and
    \begin{equation*}
        \sigma^{\delta^f, \delta^p,k_0,1},\ldots, \sigma^{\delta^f, \delta^p,k_0,L};
    \end{equation*}
\end{enumerate}
a \emph{multiplicative spread model} for the currency $k_0$, 
if  for every $1\leq k_3\leq L$ the following conditions are satisfied:
\begin{enumerate}
    \item The spot spread index $\mathcal{S}^{k_0, k_3}_t(t-\delta^f,t, t+\delta^p)$ is absolutely continuous with respect to the Lebsegue measure and satisfies $\mathcal{S}^{k_0, k_3}_t(t-\delta^f,t, t+\delta^p)=e^{-\int_{\delta^f}^t h^{\delta^f, \delta^p,k_0,k_3}_s ds}$ with multiplicative spread short rate $h^{\delta^f, \delta^p,k_0,k_3} = (h^{\delta^f, \delta^p,k_0,k_3}_t )_{t\ge \delta^f}$;
    \item The triple $(h^{\delta^f, \delta^p, k_0,k_3}_{\delta^f}, \alpha^{\delta^f, \delta^p, k_0,k_3}, \sigma^{\delta^f, \delta^p,k_0,k_3})$  satisfies the HJM-basic condition in Definition \ref{def:basiccond};
    \item For every $\tau\ge \delta^f$, the instantaneous multiplicative spread forward rate $(h^{\delta^f, \delta^p, k_0,k_3}_t(\tau))_{t\in[\delta^f,\tau]}$ is given by
    \begin{equation}\label{eq:hjmspread}
        h^{\delta^f, \delta^p, k_0,k_3}_t(\tau)=h^{\delta^f, \delta^p, k_0,k_3}_{\delta^f}(\tau)+\int_{\delta^f}^t\alpha^{\delta^f, \delta^p,k_0,k_3}_s(\tau)ds+\int_{\delta^f}^t\sigma^{\delta^f, \delta^p,k_0,k_3}_s(\tau)dX_s;
    \end{equation}
    \item For every $T\ge \delta^f$, the instantaneous multiplicative spread forward rate satisfies   $$h^{\delta^f, \delta^p, k_0,k_3}_t(\tau) = f^{c, k_0,k_3}_t(\tau), \qquad \mbox{for every } T< t\le \tau\le T+\delta^p;$$
    \item The forward index spread $(\mathcal{S}^{k_0,k_3}_t(T-\delta^f,T, T+\delta^p))_{t\in[\delta^f,T+\delta^p]}$ satisfies
    \begin{equation}\label{eq:hjmspreadS}
        \mathcal{S}^{k_0,k_3}_t(T-\delta^f,T, T+\delta^p)=e^{-\int_{\delta^f}^t h^{\delta^f, \delta^p, k_0,k_3}_s ds-\int_t^{T+\delta^p} h^{\delta^f, \delta^p, k_0,k_3}_t(u)du}.
    \end{equation}
\end{enumerate}
\end{definition}

The next definition naturally collects the martingale conditions that are relevant in the current setting.
\begin{definition}
\label{def:spreadmartingale}
    Let $1\leq k_0\leq L$  and $\delta^f, \delta^p\ge 0$. We say that the multiplicative spread model for the  currency $k_0$ is \emph{risk neutral} if the following conditions hold:
    \begin{enumerate}
        \item The extended bond-price model for the currency $k_0$
    \begin{align*}
         \Bigl(&X,Q^{k_0,1},\ldots,Q^{k_0,k_0-1},Q^{k_0,k_0+1} \ldots,Q^{k_0,L},B^{c,k_0},f^{c,k_0}_{\delta^f}, q^{k_0,1}_{\delta^f}, \dots, q^{k_0,k_0-1}_{\delta^f}, q^{k_0,k_0+1}_{\delta^f}, \dots, q^{k_0,L}_{\delta^f},\\
         &\alpha^{c,k_0}, \alpha^{k_0,1}, \dots, \alpha^{k_0,k_0-1}, \alpha^{k_0,k_0+1}, \dots, \alpha^{k_0,L},\sigma^{c,k_0},\sigma^{k_0,1}, \dots, \sigma^{k_0,k_0-1}, \sigma^{k_0,k_0+1}, \dots, \sigma^{k_0,L}\Bigr)
    \end{align*} is risk neutral in the sense of Definition \ref{def:combinedmartingale};
        \item For each $1\leq k_3 \leq L$, the forward index spreads
        \begin{align*}
        \left\{\left(\mathcal{S}^{k_0,k_3}_t(T-\delta^f, T,T + \delta^{p}\right)_{t\in[\delta^f,T+\delta^p]},\ T\geq \delta^f\right\}
    \end{align*}
    are $(\QQ^{T-\delta^f,k_0,k_3},\GG)$-martingales.
    \end{enumerate}
\end{definition}

For every $\tau\ge \delta^f$, we further define
\begin{equation*}
    \Sigma^{\delta^f, \delta^p, k_0, k_3}_t(\tau):= \int_t^{\tau} \sigma_t^{\delta^f, \delta^p, k_0, k_3}(u)du,
\end{equation*}
and state the following result characterizing condition (ii) of Definition \ref{def:spreadmartingale}.
\begin{theorem}\label{prop:driftcondspreadmodel_h}
    Let $1\leq k_0\leq L$ and $\delta^f, \delta^p\ge 0$. For a multiplicative spread model for the currency $k_0$, the followings are equivalent:
    \begin{enumerate}
        \item The multiplicative spread model satisfies condition \emph{(ii)} of Definition \ref{def:spreadmartingale};
        \item For every $T\ge \delta^f$ and every $k_3$, the conditional expectation hypothesis holds, namely
        \begin{equation*}
            \Excond{\QQ^{T-\delta^f, k_0, k_3}}{\mathcal{S}^{k_0,k_3}_{T}(T-\delta^f,T, T+\delta^p)}{\cG_t}=e^{-\int_{\delta^f}^th^{\delta^f, \delta^p, k_0,k_3}_s ds-\int_t^{T+\delta^p}h^{\delta^f, \delta^p, k_0,k_3}_t(u) du},
        \end{equation*}
        for every $t\in [\delta^f, T+\delta^p]$;
        \item For every $T\ge \delta^f$ and every $k_3$, $-\Sigma^{\delta^f, \delta^p, k_0, k_3}(T+\delta^p)\in\mathcal{U}^{\QQ^{k_0},X}$ and $$-\left(\Sigma^{\delta^f, \delta^p, k_0, k_3}(T+\delta^p)+\Sigma^{c, k_0}(T-\delta^f)+\Sigma^{k_0, k_3}(T-\delta^f)\right)\in\mathcal{U}^{\QQ^{k_0},X},$$
        and the following conditions are satisfied:
    \begin{enumerate}
        \item The process
        \begin{multline}\label{eq:martexp_h}
            \Biggl(\exp\left\{-\int_{\delta^f}^t\left(\Sigma^{\delta^f, \delta^p, k_0, k_3}_s(T+\delta^p)+\Sigma^{c, k_0}_{s\land (T-\delta^f)}(T-\delta^f)+\Sigma^{k_0,k_3}_{s\land (T-\delta^f)}(T-\delta^f)\right)dX_s\Biggr.\right.\\\left.\Biggl.-\int_{\delta^f}^t\Psi^{\QQ^{k_0},X}_s(-\Sigma^{\delta^f, \delta^p, k_0, k_3}_s(T+\delta^p)-\Sigma^{c, k_0}_{s\land (T-\delta^f)}(T-\delta^f)-\Sigma^{k_0,k_3}_{s\land (T-\delta^f)}(T-\delta^f))ds\right\}\Biggr)_{t\in[\delta^f,T+\delta^p]}
        \end{multline}
        is a $\QQ^{k_0}$-martingale;
        \item The consistency condition holds, meaning that
        \begin{align}\label{eq:consistency_h}
            \Psi_t^{\QQ^{k_0},-\int_{\delta^f}^\cdot h^{\delta^f, \delta^p,k_0,k_3}_sds}(1)=-h^{\delta^f, \delta^p,k_0,k_3}_{t-}=-h^{\delta^f, \delta^p,k_0,k_3}_{t-}(t)
        \end{align}
        for all $t\geq \delta^f$;
        \item For every $T\ge \delta^f$ and every $k_3$, the HJM drift condition 
        \begin{equation}\label{eq:driftcondadeltak3k0}
        \begin{aligned}
            &\int_t^{T+\delta^p}\alpha^{\delta^f, \delta^p, k_0, k_3}_t(u)du\\&= \Psi^{\QQ^{k_0},X}_t(-\Sigma^{\delta^f, \delta^p, k_0, k_3}_t(T+\delta^p)-\Sigma^{c, k_0}_{t\land (T-\delta^f)}(T-\delta^f)-\Sigma^{k_0,k_3}_{t\land (T-\delta^f)}(T-\delta^f))\\&- \Psi^{\QQ^{k_0},X}_t(-\Sigma^{c,k_0}_{t\land (T-\delta^f)}(T-\delta^f)-\Sigma^{k_0, k_3}_{t\land (T-\delta^f)}(T-\delta^f))
        \end{aligned}
        \end{equation}
        holds for every $t\in [\delta^f, T+\delta^p]$.
    \end{enumerate}
    \end{enumerate}
\end{theorem}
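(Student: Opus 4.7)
The plan is to mimic the structure of the proof of Theorem \ref{prop:driftcondspreadmodel}, with one additional subtlety: the fixing date $T$ is now separated from the start of the monitoring period $T-\delta^f$ and the payment date $T+\delta^p$. The equivalence (i) $\Leftrightarrow$ (ii) will be essentially immediate: by the exponential representation \eqref{eq:hjmspreadS}, the conditional expectation hypothesis of (ii) is precisely the $(\QQ^{T-\delta^f,k_0,k_3},\GG)$-martingale property of the forward index spread up to the fixing date $T$. The martingale property for $t \in (T, T+\delta^p]$ is then inherited from the risk-neutrality of the extended bond-price model through condition (iv) of Definition \ref{def:fwdIndSprHJM}, which identifies the multiplicative spread short rate after fixing with the foreign-collateral forward rate.

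For (i) $\Rightarrow$ (iii), the first step will be to apply Lemma \ref{lem:bayesforI}, which recasts the $\QQ^{T-\delta^f,k_0,k_3}$-martingale condition as a $\QQ^{k_0}$-martingale condition on
$$M_t := \mathcal{S}^{k_0,k_3}_t(T-\delta^f, T, T+\delta^p)\,\left.\frac{\partial \QQ^{T-\delta^f,k_0,k_3}}{\partial \QQ^{k_0}}\right|_{\cG_t}.$$
Using Definition \ref{def:multSpreadIndex} together with the explicit form of $I^{k_0,k_3,D}$ collapses $M_t$, up to a $\cG_{\delta^f}$-measurable constant, to $\mathcal{S}^{k_0,k_3}_t(T-\delta^f, T, T+\delta^p)\,B^{k_0,k_3}(t\wedge(T-\delta^f), T-\delta^f)/B^{c,k_0,k_3}_t$. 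Combining \eqref{eq:hjmspreadS}, \eqref{eq:bondspread3} and \eqref{eq:Bck0k3} then gives $M_t = \mathrm{const}\cdot e^{R_t}$ for an explicit semimartingale $R_t$ involving the three time-$t$ integrals of $h^{\delta^f,\delta^p,k_0,k_3}_t(u)$, $f^{c,k_0}_t(u)$ and $q^{k_0,k_3}_t(u)$ plus the two short-rate integrals. The martingale property of $M_t$ is then equivalent to $1\in\cU^{\QQ^{k_0},R}$ and $\Psi^{\QQ^{k_0},R}_t(1)=0$.

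I will then apply the classical and stochastic Fubini theorems (justified by the integrability conditions in Definitions \ref{def:bondpricemodel}, \ref{def:extendedbondpricemodel} and \ref{def:fwdIndSprHJM}) to rewrite the inner integrals $\int_t^{T+\delta^p}h^{\delta^f,\delta^p,k_0,k_3}_t(u)du$ and $\int_{t\wedge (T-\delta^f)}^{T-\delta^f}(f^{c,k_0}_{t\wedge(T-\delta^f)}(u)+q^{k_0,k_3}_{t\wedge(T-\delta^f)}(u))du$ as ``initial value + drift + stochastic integral against $X$''. Then \cite[Lemma A.13]{KK13} delivers $\Psi^{\QQ^{k_0},R}_t(1)$ in L\'evy--Khintchine form involving $\Psi^{\QQ^{k_0},X}_t$ evaluated at $-\Sigma^{\delta^f,\delta^p,k_0,k_3}_t(T+\delta^p)-\Sigma^{c,k_0}_{t\wedge (T-\delta^f)}(T-\delta^f)-\Sigma^{k_0,k_3}_{t\wedge (T-\delta^f)}(T-\delta^f)$. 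Letting $T+\delta^p\to t$ annihilates all three $\Sigma$ terms and, together with the already established consistency conditions \eqref{eq:conscondftt} and \eqref{eq:consistency}, yields \eqref{eq:consistency_h}. Substituting \eqref{eq:consistency_h} back into $\Psi^{\QQ^{k_0},R}_t(1)=0$ and using the drift conditions \eqref{eq:HJMcondck0} and \eqref{eq:driftcondak3k0} produces the HJM drift condition \eqref{eq:driftcondadeltak3k0}. With these in hand, the exponential representation of $M_t$ is exactly \eqref{eq:martexp_h}, and the reverse implication (iii) $\Rightarrow$ (i) is obtained by running the chain of equalities backwards.

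The main obstacle will be correctly handling the piecewise nature of the spread documented in \eqref{eq:spreadexplicit}: in the three regimes $\{\delta^f\le t\le T-\delta^f\}$, $\{T-\delta^f < t \le T\}$ and $\{T < t \le T+\delta^p\}$, different components of $\mathcal{S}^{k_0,k_3}_t$ are frozen. This is exactly why \eqref{eq:driftcondadeltak3k0} features $\Sigma^{c,k_0}_{t\wedge (T-\delta^f)}$ and $\Sigma^{k_0,k_3}_{t\wedge (T-\delta^f)}$ rather than $\Sigma^{c,k_0}_t$ and $\Sigma^{k_0,k_3}_t$: the volatilities of the collateral and cross-currency basis contributions are effectively killed once the monitoring period starts. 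Condition (iv) of Definition \ref{def:fwdIndSprHJM}, identifying the multiplicative spread short rate with the foreign-collateral forward rate after the fixing date, is what makes the bookkeeping consistent across the post-fixing regime and prevents any double counting of diffusion and jump contributions to $M_t$.
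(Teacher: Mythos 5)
Your route is the paper's own: invoke Lemma \ref{lem:bayesforI} to turn condition (i) into a $\QQ^{k_0}$-martingale statement for the density-weighted spread, write that process as a constant times $e^{R}$, apply the classical and stochastic Fubini theorems and Lemma A.13 of \cite{KK13} to compute $\Psi^{\QQ^{k_0},R}(1)$, collapse the maturity to extract the consistency condition \eqref{eq:consistency_h}, substitute back using the bond-level drift condition \eqref{eq:driftcondak3k0f} to obtain \eqref{eq:driftcondadeltak3k0}, and run the chain backwards for (iii) $\Rightarrow$ (i). The treatment of (i) $\Leftrightarrow$ (ii) via \eqref{eq:hjmspreadS}, the maturity-collapse step, and your closing discussion of why the collateral and basis volatilities are frozen after $T-\delta^f$ all match the paper's argument.

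There is, however, one concrete error in the key reduction. You claim that $M_t$ collapses, up to a $\cG_{\delta^f}$-measurable constant, to $\mathcal{S}^{k_0,k_3}_t(T-\delta^f,T,T+\delta^p)\,B^{k_0,k_3}(t\wedge(T-\delta^f),T-\delta^f)/B^{c,k_0,k_3}_t$. The density of $\QQ^{T-\delta^f,k_0,k_3}$ with respect to $\QQ^{k_0}$ is \emph{frozen} after $T-\delta^f$ (Definition \ref{def:forwardmeasure}): restricted to $\cG_t$ it is proportional to $B^{k_0,k_3}(t\wedge(T-\delta^f),T-\delta^f)/B^{c,k_0,k_3}_{t\wedge(T-\delta^f)}$, so the collateral account must be truncated at $t\wedge(T-\delta^f)$ as well, exactly as in \eqref{eq:Qk0mart_h}. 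With the un-truncated $B^{c,k_0,k_3}_t$ your process is not a $\QQ^{k_0}$-martingale on $(T-\delta^f,T+\delta^p]$; for instance, for $t>T$ it equals a $\cG_T$-measurable random variable times $B^{k_0,k_3}(t,T+\delta^p)/\bigl(B^{c,k_0,k_3}_t\bigr)^2$, and the corresponding $R$ carries the spurious drift $-\int_{T-\delta^f}^{t}r^{c,k_0,k_3}_s\,ds$. Pushing that through Lemma A.13 would yield consistency and drift conditions shifted by $r^{c,k_0,k_3}$-terms precisely on the regime $t>T-\delta^f$, i.e.\ on the part of the statement that is new relative to Theorem \ref{prop:driftcondspreadmodel}, and it would not reproduce the martingale \eqref{eq:martexp_h}. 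The fix is simply to use the frozen density; the rest of your plan then goes through as in the paper.
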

\begin{proof} 
The proof goes in parallel to the proof of Theorem \ref{prop:driftcondspreadmodel}.
In the following, let $T\ge \delta^f$ and $1\leq k_3\leq L$ be fixed.
    \begin{itemize}
        \item[] (i) $\Rightarrow$ (ii) Since the process $\left(\mathcal{S}^{k_0,k_3}_t(T-\delta^f,T,T+\delta^p)\right)_{t\in[\delta^f,T+\delta^p]}$ is a $(\QQ^{T-\delta^f,k_0,k_3},\GG)$-martingale, it follows that
        \begin{align*}
            \Excond{\QQ^{T-\delta^f, k_0, k_3}}{\mathcal{S}^{k_0,k_3}_T(T-\delta^f, T,T+\delta^p)}{\cG_t} &= \mathcal{S}^{k_0,k_3}_t(T-\delta^f,T,T+\delta^p) \\&= e^{-\int_{\delta^f}^t h^{\delta^f, \delta^p,k_0,k_3}_s ds-\int_t^{T+\delta^p}h^{\delta^f, \delta^p,k_0,k_3}_t(u)du}.
        \end{align*}
        \item[] (i) $\Rightarrow$ (iii) Since the process $\left(\mathcal{S}^{k_0,k_3}_t(T-\delta^f,T,T+\delta^p)\right)_{t\in[\delta^f,T+\delta^p]}$ is a $(\QQ^{T-\delta^f,k_0,k_3},\GG)$-martingale, by Bayes' formula, the process
    \begin{equation}
    \begin{aligned}
 \label{eq:Qk0mart_h}
        &\left(\mathcal{S}^{k_0,k_3}_t(T-\delta^f,T,T+\delta^p)\frac{B^{c,k_0,k_3}_{\delta^f}}{B^{c,k_0,k_3}_{t\land(T-\delta^f)}}\frac{B^{k_0,k_3}(t\land (T-\delta^f),T-\delta^f)}{B^{k_0,k_3}(\delta^f,T-\delta^f)}\right)_{t\in[\delta^f,T+\delta^p]}\\
        &= \left(\frac{ e^{-\int_{\delta^f}^t h^{\delta^f, \delta^p,k_0,k_3}_s ds-\int_t^{T+\delta^p}h^{\delta^f, \delta^p,k_0,k_3}_t(u)du -\int_{t\land (T-\delta^f)}^{T-\delta^f}f_{t\land (T-\delta^f)}^{c, k_0, k_3}(u)du }}{e^{\int_{\delta^f}^{t\land (T-\delta^f)}r_s^{c, k_0, k_3} ds-\int_{\delta^f}^{T-\delta^f}f_{\delta^f}^{c, k_0, k_3}(u) du }}\right)_{t\in[\delta^f,T+\delta^p]}
    \end{aligned}
    \end{equation}
    is a $(\QQ^{k_0},\GG)$-martingale.  
    Let \begin{align*}
        R_t :&= - \int_{\delta^f}^th^{\delta^f, \delta^p,k_0,k_3}_sds-\int_t^{T+\delta^p}h^{\delta^f, \delta^p,k_0,k_3}_t(u)du -\int_{t\land (T-\delta^f)}^{T-\delta^f}f_{t\land (T-\delta^f)}^{c, k_0, k_3}(u)du  \\& -\int_{\delta^f}^{t\land (T-\delta^f)}r_s^{c, k_0, k_3}ds +\int_{\delta^f}^{T-\delta^f}f_{\delta^f}^{c, k_0, k_3}(u) du .
    \end{align*}
    Then the martingale property of \eqref{eq:Qk0mart_h} is equivalent to the martingale property of $\exp\left(R\right)$, which implies that $1\in \mathcal{U}^{\QQ^{k_0},R}$ and $\Psi_t^{\QQ^{k_0},R}(1) = 0$. Due to the integrability conditions on $\alpha^{c, k_0}$ and $\sigma^{c, k_0}$ in Definition \ref{def:bondpricemodel}, on $\alpha^{k_0, k_3}$ and $\sigma^{k_0, k_3}$ in Definition \ref{def:extendedbondpricemodel}, and $\alpha^{\delta^f, \delta^p, k_0, k_3}$ and $\sigma^{\delta^f, \delta^p, k_0, k_3}$ in Definition \ref {def:fwdIndSprHJM}, we can apply the classical and the stochastic Fubini theorem, which yield
        \begin{equation}\label{eq:conscondproof_h}
        \begin{aligned}
            \int_t^{T+\delta^p} h^{\delta^f, \delta^p,k_0,k_3}_t(u)du &= \int_{\delta^f}^{T+\delta^p} h^{\delta^f, \delta^p,k_0,k_3}_{\delta^f}(u)du+  \int_{\delta^f}^t\int_s^{T+\delta^p}\alpha^{\delta^f, \delta^p, k_0, k_3}_s(u)duds \\&+ \int_{\delta^f}^t\Sigma^{\delta^f, \delta^p, k_0, k_3}_s(T+\delta^p) dX_s - \int_{\delta^f}^th^{\delta^f, \delta^p,k_0,k_3}_u(u)du,
        \end{aligned}
        \end{equation}
        and, similarly, starting from \eqref{eq:hjmck0k3},
        \begin{equation}\label{eq:conscondproof_f}
        \begin{aligned}
            &\int_{t\land (T-\delta^f)}^{T-\delta^f} f_{t\land (T-\delta^f)}^{c, k_0, k_3}(u) du \\&= \int_{\delta^f}^{T-\delta^f} f_{\delta^f}^{c, k_0, k_3}(u)du+  \int_{\delta^f}^{t\land (T-\delta^f)}\int_s^{T-\delta^f}(\alpha_s^{c, k_0}(u)+\alpha_s^{k_0, k_3}(u)) duds \\&+ \int_{\delta^f}^{t\land (T-\delta^f)}(\Sigma^{c, k_0}_s(T-\delta^f) + \Sigma^{k_0, k_3}_s(T-\delta^f)) dX_s - \int_{\delta^f}^{t\land (T-\delta^f)}f_u^{c, k_0, k_3}(u)du.
        \end{aligned}
        \end{equation}
        By applying \cite[Lemma A.13]{KK13} and using the equality $$\Psi_{t}^{\QQ^{k_0},-\int_{\delta^f}^{\cdot\land(T-\delta^f)} r_s^{c, k_0, k_3}ds}\left(1\right)=\Psi_{t\land(T-\delta^f)}^{\QQ^{k_0},-\int_{\delta^f}^{\cdot\land(T-\delta^f)} r_s^{c, k_0, k_3}ds}\left(1\right),$$ we then obtain that
        \begin{equation}\label{eq:conscondproof2_h}
        \begin{aligned}
            0 &= \Psi^{\QQ^{k_0},R}_t(1) \\&= \Psi_t^{\QQ^{k_0},-\int_{\delta^f}^\cdot h^{\delta^f, \delta^p,k_0,k_3}_sds}\left(1\right)+\Psi_{t\land(T-\delta^f)}^{\QQ^{k_0},-\int_{\delta^f}^{\cdot\land(T-\delta^f)} r_s^{c, k_0, k_3}ds}\left(1\right) \\&+ \Psi_t^{\QQ^{k_0}, X}\left( -\Sigma^{\delta^f, \delta^p, k_0, k_3}_t(T+\delta^p)-\Sigma_{t\land(T-\delta^f)}^{c, k_0}(T-\delta^f) - \Sigma_{t\land(T-\delta^f)}^{k_0, k_3}(T-\delta^f)\right)\\&
            -\int_t^{T+\delta^p}\alpha^{\delta^f, \delta^p, k_0, k_3}_t(u)du -\int_{t\land(T-\delta^f)}^{T-\delta^f}(\alpha_{t\land(T-\delta^f)}^{c, k_0}(u)+\alpha_{t\land(T-\delta^f)}^{k_0, k_3}(u))du \\&+ h^{\delta^f, \delta^p,k_0,k_3}_{t-}(t)+f^{c, k_0, k_3}_{(t\land(T-\delta^f))-}(t\land(T-\delta^f)),
        \end{aligned}
        \end{equation}
        where $\Psi_{t\land(T-\delta^f)}^{\QQ^{k_0},-\int_{\delta^f}^{\cdot\land(T-\delta^f)} r_s^{c, k_0, k_3}ds}\left(1\right) + f^{c, k_0, k_3}_{(t\land(T-\delta^f))-}(t\land(T-\delta^f))=0$ because of the consistency condition \eqref{eq:consistencyk0k3}. 
        Set now $t=T+\delta^p$ in \eqref{eq:conscondproof2_h}. Since $\Sigma^{\delta^f, \delta^p, k_0, k_3}_{T+\delta^p}(T+\delta^p)=\Sigma_{T-\delta^f}^{c, k_0}(T-\delta^f)=\Sigma_{T-\delta^f}^{k_0, k_3}(T-\delta^f)=0$, we get 
        \begin{equation*}
            0 = \Psi_t^{\QQ^{k_0},-\int_{\delta^f}^\cdot h^{\delta^f, \delta^p,k_0,k_3}_sds}\left(1\right)+ h^{\delta^f, \delta^p,k_0,k_3}_{t-}(t),
        \end{equation*}
        hence \eqref{eq:consistency_h}. Moreover, substituting the two consistency conditions into \eqref{eq:conscondproof2_h} yields the following drift condition:
    \begin{align*}
        &\int_t^{T+\delta^p}\alpha^{\delta^f, \delta^p, k_0, k_3}_t(u)du +\int_{t\land(T-\delta^f)}^{T-\delta^f}(\alpha_{t\land(T-\delta^f)}^{c, k_0}(u)+\alpha_{t\land(T-\delta^f)}^{k_0, k_3}(u))du \\&= \Psi_t^{\QQ^{k_0}, X}\left( -\Sigma^{\delta^f, \delta^p, k_0, k_3}_t(T+\delta^p)-\Sigma_{t\land(T-\delta^f)}^{c, k_0}(T-\delta^f) - \Sigma_{t\land(T-\delta^f)}^{k_0, k_3}(T-\delta^f)\right),
    \end{align*}
    hence, by the drift condition \eqref{eq:driftcondak3k0f},
    \begin{align*}
    &\int_t^{T+\delta^p}\alpha^{\delta^f, \delta^p, k_0, k_3}_t(u)du\\&=\Psi_t^{\QQ^{k_0}, X}\left( -\Sigma^{\delta^f, \delta^p, k_0, k_3}_t(T+\delta^p)-\Sigma_{t\land(T-\delta^f)}^{c, k_0}(T-\delta^f) - \Sigma_{t\land(T-\delta^f)}^{k_0, k_3}(T-\delta^f)\right)\\&- \Psi^{\QQ^{k_0},X}_{t\land(T-\delta^f)}(-\Sigma^{c,k_0}_{t\land(T-\delta^f)}(T-\delta^f)-\Sigma^{k_0, k_3}_{t\land(T-\delta^f)}(T-\delta^f)).
    \end{align*}
    We now have both the consistency condition and the drift condition. By substituting them into \eqref{eq:conscondproof_h}, and then together with \eqref{eq:conscondproof_f} into \eqref{eq:Qk0mart_h} we write that 
    \begin{equation}\label{eq:conscondproof3_h}
    \begin{aligned}
        &\mathcal{S}^{k_0,k_3}_t(T-\delta^f,T,T+\delta^p)\frac{B^{c,k_0,k_3}_{\delta^f}}{B^{c,k_0,k_3}_{t\land(T-\delta^f)}}\frac{B^{k_0,k_3}(t\land (T-\delta^f),T-\delta^f)}{B^{k_0,k_3}(\delta^f,T-\delta^f)} \\
        & = \exp \Biggl\{-\int_{\delta^f}^{T+\delta^p}h^{\delta^f, \delta^p,k_0,k_3}_{\delta^f}(u)du \Biggr.\\&\qquad \quad  -\int_{\delta^f}^t\Psi^{\QQ^{k_0},X}_s\left(-\Sigma^{\delta^f, \delta^p, k_0, k_3}_s(T+\delta^p) -\Sigma^{c, k_0}_{s\land(T-\delta^f)}(T-\delta^f)-\Sigma^{k_0,k_3}_{s\land(T-\delta^f)}(T-\delta^f)\right)ds \\&\qquad \quad - \int_{\delta^f}^t\left(\Sigma^{\delta^f, \delta^p, k_0, k_3}_s(T+\delta^p)+\Sigma^{c, k_0}_{s\land(T-\delta^f)}(T-\delta^f) + \Sigma^{k_0, k_3}_{s\land(T-\delta^f)}(T-\delta^f)\right) dX_s \Biggr\},
    \end{aligned}
    \end{equation}
    from which we deduce that the process \eqref{eq:martexp_h} is a $\QQ^{k_0}$-martingale for every $T\geq \delta^f$.
    \item[] (iii) $\Rightarrow$ (i) The consistency condition and the drift condition yield again equation \eqref{eq:conscondproof3_h}. The martingale property of \eqref{eq:martexp_h} implies then that the forward index spread $(\mathcal{S}^{k_0,k_3}_t(T-\delta^f,T, T+\delta^p))_{t\in[\delta^f,T+\delta^p]}$ is a $(\QQ^{T-\delta^f,k_0,k_3},\GG)$-martingale.
    \end{itemize}
\end{proof}

The drift condition \eqref{eq:driftcondadeltak3k0} allows us to explicitly observe the interplay between the different risk factors that drive the multiplicative spread. In particular, we observe that the drift depends of course on the integrated volatility of the spread $\Sigma^{\delta^f, \delta^p, k_0, k_3}_t(T+\delta^p)$ up to the payment date $T+\delta^p$, but also on the integrated volatility of the instantaneous collateral  forward rate $\Sigma^{c, k_0}_t(T-\delta^f)$ and on that of the instantaneous cross-currency basis spread $\Sigma^{k_0, k_3}_t(T-\delta^f)$, which is an interesting feature from an economic perspective.

\begin{remark}
    Similarly as in Remark \ref{rmk:changepsi}, we notice that the HJM drift condition \eqref{eq:driftcondadeltak3k0} is given in terms of the local exponent $\Psi^{\QQ^{k_0},X}$. However, one can show that for any $\delta^f \le t\le T-\delta^f$ the local exponent $\Psi^{\QQ^{T-\delta^f, k_0, k_3},X}$ under the measure $\QQ^{T-\delta^f, k_0, k_3}$ is obtained from  $\Psi^{\QQ^{k_0},X}$ by
    \begin{equation*}
        \Psi^{\QQ^{T-\delta^f, k_0, k_3},X}(\beta) = \Psi^{\QQ^{k_0},X}(\beta - \Sigma^{c, k_0}(T-\delta^f)- \Sigma^{k_0, k_3}(T-\delta^f))- \Psi^{\QQ^{k_0},X}(- \Sigma^{c, k_0}(T-\delta^f)- \Sigma^{k_0, k_3}(T-\delta^f)),
    \end{equation*}
    for any $\RR^d$-valued predictable and $X$-integrable process $\beta = (\beta_t)_{t\ge 0}$. We further observe from Definition \ref{def:forwardmeasure}, that for $T-\delta^f< t \le T+\delta^p$ the two measures $\QQ^{k_0}$ and $\QQ^{T-\delta^f, k_0, k_3}$ are equivalent up to a multiplicative constant. Hence, basically, we have that
      \begin{equation*}
        \Psi^{\QQ^{T-\delta^f, k_0, k_3},X}(\beta) = \Psi^{\QQ^{k_0},X}(\beta).
    \end{equation*}  
    The HJM drift condition \eqref{eq:driftcondadeltak3k0} can then be rewritten under the forward measure $\QQ^{T-\delta^f, k_0, k_3}$ as
    \begin{align*}
    \int_t^{T+\delta^p}\alpha^{\delta^f, \delta^p, k_0,k_3}_t(u)du=\Psi_t^{\QQ^{T-\delta^f, k_0, k_3},X}(-\Sigma^{\delta^f, \delta^p, k_0,k_3}_t(T+\delta^p)),
    \end{align*}
    for every $\delta^f\le t\le T+\delta^p$. We further notice that, since  $h^{\delta^f, \delta^p, k_0,k_3}_t(\tau) = f^{c, k_0,k_3}_t(\tau)$ for every $T< t\le \tau\le T+\delta^p$ by definition, then $\alpha_t^{\delta^f, \delta^p, k_0,k_3}(\tau) = \alpha_t^{c, k_0}(\tau) + \alpha_t^{k_0, k_3}(\tau)$ and $\Sigma_t^{\delta^f, \delta^p, k_0,k_3}(\tau) = \Sigma_t^{c, k_0}(\tau)+\Sigma_t^{k_0,k_3}(\tau)$ for every $T< t\le \tau\le T+\delta^p$, and the drift condition \eqref{eq:driftcondadeltak3k0} coincides with the drift condition \eqref{eq:driftcondak3k0f}, namely
    \begin{align*}
     \int_t^{T+\delta^p} (\alpha_t^{c, k_0}(u) + \alpha_t^{k_0, k_3}(u))du&= \int_t^{T+\delta^p}\alpha^{\delta^f, \delta^p, k_0,k_3}_t(u)du\\&=\Psi_t^{\QQ^{k_0},X}(-\Sigma^{\delta^f, \delta^p, k_0,k_3}_t(T+\delta^p))\\&= \Psi^{\QQ^{k_0},X}_t(-\Sigma^{c, k_0}_t(T+\delta^p)-\Sigma^{k_0,k_3}_t(T+\delta^p)),
        \end{align*}
        for every $T< t\le  T+\delta^p$.
\end{remark}

\begin{remark}
    For $k_3 = k_0$, $\delta^f = 0$ and $\delta^p = \delta$, the drift condition \eqref{eq:driftcondadeltak3k0} becomes 
\begin{equation}\label{eq:driftcond0delta}
\int_t^{T+\delta}\alpha^{0, \delta, k_0, k_0}_t(u)du= \Psi^{\QQ^{k_0},X}_t(-\Sigma^{0, \delta, k_0, k_0}_t(T+\delta)-\Sigma^{c, k_0}_{t\land T}(T))- \Psi^{\QQ^{k_0},X}_t(-\Sigma^{c,k_0}_{t\land T}(T)).
\end{equation}
Notice that this extends the results of  \cite{Cuchiero2016}, because we model the spread beyond the fixing date, in this case $T$, namely we model the spread up to the payment date, $T+\delta$. This is important since, as discussed before, despite for $T< t \le T+\delta^p$ the index has already been fixed, one still observes fluctuations in the spread due to fluctuations in the foreign-collateral discount curves.  Instead, in \cite{Cuchiero2016}, these fluctuations are implicitly ignored since the spread is only modelled up to the fixing date $T$. In particular, this would mean to set
\begin{equation*}
    \int_T^{T+\delta^p} \alpha_t^{0, \delta, k_0, k_0}(u)du=\int_T^{T+\delta^p} \sigma_t^{0, \delta, k_0, k_0}(u)du = 0,
\end{equation*}
hence $\Sigma^{0, \delta, k_0, k_0}_t(T+\delta) = \Sigma^{0, \delta, k_0, k_0}_t(T)$. Under these assumptions, we observe that the drift condition \eqref{eq:driftcond0delta} coincides indeed with the drift condition found in \cite[Theorem 3.15]{Cuchiero2016}.
\end{remark}

\begin{corollary}
    Let $1\leq k_0\leq L$ and $\delta^f, \delta^p\ge 0$. If the multiplicative spread model for the currency $k_0$ is risk neutral, then for every $1\le k_3\le L$ we have that: 
    \begin{enumerate}
        \item For every $t\ge  \delta^f$ and $\tau\ge \delta=\delta^f+\delta^p$, the instantaneous multiplicative spread forward rate is given by
        \begin{equation*}
        \begin{aligned}
        &h^{\delta^f, \delta^p, k_0,k_3}_t(\tau)=h^{\delta^f, \delta^p, k_0,k_3}_{\delta^f}(\tau)\\&-\int_{\delta^f}^t
        \Biggl(\left(\sigma^{\delta^f, \delta^p,k_0,k_3}_s(\tau)+\sigma^{c, k_0}_{s\land (\tau-\delta)}(\tau-\delta) + \sigma^{k_0, k_3}_{s\land (\tau-\delta)}(\tau-\delta)\right)\cdot
        \Biggr.\\
        &\qquad   \cdot\nabla\Psi_s^{\QQ^{k_0}, X}\left(-\Sigma^{\delta^f, \delta^p,k_0,k_3}_s(\tau)-\Sigma^{c, k_0}_{s\land (\tau-\delta)}(\tau-\delta)-\Sigma^{k_0, k_3}_{s\land (\tau-\delta)}(\tau-\delta)\right)\\
   &\qquad \Biggl.-\left(\sigma^{c, k_0}_{s\land (\tau-\delta)}(\tau-\delta) + \sigma^{k_0, k_3}_{s\land (\tau-\delta)}(\tau-\delta)\right)\nabla\Psi_s^{\QQ^{k_0}, X}\left(-\Sigma^{c, k_0}_{s\land (\tau-\delta)}(\tau-\delta)-\Sigma^{k_0, k_3}_{s\land (\tau-\delta)}(\tau-\delta)\right)\Biggr)ds\\&+\int_{\delta^f}^t\sigma^{\delta^f, \delta^p,k_0,k_3}_s(\tau)dX_s;
        \end{aligned}
    \end{equation*}
        \item For every $t\ge\delta$, the multiplicative spread short rate $h^{\delta^f, \delta^p, k_0,k_3}_t$ at time $t$ is given by
         \begin{equation*}
        \begin{aligned}
        &h^{\delta^f, \delta^p, k_0,k_3}_t=h^{\delta^f, \delta^p, k_0,k_3}_t(t)=h^{\delta^f, \delta^p, k_0,k_3}_{\delta^f}(t)\\&-\int_{\delta^f}^t
        \Biggl(\left(\sigma^{\delta^f, \delta^p,k_0,k_3}_s(t)+\sigma^{c, k_0}_{s\land (t-\delta)}(t-\delta) + \sigma^{k_0, k_3}_{s\land (t-\delta)}(t-\delta)\right)\cdot
        \Biggr.\\
        &\qquad   \cdot\nabla\Psi_s^{\QQ^{k_0}, X}\left(-\Sigma^{\delta^f, \delta^p,k_0,k_3}_s(t)-\Sigma^{c, k_0}_{s\land (t-\delta)}(t-\delta)-\Sigma^{k_0, k_3}_{s\land (t-\delta)}(t-\delta)\right)\\
   &\qquad \Biggl.-\left(\sigma^{c, k_0}_{s\land (t-\delta)}(t-\delta) + \sigma^{k_0, k_3}_{s\land (t-\delta)}(t-\delta)\right)\nabla\Psi_s^{\QQ^{k_0}, X}\left(-\Sigma^{c, k_0}_{s\land (\tau-\delta)}(t-\delta)-\Sigma^{k_0, k_3}_{s\land (t-\delta)}(t-\delta)\right)\Biggr)ds\\&+\int_{\delta^f}^t\sigma^{\delta^f, \delta^p,k_0,k_3}_s(t)dX_s.
        \end{aligned}
    \end{equation*}
    \end{enumerate}
\end{corollary}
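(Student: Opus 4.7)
The argument mirrors the proof of Corollary \ref{cor:driftcondk0k0} (and of Corollary \ref{cor:driftcondk0k3}). Starting from the HJM drift condition \eqref{eq:driftcondadeltak3k0} established in Theorem \ref{prop:driftcondspreadmodel_h}, I would differentiate both sides with respect to the payment date $\tau := T + \delta^p$, so that $T - \delta^f = \tau - \delta$. The left-hand side $\int_t^\tau \alpha^{\delta^f, \delta^p, k_0, k_3}_t(u)du$ differentiates to $\alpha^{\delta^f, \delta^p, k_0, k_3}_t(\tau)$. On the right-hand side, I would apply the chain rule together with the gradient formula from Lemma \ref{prop:localexponent}, using $\partial_\tau \Sigma^{\delta^f, \delta^p, k_0, k_3}_t(\tau) = \sigma^{\delta^f, \delta^p, k_0, k_3}_t(\tau)$ as well as $\partial_\tau \Sigma^{c, k_0}_{t \land (\tau - \delta)}(\tau - \delta) = \sigma^{c, k_0}_{t \land (\tau - \delta)}(\tau - \delta)$ (and the analogous identity for $\Sigma^{k_0, k_3}$). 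Collecting terms, this produces an explicit formula for the drift $\alpha^{\delta^f, \delta^p, k_0, k_3}_t(\tau)$ as the difference of two $\nabla \Psi$-contributions, one evaluated at the full argument $-\Sigma^{\delta^f, \delta^p, k_0, k_3}_t(\tau) - \Sigma^{c, k_0}_{t \land (\tau - \delta)}(\tau - \delta) - \Sigma^{k_0, k_3}_{t \land (\tau - \delta)}(\tau - \delta)$ and the other at $-\Sigma^{c, k_0}_{t \land (\tau - \delta)}(\tau - \delta) - \Sigma^{k_0, k_3}_{t \land (\tau - \delta)}(\tau - \delta)$, multiplied by the appropriate sum of volatilities.

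Substituting this expression for $\alpha^{\delta^f, \delta^p, k_0, k_3}_t(\tau)$ into the HJM representation \eqref{eq:hjmspread} of Definition \ref{def:fwdIndSprHJM} and rearranging immediately yields the identity claimed in part (i). For part (ii), I would invoke the identification $h^{\delta^f, \delta^p, k_0, k_3}_t = h^{\delta^f, \delta^p, k_0, k_3}_t(t)$ built into Definition \ref{def:fwdIndSprHJM} and simply set $\tau = t$ in the formula of (i) to read off the short-rate representation.

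The main technical point is the careful bookkeeping of the localisation $t \land (\tau - \delta)$ when differentiating with respect to $\tau$. On the regime $\{t \le \tau - \delta\}$ (equivalently $t \le T - \delta^f$, i.e., before the monitoring period starts) the minimum is inactive and $\partial_\tau$ passes through cleanly to produce $\sigma^{c, k_0}_t(\tau - \delta)$ and $\sigma^{k_0, k_3}_t(\tau - \delta)$. On the complementary regime the arguments of $\Sigma^{c, k_0}$ and $\Sigma^{k_0, k_3}$ collapse to zero and the $\tau$-derivative of these terms vanishes, which must be reflected in the reading of the factors $\sigma^{c, k_0}_{s \land (\tau - \delta)}(\tau - \delta)$ and $\sigma^{k_0, k_3}_{s \land (\tau - \delta)}(\tau - \delta)$ inside the $ds$-integral of (i); consistency in this regime is also guaranteed by condition (iv) of Definition \ref{def:fwdIndSprHJM}, which forces $h^{\delta^f, \delta^p, k_0, k_3}_t(\tau) = f^{c, k_0, k_3}_t(\tau)$ and matches Corollary \ref{cor:driftcondk0k3}. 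Once this dichotomy is handled, the remainder of the computation is a routine application of Lemma \ref{prop:localexponent} together with the HJM framework already used for the collateral discount curves and the cross-currency spreads.
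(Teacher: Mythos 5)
Your proposal is correct and takes essentially the same route as the paper: differentiate the drift condition \eqref{eq:driftcondadeltak3k0} in $\tau=T+\delta^p$ via the gradient formula of Lemma \ref{prop:localexponent}, substitute the resulting expression for $\alpha^{\delta^f,\delta^p,k_0,k_3}_t(\tau)$ into \eqref{eq:hjmspread}, and set $\tau=t$ (i.e.\ let $t\to\tau$) for the short rate, exactly as in the proof of Corollary \ref{cor:driftcondk0k0} that the paper's proof invokes. Your additional bookkeeping of the localisation $t\land(\tau-\delta)$ is a sound elaboration of a point the paper leaves implicit.
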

\begin{proof}
    The proof proceeds similarly to the proof of Corollary \ref{cor:driftcondk0k0}, starting from \eqref{eq:hjmspread} and using the drift condition \eqref{eq:driftcondadeltak3k0} for $\tau :=T+\delta^p$. For the short rate, we let $t\to \tau$.
\end{proof}

We conclude this section by deriving  the HJM framework for an abstract index by combining the multiplicative spread model for the currency $k_0$ with the HJM framework for the extended bond-price model in Section \ref{sec:forCollDiscCurv}. In particular, from Definition \ref{def:multSpreadIndex}, for any fixed $1\leq k_0,k_3\leq L$, any $\delta^f, \delta^p\ge 0$ and for any $T\ge \delta^f$, the forward on the abstract index  $I^{k_0}(T-\delta^f,  T, T+\delta^p)$ at time $\delta^f \le t\le T+\delta^p$ for the time period $[T-\delta^f,T+\delta^p]$ collateralized according to $B^{c,k_0,k_3}$ and payed at time $T+\delta^p$ is the product between the multiplicative spread and the discount curve index, namely
\begin{equation*}
I_t^{k_0,k_3}(T-\delta^f, T,T + \delta^{p})=\frac{1}{\delta}\left(\mathcal{S}^{k_0,k_3}_t(T-\delta^f, T,T + \delta^{p})\bigl(1+\delta I_t^{k_0,k_3, D}(T-\delta^f, T+ \delta^{p},T+ \delta^{p})\bigr)-1\right),
\end{equation*}
where $\delta=\delta^f+\delta^p$. From \eqref{eq:spreadexplicit} we further write that
\begin{equation}\label{eq:indexexplicit}
\begin{aligned}
   I_t^{k_0,k_3}&(T-\delta^f, T,T + \delta^{p}) \\&=
   \begin{cases}
         \frac{1}{\delta}\left(\mathcal{S}^{k_0,k_3}_t(T-\delta^f, T,T + \delta^{p})\frac{B^{k_0,k_3}(t,T-\delta^f)}{B^{k_0,k_3}(t,T+\delta^p)}-1\right), & \delta^f \le t\le  T-\delta^f,\\
       \frac{1}{\delta}\left(\mathcal{S}^{k_0,k_3}_t(T-\delta^f, T,T + \delta^{p}) \frac{B_t^{c, k_0, k_3}}{B_{T-\delta^f}^{c, k_0, k_3}} \frac{1}{B^{k_0,k_3}(t,T+\delta^p )}-1\right), & T-\delta^f < t \le T,\\
       I_T^{k_0}(T-\delta^f, T,T + \delta^{p}), & T < t \le T+\delta^p.
   \end{cases}
\end{aligned}
\end{equation}
By combining equations \eqref{eq:bondspread2}, \eqref{eq:bondspread3}, \eqref{eq:Bck0k3} and \eqref{eq:hjmspreadS}, we can also rewrite the index in terms of the instantaneous collateral  forward rate, the cross-currency basis spread and of the instantaneous multiplicative spread forward rate, namely
\begin{equation*}
I_t^{k_0,k_3}(T-\delta^f, T,T + \delta^{p}) =\frac{1}{\delta}\left(e^{-\int_{\delta^f}^t h^{\delta^f, \delta^p, k_0,k_3}_s ds-\int_t^{T+\delta^p} h^{\delta^f, \delta^p, k_0,k_3}_t(u)du+\int_{T-\delta^f}^{T+\delta^p}\left(f_t^{c, k_0}(u)+q_t^{k_0,k_3}(u)\right)du}-1\right),
\end{equation*}
for $\delta^f \le t\le  T-\delta^f$, and
\begin{equation*}
\begin{aligned}
  & I_t^{k_0,k_3}(T-\delta^f, T,T + \delta^{p}) \\&=
       \frac{1}{\delta}\left(e^{-\int_{\delta^f}^t h^{\delta^f, \delta^p, k_0,k_3}_s ds-\int_t^{T+\delta^p} h^{\delta^f, \delta^p, k_0,k_3}_t(u)du+\int_{T-\delta^f}^t\left(r_s^{c, k_0}+q_s^{k_0,k_3}\right)ds -\int_{t}^{T+\delta^p}\left(f_t^{c, k_0}(u)+q_t^{k_0,k_3}(u)\right)du} -1\right),
\end{aligned}
\end{equation*}
for $T-\delta^f < t \le T$. Notice that for $T<t\le T+\delta^p$ the forward equals the index at time $T$, namely $I_t^{k_0,k_3}(T-\delta^f, T,T + \delta^{p})= I_T^{k_0}(T-\delta^f, T,T + \delta^{p})$. Hence, in particular, it is constant.

\section{Application: cross-currency swaps pricing}\label{sec:example}
We provide an in-depth study of cross-currency swap contracts. The motivation for studying these instruments is twofold: on the one hand, cross-currency swap contracts offer the perfect example of instruments which depend on all the sources of risk that we have described in the previous sections. On the other hand, studying cross-currency swap contracts serves as a starting point for analysing 
the benchmark transition, which is still not treated in the literature. In particular, in view of the benchmark reform, we shall describe the legs of the contracts in terms of some abstract indices, in line with the approach adopted in Section \ref{sec:abstractIndeces}. We proceed as follows: we consider first the case of constant-notional cross-currency swaps and provide the corresponding pricing formulas under different collateral currencies. Then, we consider resetting cross-currency swaps, and, finally, we study potential leg asymmetries which are originated by the LIBOR transition.

We consider two generic currencies $1\leq k_0, k\leq L$ and a time interval $[\tau^s, \tau^e]$, with $\tau^e\ge \tau^s \ge 0$, representing the monitoring period of the cross-currency swap contract which is stipulated by two agents at time $0$.  In general, the number of legs for a swap can be arbitrary, most typically for bespoke over-the-counter (OTC) contracts. However, for simplicity, we consider contracts with only two legs and we use the index $k_0$ and the index $k$ to denote, respectively, the domestic and the foreign leg. Let then $N^{k_0}\in\NN$ and $N^{k}\in\NN$ be the numbers of time intervals in the two schedules of cash flows happening within the interval $[\tau^s, \tau^e]$ and referring, respectively, to $k_0$ and to $k$. Notice that, in general, the number of cash flows can be different among different legs, typically due to different payment frequencies. We then denote the time instants for the domestic leg by $t^{k_0}_n$, with $0\leq n\leq N^{k_0}$, and the time instants for the foreign leg by $t^{k}_n$, with $0\leq n\leq N^{k}$. In particular, we set $t_0^{k_0}=t_0^{k}=\tau^s$ and $t_{N^{k_0}}^{k_0}=t_{N^{k}}^{k}=\tau^e$. For each interval of the form $(t^{k_0}_{n-1}, t^{k_0}_n]$, we define $\delta^{k_0}_n:= t^{k_0}_n-t^{k_0}_{n-1}$, for $1\leq n\leq N^{k_0}$. Similarly, we define $\delta^{k}_n:= t^{k}_n-t^{k}_{n-1}$, for $1\leq n\leq N^{k}$. We then introduce $\delta^{k_0,f}_n$, $\delta^{k_0,p}_n$, $\delta^{k,f}_n$ and $\delta^{k,p}_n$ such that $\delta^{k_0,f}_n+\delta^{k_0,p}_n=\delta^{k_0}_n$ and  $\delta^{k,f}_n+\delta^{k,p}_n=\delta^{k}_n$, respectively.

We further introduce $\phi\in\{+1,-1\}$ as the indicator for long and short positions, respectively, and $\cN^{k_0}$ and $\cN^{k}$ for the constant notional of the contract in domestic and foreign currency, respectively. Notice that a schedule of time-varying notional could be included by writing $\cN^{k_0}_{t_n^{k_0}}$ and $\cN^{k}_{t_n^{k}}$, instead. Spreads can also be included in either leg by means of the quantities $\mathcal{S}^{k_0}_0(\tau^e)$ and $\mathcal{S}^{k}_0(\tau^e)$. The standard market practice is to quote cross-currency swaps against USD and to add the spread to the non-USD leg. Notice also that the two spreads are fixed at the moment of the stipulation of the contract, namely in $0$, and depend on the final horizon of the monitoring period, namely $\tau^e$. As in the previous sections, we denote with $k_3$ the currency of denomination of the collaterals. 

\subsection{Constant-notional cross-currency swaps}\label{sec:CCS}
The simplest form of cross-currency swap involves two agents who lend to each other notional amounts in two different currencies. The notional is swapped at the initial time, $\tau^s$, and then swapped back at the terminal time, $\tau^e$. We denote by $CCS^{k_0, k_3}$ the value in domestic currency of the constant-notional cross-currency swap collateralized in currency $k_3$, which at time $t\ge 0$ is given by
\begin{align*}
    CCS_t^{k_0, k_3}=\phi\left(S_t^{k_0}(A_{CCS}^{k_0},C^{k_3})-\mathcal{X}^{k_0,k}_tS_t^{k}(A_{CCS}^{k},C^{k_3})\right), 
\end{align*}
with $A_{CCS}^{k_0}$ and $A_{CCS}^{k}$ representing the cash flow of the domestic and foreign legs associated to two generic market indices $I^{k_0}$ and $I^{k}$. In particular, for each $ t \le \tau^e$, the leg $\ell\in\{k_0, k\}$ is evaluated via
\begin{equation}\label{eq:CCYk_0}
\begin{aligned}
    S_t^{\ell}(A_{CCS}^{\ell},C^{k_3})&= \cN^{\ell}\Biggl(-B^{\ell,k_3}(t,\tau^s)\mathbb{I}_{\left\{t\le \tau^s\right\}}+B^{\ell,k_3}(t,\tau^e)\Biggr.\\&\Biggl.+\sum_{n=1}^{N^{\ell}}\delta_n^{\ell}\Excond{\QQ^{\ell}}{\frac{B^{c,\ell, k_3}_t}{B^{c,\ell, k_3}_{t_n^{\ell}}}\left(I^{\ell}_{t_{n-1}^{\ell}+\delta_n^{\ell,f}}(t_{n-1}^{\ell}, t_{n-1}^{\ell}+\delta_n^{\ell,f}, t_n^{\ell})+\mathcal{S}^{\ell}_0(\tau^e)\right)}{\cG_t}\mathbb{I}_{\left\{t\le t_n^{\ell}\right\}}\Biggr),
     \end{aligned}
\end{equation}  
where the indicator function $\mathbb{I}$ means that each term in the summations exists until its payment date, namely the $n$-th term in the domestic leg disappears for $t> t_n^{k_0}$, and the $n$-th term in the foreign leg disappears for $t> t_n^{k}$. Notice that the same formulas allow also to treat the case of fixed-versus-fixed and fixed-versus-floating cross-currency swaps by suitably setting the desired indices to zero and interpreting the spreads $\mathcal{S}_0^{k_0}(\tau^e)$ and $\mathcal{S}_0^{k}(\tau^e)$ as fixed rates. 

We now illustrate how the modeling quantities analyzed in the present work are crucial in the evaluation of the formula \eqref{eq:CCYk_0}. 
By Definition \ref{def:abstractForwardIndex}, we express the leg $\ell\in\{k_0, k\}$ in terms of the forward contract $I^{\ell, k_3}$ written on the index $I^{\ell}$ with collateralization in currency $k_3$, namely
\begin{align}\label{eq:ccs}
\begin{aligned}
   S_t^{\ell}(A_{CCS}^{\ell},C^{k_3})&= \cN^{\ell}\Biggl(-B^{\ell,k_3}(t,\tau^s)\mathbb{I}_{\left\{t\le \tau^s\right\}}+B^{\ell,k_3}(t,\tau^e)\Biggr.\\&\Biggl.+\sum_{n=1}^{N^{\ell}}\delta_n^{\ell}\left(I^{\ell, k_3}_{t}(t_{n-1}^{\ell}, t_{n-1}^{\ell}+\delta_n^{\ell,f}, t_n^{\ell})+\mathcal{S}_0^{\ell}(\tau^e)\right)B^{\ell,k_3}(t,t_n^{\ell})\,\mathbb{I}_{\left\{t\le t_n^{\ell}\right\}}\Biggr).
\end{aligned}
\end{align}
In particular, for every $t\ge 0$, the forward contract $I^{\ell, k_3}$ can be written in compact form starting from equation \eqref{eq:indexexplicit} as
\begin{equation}\label{eq:indexexplicit2}
    I^{\ell, k_3}_{t}(t_{n-1}^{\ell}, t_{n-1}^{\ell}+\delta_n^{\ell,f}, t_n^{\ell}) = \frac{1}{\delta_n^\ell}\left(\frac{B^{c,\ell,k_3}_t}{B^{c, \ell,k_3}_{t\land t_{n-1}^{\ell}}}\frac{B^{\ell,k_3}(t\land t_{n-1}^{\ell},t_{n-1}^{\ell})}{B^{\ell,k_3}(t,t_{n}^{\ell})}\mathcal{S}^{\ell,k_3}_t(t_{n-1}^{\ell}, t_{n-1}^{\ell}+\delta_n^{\ell,f}, t_n^{\ell})-1\right).
\end{equation}
By substituting  \eqref{eq:indexexplicit2} into \eqref{eq:ccs}, we get
\begin{align*}
\begin{aligned}
   &S_t^{\ell}(A_{CCS}^{\ell},C^{k_3})=\cN^{\ell}\Biggl(-B^{\ell,k_3}(t,\tau^s)\mathbb{I}_{\left\{t\le \tau^s\right\}}+B^{\ell,k_3}(t,\tau^e) \Biggr.\\ \Biggl.&+\sum_{n=1}^{N^{\ell}}\left(\frac{B^{c,\ell,k_3}_t}{B^{c, \ell,k_3}_{t\land t_{n-1}^{\ell}}}\frac{B^{\ell,k_3}(t\land t_{n-1}^{\ell},t_{n-1}^{\ell})}{B^{\ell,k_3}(t,t_{n}^{\ell})}\mathcal{S}^{\ell,k_3}_t(t_{n-1}^{\ell}, t_{n-1}^{\ell}+\delta_n^{\ell,f}, t_n^{\ell})-1+\delta_n^{\ell}\mathcal{S}_0^{\ell}(\tau^e)\right) B^{\ell,k_3}(t,t_n^{\ell})\,\mathbb{I}_{\left\{t\le t_n^{\ell}\right\}}\Biggr).
\end{aligned}
\end{align*}
Finally, we use the definition of the foreign-collateral discount curves as product between the domestic bond with domestic collateral and the $\ell$-$k_3$ cross-currency bond spread in equation \eqref{eq:bondspread2} and obtain
\begin{align*}
\begin{aligned}
   S_t^{\ell}(A_{CCS}^{\ell},C^{k_3})=&\cN^{\ell}\Biggl(-B^{\ell,\ell}(t,\tau^s)Q^{\ell,k_3}(t,\tau^s)\mathbb{I}_{\left\{t\le \tau^s\right\}}+B^{\ell,\ell}(t,\tau^e)Q^{\ell,k_3}(t,\tau^e) \Biggr.\\ &+\sum_{n=1}^{N^{\ell}}\Bigl(\,\frac{B^{c,\ell}_tQ^{\ell,k_3}_t}{B^{c, \ell}_{t\land t_{n-1}^{\ell}}Q^{\ell,k_3}_{t\land t_{n-1}^{\ell}}}\frac{B^{\ell,\ell}(t\land t_{n-1}^{\ell},t_{n-1}^{\ell})Q^{k_0,k_3}(t\land t_{n-1}^{\ell},t_{n-1}^{\ell})}{B^{\ell,\ell}(t,t_{n}^{\ell})Q^{\ell,k_3}(t,t_{n}^{k_0})}\cdot \Bigr.\\&\hspace{1.5cm} \Biggr.\cdot\Bigl.   
   \mathcal{S}^{\ell,k_3}_t(t_{n-1}^{\ell}, t_{n-1}^{\ell}+\delta_n^{\ell,f}, t_n^{\ell})-1+\delta_n^{\ell}\mathcal{S}_0^{\ell}(\tau^e)\Bigr) B^{\ell,\ell}(t,t_n^{\ell})Q^{\ell,k_3}(t,t_n^{\ell})\,\mathbb{I}_{\left\{t\le t_n^{\ell}\right\}}\Biggr).
\end{aligned}
\end{align*}
This shows how all the quantities modelled in the previous sections enter into play in the evaluation of cross-currency swap contracts. 
In particular, the formulas obtained also highights that constant-notional cross-currency swaps are linear with respect to all the quantities introduced in our cross-currency HJM framework.

\subsection{Resetting cross-currency swaps}\label{sec:MtMCCS} A resetting (or marked-to-market) cross-currency swap (MtMCCS) is constructed via a sequence of one-period cross-currency swaps. Every single cross-currency swap in the sequence involves an exchange of notional, so that the contract can be interpreted as a rolling strategy on loans with varying notional. The rolling mechanism implies exchanges of notional at every payment date, and it reduces the outstanding counterparty risk exposure. This version of the instrument is cheaper if one takes into account the possibility of default of the agents which are involved in the transaction.

Typically, quoted instruments feature notional resets on the leg indexed to the \emph{stronger} currency (e.g., USD for the EURUSD pair), and a constant notional for the \emph{weaker} currency (i.e., EUR in the EURUSD pair example). In this case, the spread is introduced only on the weaker leg and it is fixed in such a way that the value of the contract at initiation is zero. This means in particular that the spread is the market quote. For our example, we shall analyze the case when the notional resets affect either the $k_0$- or the $k$-denominated leg. The collateralization is in currency $k_3$.

We then denote by $MtMCCS^{k_0, k_3}$ the value in domestic currency of the resetting cross-currency swap collateralized in currency $k_3$. With the notional resets affecting the $k_0$-denominated leg, for $t \le \tau^e$ we have that
\begin{align*}   
MtMCCS^{k_0, k_3}=\phi\left(S_t^{k_0}(A_{MtMCCS}^{k_0},C^{k_3})-\mathcal{X}^{k_0,k}_tS^k_t(A_{CCS}^{k},C^{k_3})\right),
\end{align*}
where $A_{MtMCCS}^{k_0}$ represents the cash flow of the domestic leg with resetting. Similarly, if the notional resets are affecting the foreign leg instead, the valuation formula is
\begin{align*}
MtMCCS^{k_0, k_3}=\phi\left(S^{k_0}_t(A_{CCS}^{k_0},C^{k_3})-\mathcal{X}^{k_0,k}_tS^k_t(A_{MtMCCS}^{k},C^{k_3})\right),
\end{align*}
with $A_{MtMCCS}^{k}$ being the cash flow of the foreign leg with resetting. 
In particular, then the value of the leg with resetting is given for $\ell\in\{k_0, k\}$ by
\begin{equation}\label{eq:domlegres}
\begin{aligned}
    &S_t^{\ell}(A_{MtMCCS}^{\ell},C^{k_3})=\\&\cN^{\kappa}\left(\sum_{n=1}^{N^{\ell}}\Excond{\QQ^{\ell}}{\frac{B^{c,\ell, k_3}_t}{B^{c,\ell, k_3}_{t_n^{\ell}}}\mathcal{X}^{\ell, \kappa}_{t_{n-1}^{\ell}}\left(1+\delta_n^{\ell}\left(I^{\ell}_{t_{n-1}^{\ell}+\delta_n^{\ell,f}}(t_{n-1}^{\ell}, t_{n-1}^{\ell}+\delta_n^{\ell,f}, t_{n}^{\ell})+\mathcal{S}_0^{\ell}(\tau^e)\right)\right)}{\cG_t}\mathbb{I}_{\left\{t\le t_n^{\ell}\right\}}\right.\\
    &\left.-\sum_{n=1}^{N^{\ell}}\Excond{\QQ^{\ell}}{\frac{B^{c,\ell, k_3}_t}{B^{c,\ell, k_3}_{t_{n-1}^{\ell}}}\mathcal{X}^{\ell,\kappa}_{t_{n-1}^{\ell}}}{\cG_t}\mathbb{I}_{\left\{t\le t_n^{\ell}\right\}}\right),
\end{aligned}
\end{equation}
where we notice that the notional for the leg $\ell\in\{k_0, k\}$ is now denominated in the other currency, namely in currency $\kappa\in\{k_0, k\}$ with $\kappa \ne \ell$. 
The leg with no resetting, $S^{\kappa}_t(A_{CCS}^{\kappa},C^{k_3})$, is evaluated as in \eqref{eq:CCYk_0} for $\kappa\in\{k_0, k\}$. Starting from \eqref{eq:domlegres}, with similar steps as in Section \ref{sec:CCS}, one recovers all the modelling quantities studied in the paper, including the dynamics for the FX rate in Section \ref{sec:FXrate}.

\subsection{The impact of the LIBOR transition}
For concreteness, we take the perspective of a USD agent who entered at time $t=0$ before any benchmark reform into an OTC EURUSD cross-currency swap with notional resets and fixed spread applied on the EUR-denominated leg. In this case, we then have $k_0=\$$ and $k=\text{\euro}$. As a legacy product, we assume that the swap was entered when the USD 3M LIBOR and the 3M EURIBOR were the market standard floating rates applied to the two legs in this kind of contracts. Both the floating rates are fixed at the beginning of the period and paid at the end, so in this case $\delta^{\text{\euro},f}_n=\delta^{\$,f}_n=0$ for all $n$, which allows us to use the more familiar notation
\begin{align*}
    I^{\ell}_{t_{n-1}^{\ell}+\delta_n^{\ell,f}}(t_{n-1}^{\ell}, t_{n-1}^{\ell}+\delta_n^{\ell,f}, t_{n}^{\ell})&=I^{\ell}_{t_{n-1}^{\ell}}(t_{n-1}^{\ell}, t_{n-1}^{\ell}, t_{n}^{\ell})=\mathcal{I}^{\ell}_{t_{n-1}^{\ell}}(t_{n-1}^{\ell}, t_{n}^{\ell}), \ \mbox{for }\ell\in\{\text{\euro},\$\},
\end{align*}
to denote the IBOR rate of the two currencies which is fixed in $t_{n-1}^{\ell}$ for the interval $(t_{n-1}^{\ell},t_{n}^{\ell}]$, and is paid in $t_{n}^{\ell}$. In line with the prevailing market standard, we assume that the collateral is exchanged in USD, namely $r^{c,k_3}=r^{c,k_0}=r^{c,\$}$, with $r^{c,\$}$ being initially the Fed Fund rate which is an unsecured overnight rate. The value of the resetting cross-currency swap is then
\begin{align*}
MtMCCS^{\$, \$}=\phi\left(S^{\$}_t(A_{CCS}^{\$},C^{\$})-\mathcal{X}^{\$,\text{\euro}}_tS^{\text{\euro}}_t(A_{MtMCCS}^{\text{\euro}},C^{\$})\right),
\end{align*}
where, before the LIBOR discontinuation, the USD leg is evaluated as in \eqref{eq:CCYk_0}, namely
\begin{align} \label{eq:usdleg}
\begin{aligned}
S^{\$}_t(A_{CCS}^{\$},C^{\$})&=\cN^{\$}\Biggl(-B^{\$,\$}(t,\tau^s)\mathbb{I}_{\left\{t\le \tau^s\right\}}+B^{\$,\$}(t,\tau^e)\Biggr.\\&\Biggl.+\sum_{n=1}^{N^{\$}}\delta_n^{\$}\Excond{\QQ^{\$}}{\frac{B^{c,\$}_t}{B^{c,\$}_{t_n^{\$}}}\mathcal{I}^{\$}_{t_{n-1}^{\$}}(t_{n-1}^{\$}, t_{n}^{\$})}{\cG_t}\mathbb{I}_{\left\{t\le t_n^{\$}\right\}}\Biggr)\\
&=\cN^{\$}\Biggl(-B^{\$,\$}(t,\tau^s)\mathbb{I}_{\left\{t\le \tau^s\right\}}+B^{\$,\$}(t,\tau^e)\Biggr.\\&\Biggl.+\sum_{n=1}^{N^{\$}}\delta_n^{\$}B^{\$,\$}(t,t_n^{\$})\Excond{\QQ^{t_n^{\$},\$,\$}}{\mathcal{I}^{\$}_{t_{n-1}^{\$}}(t_{n-1}^{\$}, t_{n}^{\$})}{\cG_t}\mathbb{I}_{\left\{t\le t_n^{\$}\right\}}\Biggr).
\end{aligned}
\end{align}
Here the spread $\mathcal{S}_0^{\text{\$}}(\tau^e)=0$ because the USD currency is the strong one in the USD-EUR pair. The EUR leg is instead evaluated as in \eqref{eq:domlegres} by
\begin{align*}
    \begin{aligned}
    S_t^{\text{\euro}}(A_{MtMCCS}^{\text{\euro}},C^{\$})&=\cN^{\$}\left(\sum_{n=1}^{N^{\text{\euro}}}\Excond{\QQ^{\text{\euro}}}{\frac{B^{c,\text{\euro}, \$}_t}{B^{c,\text{\euro}, \$}_{t_n^{\text{\euro}}}}\mathcal{X}^{\text{\euro}, \$}_{t_{n-1}^{\text{\euro}}}\left(1+\delta_n^{\text{\euro}}\left(\mathcal{I}^{\text{\euro}}_{t_{n-1}^{\text{\euro}}}(t_{n-1}^{\text{\euro}}, t_{n}^{\text{\euro}})+\mathcal{S}_0^{\text{\euro}}(\tau^e)\right)\right)}{\cG_t}\mathbb{I}_{\left\{t\le t_n^{\text{\euro}}\right\}}\right.\\
    &\left.-\sum_{n=1}^{N^{\text{\euro}}}\Excond{\QQ^{\text{\euro}}}{\frac{B^{c,\text{\euro}, \$}_t}{B^{c,\text{\euro}, \$}_{t_{n-1}^{\text{\euro}}}}\mathcal{X}^{\text{\euro},\$}_{t_{n-1}^{\text{\euro}}}}{\cG_t}\mathbb{I}_{\left\{t\le t_n^{\text{\euro}}\right\}}\right),
\end{aligned}
\end{align*}
where the rate $r^{c,\text{\euro}}$ was initially represented by the EONIA rate. We remark that the expression above depends also on the infinitesimal cross-currency basis $q^{\$,\text{\euro}}$. 

The benchmark reform had multiple impacts on the valuation of this type of products. The first one is a switch of the collateral rate by central counterparties  for both the currency areas, which also became the market standard for OTC markets. On 27 July 2020, the central counterparties switched $r^{c,\text{\euro}}$ from EONIA to ESTR. Subsequently, on 16 October 2020, the rate $r^{c,\$}$ has been switched from the Fed Fund rate to SOFR. Both the switches of discount rates triggered exchanges of cash flows among agents. Even more relevant is the discontinuation of the USD 3M LIBOR rate, which is impacting directly the USD leg. In this case, the agents can agree on different alternatives, all of which are captured by our framework.

A first choice is to replace the USD 3M LIBOR by means of AMERIBOR T90. In this case, the new benchmark is still an unsecured forward-looking rate that embeds a dynamic credit component, hence the valuation formula \eqref{eq:usdleg} is virtually unchanged. A second choice is to adopt the ISDA fallback protocol\footnote{\url{https://assets.isda.org/media/3062e7b4/23aa1658-pdf/}}, where USD 3M LIBOR is replaced by the sum of the backward-looking rate and a fixed credit spread. This means that in \eqref{eq:usdleg} we replace the spot index $\mathcal{I}^{\$}_{t_{n-1}^{\$}}(t_{n-1}^{\$}, t_{n}^{\$})$ by\footnote{Concretely, the formula involves a daily compounding of rates over the whole interval $(t_{n-1}^{\$},t_{n}^{\$}]$ which we do not present here to avoid the introduction of further notations.}
\begin{align*}
  I^{\$}_{t_{n-1}^{\$}}(t_{n-1}^{\$}, t_{n}^{\$}, t_{n}^{\$}):=\frac{1}{t_{n}^{\$}-t_{n-1}^{\$}}\left(e^{\int_{t_{n-1}^{\$}}^{t_{n}^{\$}} r^{c,\$}_u du}-1\right)+ \mathcal{CS}=\frac{1}{t_{n}^{\$}-t_{n-1}^{\$}}\left(\frac{B^{c,\$}_{t_{n}^{\$}}}{B^{c,\$}_{t_{n-1}^{\$}}}-1\right) + \mathcal{CS},
\end{align*}
where $\mathcal{CS}$ denotes a credit spread between the USD 3M LIBOR and the backward-looking rate. This is usually estimated as a mean or a median on the basis of historical data. For an in-depth analysis of LIBOR fallbacks from a quantitative perspective, we refer to \cite{henrard2019}. We limit ourselves to notice that a constant credit spread is a questionable choice, since it is clearly unable to capture the dynamic nature of inter-bank risk, \cite{fitr12}.

In the aftermath of the benchmark reform, cross-currency swaps still fall within our HJM framework: the market practice has moved to the use of the backward-looking rate on both legs.

\section{Conclusions and final remarks}
We introduced a general Heath-Jarrow-Morton framework that allows to simultaneously model multiple-curve interest rate markets denominated in different currencies. The framework is flexible enough to capture the stylized facts illustrated in the introduction, including the recent transition away from unsecured interbank rates in the US and the UK markets. The framework's inherent flexibility and generality provide ground for exploring diverse directions for future research.

In the present work we made the choice not to consider options, but to focus on cross-currency swaps as a minimal example of a contract that highlights the relevance of our framework. Having established such sound foundation, a possible natural direction for future research involves the study of further non-linear European derivatives, also in view of model calibration. Recalling that market quotes refer to fully collateralized instruments, the pricing formula \eqref{eq:fullCollateral} provides the tool to derive the evaluation of any fully collateralized European claim relevant for the calibration procedure. Given initial interest rate curves obtained via a bootstrap procedure, \eqref{eq:fullCollateral} can indeed be used to compute the price of futures/caps/floors/swaptions for the base currency $k_0$ and for all other foreign interest rate markets. In a subsequent step, one can compute the price of FX futures and options relevant for the calibration of the spot FX models. In line with market practice, a cascade procedure can be devised where first all single-currency interest rate models are calibrated and, subsequently, the obtained interest rate models are fed as input for determining the drifts in the calibration of FX models. Such a high-dimensional calibration procedure is beyond the scope of the present work.

A further potential direction for future investigations would involve the inclusion of stochastic discontinuities as in \cite{fontana2023b}.

\appendix
\section{Multi-currency trading in the basic market}\label{sec:MultiCurrTrading}
The objective of the present and subsequent sections is to construct a general arbitrage-free multiple-curve cross-currency market with an arbitrary number of currencies and risky assets in each currency area. On such typically incomplete market, we study the pricing of contingent claims when collateral is exchanged in an arbitrary currency denomination. When the pricing formula that we derive here is applied to ZCBs or forward on indices, we obtain the basic objects that are modelled under the HJM framework in the main body of the paper and that are used to construct useful forward measures.

We follow the notation of \cite{gnoSei2021}. Let $T>0$ be a fixed time horizon and $\left(\Omega,\cG,\GG,\PP\right)$ be a filtered probability space  with the filtration $\GG=\left(\cG_t\right)_{t\in [0,T]}$ satisfying the usual conditions. Here $\cG_0$ is assumed to be trivial, and all the processes to be introduced in the sequel are assumed to be $\GG$-adapted right-continuous with left limits (RCLL) semimartingales. 

We then postulate the existence of $L\in\mathbb{N}$ economies, and we introduce the following indices ranging from $1$ to $L$ in order to distinguish between all the possible scenarios:
\begin{enumerate}
\item $k_0$ denotes the currency of denomination of the portfolio, hence represents the \emph{domestic currency};
\item $k_1$ denotes the currency of denomination of the risky assets, the associated repo cash accounts, and of the unsecured funding accounts;
\item $k_2$ denotes the currency of denomination for the contractual cash-flows;
\item $k_3$ denotes the currency of denomination of the collateral;
\item $k$ will be used to denote a general currency.
\end{enumerate}
We set to $d_{k_1}$ the number of risky assets which are traded in terms of the currency with index $k_1$. Then $S^{i,k_1}$ denotes the ex-dividend price of the $i$-th risky asset traded in units of currency $k_1$, and $D^{i,k_1}$ is the corresponding cumulative dividend stream, for every $i=1,\ldots, d_{k_1}$. 

The trading desk uses different sources of funding, each being represented by a suitable family of cash accounts.  In particular, for each risky asset there is an asset-specific funding account, which we call the \emph{repo account}. We denote by $B^{i,k_1}$ the funding account associated to the asset $S^{i,k_1}$. Positive or negative dividends from the risky asset $S^{i, k_1}$ are invested in the corresponding funding account $B^{i,k_1}$. For unsecured funding, we assume that the trading desk can fund its activity by unsecured borrowing and lending in different currencies. We then introduce the cash accounts $B^{k_1}:=B^{0,k_1}$ with unsecured rates $r^{k_1}$, for $k_1=1,\ldots,L$. Moreover, we use the symbol $\,\widehat{\cdot}\,$ to denote quantities which are discounted by means of their corresponding repo account,  and the symbol $\,\widetilde{\cdot}\,$ for quantities which are discounted by means of the corresponding unsecured account. For example, for the risky asset $S^{i, k_1}$ we write $$\hat{S}^{i, k_1} := \frac{S^{i, k_1}}{B^{i, k_1}}, \quad \mbox{ and }\quad  \tilde{S}^{i, k_1} := \frac{S^{i, k_1}}{B^{k_1}}.$$  Let us recall that $\mathcal{X}^{k_0,k}$ is the price of one unit of currency $k$ in terms of currency $k_0$, for every $k\ne k_0$. 

We work under the following assumptions as in \cite[Assumption 2.1]{gnoSei2021}.
\begin{assumption} We assume that:
\begin{enumerate}
\item For all $i=1,\ldots,d_{k_1}$, and all $k_1$, the ex-dividend price processes $S^{i,k_1}$ are real-valued RCLL semimartingales;
\item For all $i=1,\ldots,d_{k_1}$, and all $k_1$, the cumulative dividend streams $D^{i,k_1}$ are processes of finite variation with $D^{i,k_1}_0=0$;
\item For all $j=0,\ldots,d_{k_1}$, and all $k_1$, the funding accounts $B^{j,k_1}$ are strictly positive and continuous processes of finite variation with $B^{j,k_1}_0=1$;
\item For all $k_0$ and all $k\ne k_0$, the exchange rate processes $\mathcal{X}^{k_0,k}$ are positive-valued RCLL semimartingales.
\end{enumerate}
\end{assumption}
Following \cite{gnoSei2021}, we shall first characterize the absence of arbitrage in a market consisting solely of basic traded assets. A trading portfolio in this market is defined as follows.
\begin{definition}
Let $N_S:= \sum_{k_1=1}^L d_{k_1}$ be the total number of traded assets in all currencies. A \emph{dynamic portfolio} consisting of risky securities and funding accounts is denoted by $\varphi=(\xi, \psi)$, where:
\begin{enumerate}
    \item $\xi \in \RR^{N_S}$ with $\GG$-predictable components, $\xi^{i,k_1}$, representing the number of shares owned on the risky asset $S^{i,k_1}$, for $i=1,\ldots, d_{k_1}$, and $k_1=1,\ldots, L$;
    \item $\psi \in \RR^{N_S+L}$ with $\GG$-predictable components, $\psi^{i,k_1}$, representing the units of cash account on $B^{i,k_1}$, for $i=0,\ldots, d_{k_1}$, and $k_1=1,\ldots, L$. For $i=0$ we use the shorthand $\psi^{k_1}:=\psi^{0,k_1}$. 
\end{enumerate} 
\end{definition}

We denote by $V(\varphi)$ the wealth process of the trading strategy $\varphi$ expressed in currency $k_0$, where we omit the index $k_0$ to simplify the notation. From the definition of trading strategy, it is easy to see that
\begin{align*}
V_t(\varphi)=\sum_{k_1=1}^L\mathcal{X}^{k_0,k_1}_t\left(\sum_{i=1}^{d_{k_1}} \xi_{t}^{i,k_1} S_{t}^{i,k_1}+\sum_{j=0}^{d_{k_1}} \psi_{t}^{j,k_1} B_{t}^{j,k_1}\right).
\end{align*}
The discounted wealth process is $\tilde V(\varphi) := \frac{V(\varphi)}{B^{k_0}}$.

We now introduce the concept of self-financing trading strategy.
\begin{definition}\label{def:firstSelfFinancing} A trading strategy $\varphi$ is \emph{self financing} whenever the wealth process $V(\varphi)$ satisfies
\begin{align*}
\begin{aligned}
V_t(\varphi)&=\sum_{k_1=1}^L\left\{\sum_{i=1}^{d_{k_1}}\left(\int_{(0, t]} \mathcal{X}_{s-}^{k_0, k_{1}} \xi_{s}^{i, k_{1}}d S_{s}^{i, k_{1}}+\int_{(0, t]} \mathcal{X}_{s}^{k_0, k_{1}} \xi_{s}^{i, k_{1}}d D_{s}^{i, k_{1}}\right.\right.\\
&\left.+\int_{(0, t]} \xi_{s}^{i, k_{1}} S_{s-}^{i, k_{1}} d \mathcal{X}_{s}^{k_0, k_{1}}+\int_{(0, t]} \xi_{s}^{i, k_{1}} d\left[S^{i, k_{1}}, \mathcal{X}^{k_0, k_{1}}\right]_s\right)\\
&\left.+\sum_{j=0}^{d_{k_1}}\left(\int_{(0,t]} \mathcal{X}_{s}^{k_0, k_{1}} \psi_{s}^{j, k_{1}} d B_{s}^{j, k_{1}}+\int_{(0,t]}\psi_{s}^{j, k_{1}} B_{s}^{j, k_{1}} d \mathcal{X}_{s}^{k_0, k_{1}}\right)\right\}.
\end{aligned}
\end{align*}
\end{definition}

Our first task is to provide conditions guaranteeing absence of arbitrage in the  basic market consisting only of risky assets and cash account positions.
The concepts of admissibility and of arbitrage opportunity that we consider are the standard ones.

\begin{definition}\label{def:Admissible}
A self-financing trading strategy $\varphi$ is \emph{admissible} for the trader whenever the discounted wealth $\tilde{V}(\varphi)$ is bounded from below by a constant. 
An admissible trading strategy $\varphi$ is an \emph{arbitrage opportunity} whenever $\PP\left(\tilde{V}_T(\varphi)\geq 0\right)=1$ and
 $\PP\left(\tilde{V}_T(\varphi)> 0\right)>0$, for $T>0$.
\end{definition}

In a market with two risk-free assets growing at two different rates it is straightforward to construct an arbitrage opportunity via a long/short strategy. We preclude such arbitrage opportunities via the following condition, which we call the repo constraint
\begin{align}
\label{eq:repoConstraint}
\psi^{i,k_1}_tB^{i,k_1}_t+\xi^{i,k_1}_tS^{i,k_1}_t=0,\quad \mbox{ for every }t\in[0,T], \ i=1,\ldots d_{k_1}, \mbox{ and } 1\leq k_1\leq L.
\end{align}
The repo constraint reflects the realistic situation where the holdings on every risky asset are financed by a position on the asset-specific cash account, and it is not possible to create long-short positions on different cash accounts to produce risk-less profits. 

\begin{lemma}\label{lem:V0dinrepo}
    Under the repo constraint \eqref{eq:repoConstraint}, the discounted portfolio dynamics is
\begin{align*}
\begin{aligned}
d\tilde{V}_t(\varphi) &= \frac{1}{B_t^{k_0}}\sum_{k_1=1}^L  \sum_{i=1}^{d_{k_1}} \xi_t^{i,k_1}\left(dK_t^{i,k_0,k_1}-S^{i,k_1}_{t-}d\mathcal{X}^{k_0,k_1}_t\right)+\sum_{k_1=1,k_1 \neq k_0}^L\psi_t^{k_1}d\left(\frac{B^{k_1}\mathcal{X}^{k_0,k_1}}{B^{k_0}}\right)_t,
\end{aligned}
\end{align*}
where for every $i=1,\ldots d_{k_1}$, and $k_1=1,\ldots,L$, the processes
\begin{align*}
\begin{aligned}
K^{i,k_0,k_1}_t&:=\int_{(0,t]}\left(S^{i,k_1}_{s-}d\mathcal{X}^{k_0,k_1}_s-\frac{S^{i,k_1}_s\mathcal{X}^{k_0,k_1}_s}{B^{i,k_1}_s}dB^{i,k_1}_s+\mathcal{X}^{k_0,k_1}_{s-}dS^{i,k_1}_s+d\left[\mathcal{X}^{k_0,k_1},S^{i,k_1}\right]_s+\mathcal{X}^{k_0,k_1}_sdD^{i,k_1}_s\right)
\end{aligned}
\end{align*}
represent the wealth, denominated in units of currency $k_0$ and discounted by the funding account $B^{i,k_1}$, of a self-financing trading strategy that invests in the asset $S^{i,k_1}$.
\end{lemma}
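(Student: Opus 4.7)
The plan is to combine three ingredients: the differential form of the self-financing condition, the algebraic substitution dictated by the repo constraint, and It\^o's product rule applied to $V/B^{k_0}$ using the continuity and finite variation of $B^{k_0}$.

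First I would write the self-financing condition of Definition \ref{def:firstSelfFinancing} in differential form. For every $k_1$ and $i\geq 1$, the block of four terms attached to $\xi^{i,k_1}$ is exactly what It\^o's integration-by-parts formula produces for $d(\mathcal{X}^{k_0,k_1} S^{i,k_1})_t$ (up to the cumulative dividend contribution $\mathcal{X}^{k_0,k_1}_t \xi^{i,k_1}_t dD^{i,k_1}_t$), while each funding-account block attached to $\psi^{j,k_1}$ is the integration by parts for $d(\mathcal{X}^{k_0,k_1} B^{j,k_1})_t$. In the latter the covariation $[\mathcal{X}^{k_0,k_1},B^{j,k_1}]$ vanishes because $B^{j,k_1}$ is continuous of finite variation, so the choice between $\mathcal{X}^{k_0,k_1}_t$ and $\mathcal{X}^{k_0,k_1}_{t-}$ against $dB^{j,k_1}_t$ is immaterial.

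Next I would apply the repo constraint \eqref{eq:repoConstraint} for each $i=1,\ldots, d_{k_1}$, substituting $\psi^{i,k_1}_t B^{i,k_1}_t = -\xi^{i,k_1}_t S^{i,k_1}_{t-}$ (the left-limit version is the predictable representative of the two sides). Inserting this into the two terms $\mathcal{X}^{k_0,k_1}_t \psi^{i,k_1}_t dB^{i,k_1}_t$ and $\psi^{i,k_1}_t B^{i,k_1}_t d\mathcal{X}^{k_0,k_1}_t$ and grouping them with the four risky-asset terms produces, for each pair $(i,k_1)$ with $i\geq 1$, two $S^{i,k_1}_{t-} d\mathcal{X}^{k_0,k_1}_t$ contributions with opposite sign that cancel, and what remains is exactly $\xi^{i,k_1}_t \bigl( dK^{i,k_0,k_1}_t - S^{i,k_1}_{t-} d\mathcal{X}^{k_0,k_1}_t\bigr)$. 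The identification with $dK^{i,k_0,k_1}$ uses again that $S^{i,k_1}_{t-}$ and $S^{i,k_1}_t$ give the same integrand against $dB^{i,k_1}_t$, since $B^{i,k_1}$ has continuous paths of finite variation.

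Finally, I would discount via $d(V/B^{k_0})_t = (1/B^{k_0}_t) dV_t - (V_{t-}/(B^{k_0}_t)^2) dB^{k_0}_t$, where no covariation term appears because $B^{k_0}$ is continuous of finite variation. The repo constraint simplifies $V_{t-} = \sum_{k_1} \psi^{k_1}_t B^{k_1}_t \mathcal{X}^{k_0,k_1}_{t-}$, where $\psi^{k_1}:=\psi^{0,k_1}$. For each $k_1$ the three terms $\psi^{k_1}_t\bigl[\mathcal{X}^{k_0,k_1}_t dB^{k_1}_t/B^{k_0}_t + B^{k_1}_t d\mathcal{X}^{k_0,k_1}_t/B^{k_0}_t - B^{k_1}_t \mathcal{X}^{k_0,k_1}_{t-}/(B^{k_0}_t)^2\, dB^{k_0}_t\bigr]$ regroup, by a second application of It\^o's product rule, into $\psi^{k_1}_t d(B^{k_1}\mathcal{X}^{k_0,k_1}/B^{k_0})_t$. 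The summand corresponding to $k_1=k_0$ vanishes because $B^{k_0}\mathcal{X}^{k_0,k_0}/B^{k_0}\equiv 1$ has zero differential, leaving the stated identity. The main obstacle is purely bookkeeping: consistently handling left-limits across trading strategies, asset prices, and exchange rates, and checking that the interchanges $S^{i,k_1}_t \leftrightarrow S^{i,k_1}_{t-}$ and $\mathcal{X}^{k_0,k_1}_t \leftrightarrow \mathcal{X}^{k_0,k_1}_{t-}$ are legitimate whenever they appear against the continuous finite-variation integrators $dB^{i,k_1}$ and $dB^{k_0}$.
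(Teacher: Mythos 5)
Your proof is correct and, in substance, follows the same computational path as the paper's: It\^o's product rule for the discounted wealth, substitution of the repo constraint \eqref{eq:repoConstraint}, and regrouping. The difference is that the paper imports the intermediate decomposition of $d\tilde{V}_t(\varphi)$ from \cite[Corollary 2.15]{gnoSei2021} and only carries out the substitution step, whereas you rederive that decomposition directly from Definition \ref{def:firstSelfFinancing}. That buys something genuine: the cited corollary was established in a diffusive (continuous) setting, while the lemma here is asserted for general RCLL semimartingales, so your explicit bookkeeping --- the vanishing covariations against the continuous finite-variation accounts $B^{j,k_1}$ and $B^{k_0}$, the harmless interchanges $S^{i,k_1}_t \leftrightarrow S^{i,k_1}_{t-}$ and $\mathcal{X}^{k_0,k_1}_t \leftrightarrow \mathcal{X}^{k_0,k_1}_{t-}$ against those integrators, and the predictable (left-limit) reading of the repo constraint when it is substituted into the term multiplying $d\mathcal{X}^{k_0,k_1}_t$ --- is exactly the extension step the paper leaves implicit; note the constraint as written uses $S^{i,k_1}_t$, and against a jumping $\mathcal{X}^{k_0,k_1}$ the interchange with $S^{i,k_1}_{t-}$ is not automatic, so flagging the predictable representative is the right move and is what the paper's own statement (with $S^{i,k_1}_{t-}$ in the conclusion) tacitly assumes. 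Your final regrouping of the unsecured-account terms into $\psi^{k_1}_t\, d\bigl(B^{k_1}\mathcal{X}^{k_0,k_1}/B^{k_0}\bigr)_t$, together with the observation that the $k_1=k_0$ summand drops out because $\mathcal{X}^{k_0,k_0}\equiv 1$, matches the stated identity exactly.
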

\begin{proof}
\cite[Corollary 2.15]{gnoSei2021} shows that the portfolio dynamics are of the form
\begin{align*}
\begin{aligned}
d\tilde{V}_t(\varphi)&=\sum_{k_1=1}^L\sum_{i=1}^{d_{k_1}}\frac{1}{B_t^{k_0}}\xi^{i,k_1}_tdK^{i,k_0,k_1}_t+\sum_{k_1=1}^L\sum_{i=1}^{d_{k_1}}\frac{1}{B^{i,k_1}}\left(\psi^{i,k_1}_t B^{i,k_1}_t+\xi^{i,k_1}_t S^{i,k_1}_t\right)\mathcal{X}^{k_0,k_1}_td\left(\frac{B^{i,k_1}}{B^{k_0}}\right)_t\\
&+\sum_{k_1=1}^L\sum_{i=1}^{d_{k_1}}\frac{B^{i,k_1}_t}{B_t^{k_0}}\psi^{i,k_1}_td\mathcal{X}^{k_0,k_1}_t+\sum_{k_1=1}^L \psi^{k_1}_td\left(\frac{\mathcal{X}^{k_0,k_1}B^{k_1}}{B^{k_0}}\right)_t.
\end{aligned}
\end{align*}
Hence substituting the repo constraints \eqref{eq:repoConstraint} and regrouping terms, we get the result.
\end{proof}

The absence of arbitrage in the basic model is characterized by \cite[Proposition 3.4]{gnoSei2021} as follows.

\begin{proposition}\label{prop:aoabasemkt}
Assume that all the strategies available are admissible and satisfy the repo constraint \eqref{eq:repoConstraint}. Then the  multi-currency model is arbitrage free if there exists a probability measure $\QQ^{k_0}\sim\PP$ on $(\Omega,\GG)$, such that the processes
\begin{align}
\label{eq:firstMartingale}
&\left(\int_{(0,t]}\left(\mathcal{X}^{k_0,k_1}_{s-}d\left(\frac{S^{i,k_1}}{B^{i,k_1}}\right)_s+\frac{\mathcal{X}^{k_0,k_1}_s}{B^{i,k_1}_s}dD^{i,k_1}_s+d\left[\frac{S^{i,k_1}}{B^{i,k_1}},\mathcal{X}^{k_0,k_1}\right]_s\right)\right)_{0\leq t\leq T} \mbox{ and}\\
\label{eq:secondMartingale}
&\left(\frac{\mathcal{X}^{k_0,k_1}_tB^{k_1}_t}{B^{k_0}_t}\right)_{0\leq t\leq T}
\end{align}
are $(\QQ^{k_0}, \GG)$-local martingales, for all $i=1,\ldots,d_{k_1}$, and all $k_1=1\ldots,L$.
\end{proposition}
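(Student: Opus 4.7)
The plan is to combine the portfolio decomposition of Lemma \ref{lem:V0dinrepo} with the assumed local martingale properties \eqref{eq:firstMartingale}--\eqref{eq:secondMartingale}, obtain that the discounted wealth $\tilde V(\varphi)$ is a $(\QQ^{k_0},\GG)$-local martingale, and then use admissibility to run the classical supermartingale argument against arbitrage.

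The first and main step is an algebraic identification. Starting from Lemma \ref{lem:V0dinrepo}, I want to recognise the integrand $dK^{i,k_0,k_1}_t - S^{i,k_1}_{t-}d\mathcal{X}^{k_0,k_1}_t$ as $B^{i,k_1}_t$ times the integrand of the process \eqref{eq:firstMartingale}. Expanding $dK^{i,k_0,k_1}$ and cancelling the term $S^{i,k_1}_{t-}d\mathcal{X}^{k_0,k_1}_t$ leaves
\begin{equation*}
-\frac{S^{i,k_1}_t\mathcal{X}^{k_0,k_1}_t}{B^{i,k_1}_t}dB^{i,k_1}_t+\mathcal{X}^{k_0,k_1}_{t-}dS^{i,k_1}_t+d\bigl[\mathcal{X}^{k_0,k_1},S^{i,k_1}\bigr]_t+\mathcal{X}^{k_0,k_1}_tdD^{i,k_1}_t.
\end{equation*}
On the other hand, since $B^{i,k_1}$ is continuous of finite variation, integration by parts yields $d(S^{i,k_1}/B^{i,k_1})_t = (B^{i,k_1}_t)^{-1}dS^{i,k_1}_t - S^{i,k_1}_t(B^{i,k_1}_t)^{-2}dB^{i,k_1}_t$ and $d[S^{i,k_1}/B^{i,k_1},\mathcal{X}^{k_0,k_1}]_t = (B^{i,k_1}_t)^{-1}d[S^{i,k_1},\mathcal{X}^{k_0,k_1}]_t$. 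Multiplying the integrand of \eqref{eq:firstMartingale} by $B^{i,k_1}_t$ and using that integrals against the continuous process $B^{i,k_1}$ do not distinguish between $\mathcal{X}^{k_0,k_1}$ and $\mathcal{X}^{k_0,k_1}_{-}$ produces exactly the display above. Hence
\begin{equation*}
\frac{1}{B^{k_0}_t}\bigl(dK^{i,k_0,k_1}_t - S^{i,k_1}_{t-}d\mathcal{X}^{k_0,k_1}_t\bigr)=\frac{B^{i,k_1}_t}{B^{k_0}_t}\,dM^{i,k_0,k_1}_t,
\end{equation*}
where $M^{i,k_0,k_1}$ denotes the process in \eqref{eq:firstMartingale}.

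Substituting this into the wealth decomposition from Lemma \ref{lem:V0dinrepo} gives
\begin{equation*}
d\tilde V_t(\varphi)=\sum_{k_1=1}^L\sum_{i=1}^{d_{k_1}}\xi^{i,k_1}_t\,\frac{B^{i,k_1}_t}{B^{k_0}_t}\,dM^{i,k_0,k_1}_t+\sum_{k_1\neq k_0}\psi^{k_1}_t\,d\!\left(\frac{\mathcal{X}^{k_0,k_1}B^{k_1}}{B^{k_0}}\right)_t,
\end{equation*}
which is a sum of stochastic integrals of predictable processes against the $(\QQ^{k_0},\GG)$-local martingales $M^{i,k_0,k_1}$ in \eqref{eq:firstMartingale} and $\mathcal{X}^{k_0,k_1}B^{k_1}/B^{k_0}$ in \eqref{eq:secondMartingale}. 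Hence $\tilde V(\varphi)$ is itself a $(\QQ^{k_0},\GG)$-local martingale; integrability of the integrands is subsumed in the requirement that $\varphi$ be a self-financing trading strategy in the sense of Definition \ref{def:firstSelfFinancing}.

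The last step is standard: by admissibility, $\tilde V(\varphi)$ is bounded from below by a constant, so a local martingale bounded below is a $(\QQ^{k_0},\GG)$-supermartingale. Assuming $\tilde V_0(\varphi)=0$ (the natural convention for arbitrage strategies, implicit in Definition \ref{def:Admissible}), we get $\mathbb{E}^{\QQ^{k_0}}[\tilde V_T(\varphi)]\le 0$. Combined with $\tilde V_T(\varphi)\ge 0$ $\PP$-a.s.\ (hence $\QQ^{k_0}$-a.s.\ by equivalence), this forces $\tilde V_T(\varphi)=0$ $\QQ^{k_0}$-a.s., contradicting $\PP(\tilde V_T(\varphi)>0)>0$. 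Therefore no arbitrage opportunity exists.

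The main obstacle is the algebraic reconciliation in the first step: one must carefully track the asymmetric appearance of $\mathcal{X}^{k_0,k_1}_{t-}$ versus $\mathcal{X}^{k_0,k_1}_t$ in the definition of $K^{i,k_0,k_1}$ and in the integration-by-parts expansion of $d(S^{i,k_1}/B^{i,k_1})$, which only works cleanly because of the continuity and finite variation assumed for the repo accounts $B^{i,k_1}$. Everything else is routine.
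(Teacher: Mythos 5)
Your proposal is correct and follows essentially the same route the paper takes (it defers the proof to \cite[Proposition 3.4]{gnoSei2021}, but the argument there is exactly this: use the decomposition of Lemma \ref{lem:V0dinrepo}, identify $dK^{i,k_0,k_1}_t-S^{i,k_1}_{t-}d\mathcal{X}^{k_0,k_1}_t$ with $B^{i,k_1}_t$ times the integrand of \eqref{eq:firstMartingale} via the continuity and finite variation of $B^{i,k_1}$, conclude that $\tilde V(\varphi)$ is a stochastic integral against the local martingales \eqref{eq:firstMartingale}--\eqref{eq:secondMartingale}, and invoke admissibility to get the supermartingale property and rule out arbitrage). The only cosmetic caveat is that a stochastic integral of a general predictable integrand against a local martingale need only be a sigma-martingale, but by the Ansel--Stricker lemma a bounded-from-below such integral is still a supermartingale, so your conclusion stands unchanged.
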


Thanks to Proposition \ref{prop:aoabasemkt} we have the recipe to construct arbitrage-free models. In the next section we shall introduce the concept of collateralization and extend the market model with collateralized contracts.

\section{Pricing under funding costs and collateralization}\label{sec:pricing}
The approach of \cite{gnoSei2021} is based on the assumption that it is possible to replicate contracts with cash flow streams by means of collateralized trading strategies. However, interest rate markets are intrinsically incomplete since interest rates are not traded assets. We then need to adapt the approach of \cite{gnoSei2021} to the setting of martingale modeling. The idea is that introducing a contingent claim with a given and yet-to-be-determined price process into an arbitrage-free market model does not introduce arbitrage opportunities. Another aspect is that the formulas of \cite{gnoSei2021} were derived under a diffusive setting, namely without jumps. We consider a more general setting and work with semimartingales as driving processes for the market.

The approach is as follows: starting from the trading portfolio $V(\varphi)$, we include in the market a collateralized contingent claim with dividend flow. The inclusion of this additional asset must be done in a coherent manner, namely without breaking the martingale property of the market. We start with the following assumption.
\begin{assumption}\label{assu:martingales}
The processes \eqref{eq:firstMartingale} and \eqref{eq:secondMartingale} and stochastic integrals with respect to \eqref{eq:firstMartingale} and \eqref{eq:secondMartingale} are true $(\QQ^{k_0},\GG)$-martingales.
\end{assumption}

Thanks to Lemma \ref{lem:V0dinrepo}, Assumption \ref{assu:martingales} guarantees that $\tilde{V}(\varphi)$ is a true $(\QQ^{k_0},\GG)$-martingale. As a consequence, the basic market consisting only of the primary assets is free of arbitrage opportunities. We now proceed to extend the market by introducing the contingent claim. In particular, we define the dividend flow of a financial contract as in \cite[Definition 2.5]{gnoSei2021}, where we exclude possible cash flows occurring at time zero as these would only represent a shift in the value of the contract.
\begin{definition}\label{def:bilContr}
We define a \emph{financial contract} as an arbitrary RCLL process $A^{k_2}$ of finite variation representing the cumulative cash flows paid by the contract in currency $k_2$ from time $0$ until the maturity date $T$. By convention, we set $A^{k_2}_{0}=0$.
\end{definition}

We now introduce collateralization. This is a legally regulated procedure, according to which the two parties agree to mutually exchange assets (most typically cash) as a guarantee against default risk. Collateralization is in principle possible also between two default-free agents, and represents a standard procedure in the context of futures trading. Moreover, current market quotes are published under the  assumption that instruments are collateralized without referring to any particular economic agent. Hence it is important to study collateralization without introducing a credit model for the agents in the transaction. We refer to \cite{BieRut15} for the single-currency case and to \cite{gnoSei2021} for the cross-currency setting, where collateralization is treated under different legal assumptions, also covering the case of units of risky assets as collateral. We consider the cash-collateral convention, meaning that the collateral is exchanged in cash, i.e., in units of an arbitrary currency $k_3$, and not in terms of units of a risky security. Rehypothecation is also allowed, meaning that the trader can use the cash he/she receives to fund the trading activity. This constitutes the most adopted convention for variation margin. 

We represent the collateral by a right-continuous $\GG$-adapted process $C^{k_3}$ which is received or posted by the trader in units of currency $k_3$, with $k_3=1,\ldots,L$. In particular, for every time instant $t\in[0, T]$, we denote with $C^{k_3,+}_t$ the value of collateral that is received by the trader from the counterparty at time $t$, and with $C^{k_3,-}_t$ the value of collateral that is posted by the trader to the counterparty at time $t$.
We assume that the collateral account satisfies $C^{k_3}_T=0$, meaning that the collateral is returned to its legal owner at the terminal time $T$. We also assume that the agent receives or pays interest contingent on being the poster or the receiver of collateral: the trader receives interest payments based on the rate $r^{c,k_3,l}$ or pays interests based on the rate $r^{c,k_3,b}$. 

We then introduce a contingent claim with dividend process $A^{k_2}$ collateralized by means of $C^{k_3}$, and with price process in domestic currency $S^{k_0}(A^{k_2},C^{k_3})$. We work under the following assumption.
\begin{assumption}\label{martingaleModeling}
The price of the contingent claim $S^{k_0}(A^{k_2},C^{k_3})$ depends only on the yet-to-be-paid cash-flows, i.e. $S_T^{k_0}(A^{k_2},C^{k_3})=0$, $\QQ^{k_0}$-a.s.. Moreover, for every $0\leq t\leq T$, we assume that $S_t^{k_0}(A^{k_2},C^{k_3})$ is integrable with respect to $(\QQ^{k_0},\cG_t)$.
\end{assumption}

We finally define the full-discounted value process of the claim including the evolution of the mark-to-market and the collateralization procedure.
\begin{definition}
The \emph{discounted full-value process} of the collateralized contingent claim $S^{k_0}(A^{k_2},C^{k_3})$ is defined by
\begin{align}
\label{eq:mathscriptM}
\begin{aligned}
\tilde{\mathscr{M}}_t:&=\tilde{S}_t^{k_0}(A^{k_2},C^{k_3})+\int_{(0,t]}\frac{\mathcal{X}^{k_0,k_2}_s}{B^{k_0}_s}dA^{k_2}_s\\
&+ \int_{(0,t]}\left[ \left( r_s^{k_0} - r_s^{c,k_3,b} \right)(C_s^{k_3})^+- \left(r^{k_0}_s - r_s^{c,k_3,l}  \right)(C_s^{k_3})^- \right] \frac{\mathcal{X}_s^{k_0,k_3}}{B_s^{k_0}} ds \\
&- \int_{(0,t]} \frac{C_s^{k_3}}{B_s^{k_0}}\mathcal{X}_s^{k_0,k_3}(r^{k_0}_s-r^{k_3}_s)ds.
\end{aligned}
\end{align}
\end{definition}

The various terms appearing in \eqref{eq:mathscriptM} have the following interpretation: $\tilde{S}^{k_0}_t(A^{k_2},C^{k_3})$ captures the fluctuations of the mark-to-market of the contract that is held by the trader, who also receives dividends during the lifetime of the contract. These dividends are reinvested in the unsecured cash account which produces $\int_{(0,t]}\frac{\mathcal{X}^{k_0,k_2}_s}{B^{k_0}_s}dA^{k_2}_s$. Moreover, the transaction is collateralized. If the trader receives $(C^{k_3})^+$, then he/she reinvests this amount at the unsecured rate $r^{k_0}$ and pays interests to the counterparty at the rate $r^{c,k_3,b}$, thus giving rise to the funding spread $r^{k_0} - r^{c,k_3,b}$. Similar considerations hold for the case when the trader posts the collateral amount $(C^{k_3})^-$. Finally, the last term in \eqref{eq:mathscriptM} captures the fluctuations of the collateral amount due to changes in the foreign exchange rate. 

The next assumption is crucial in order to preserve absence of arbitrage.
\begin{assumption}\label{assu:processMtilde} The process $\tilde{\mathscr{M}}$ and stochastic integrals with respect to $\tilde{\mathscr{M}}$ are true $(\QQ^{k_0},\GG)$-martingales. 
\end{assumption}

We now denote by $\varphi^{ex}=(\varphi,1)$ the self-financing portfolio that invests in the basic traded assets according to $\varphi$, and invests additionally one unit in the claim with discounted full-value process $\tilde{\mathscr{M}}$. The portfolio is admissible in the sense of Definition \ref{def:Admissible}, hence self financing, meaning that
\begin{equation*}
    d\tilde V_t(\varphi^{ex})= d\tilde V_t(\varphi) + d\tilde{\mathscr{M}}_t.
\end{equation*}

Starting from an extended portfolio $\varphi^{ex}$, we derive in the following theorem the price formula for the contingent claim ${S}^{k_0}(A^{k_2},C^{k_3})$.
\begin{theorem}\label{th:priceformula} Let Assumption \ref{assu:martingales}, \ref{martingaleModeling}, \ref{assu:processMtilde} and the repo constraint \eqref{eq:repoConstraint} hold.
Then the price in units of currency $k_0$ of a contingent claim with cash flows $A^{k_2}$ and with collateral $C^{k_3}$ is
\begin{align*}
\begin{aligned}
{S}^{k_0}_t(A^{k_2},C^{k_3})&=B^{k_0}_t\mathbb{E}^{\QQ^{k_0}}\left[\left.\int_{(t,T]}\frac{\mathcal{X}^{k_0,k_2}_s}{B^{k_0}_s}dA^{k_2}_s\right.\right.\\
&\quad\quad\quad\quad\quad +\int_{(t,T]}\left[ \left( r_s^{k_0} - r_s^{c,k_3,b} \right)(C_s^{k_3})^+- \left(r^{k_0}_s - r_s^{c,k_3,l}  \right)(C_s^{k_3})^- \right] \frac{\mathcal{X}_s^{k_0,k_3}}{B_s^{k_0}} ds\\
&\quad\quad\quad\quad\quad\left.\left. - \int_{(t,T]} \frac{C_s^{k_3}}{B_s^{k_0}}\mathcal{X}_s^{k_0,k_3}(r^{k_0}_s-r^{k_3}_s)ds\right|\mathcal{G}_t\right].
\end{aligned}
\end{align*}
\end{theorem}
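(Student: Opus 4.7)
The plan is to read the pricing formula directly off the martingale property of the discounted full-value process $\tilde{\mathscr{M}}$ defined in \eqref{eq:mathscriptM}. By Assumption~\ref{assu:processMtilde}, $\tilde{\mathscr{M}}$ is a true $(\QQ^{k_0},\GG)$-martingale, so for every $0\le t\le T$ one has
\begin{align*}
\tilde{\mathscr{M}}_t=\EE^{\QQ^{k_0}}\!\left[\tilde{\mathscr{M}}_T\mid\cG_t\right].
\end{align*}
The first step is therefore to evaluate $\tilde{\mathscr{M}}_T$ using the boundary conditions: $S^{k_0}_T(A^{k_2},C^{k_3})=0$ by Assumption~\ref{martingaleModeling}, hence $\tilde S^{k_0}_T(A^{k_2},C^{k_3})=0$, and the convention $C^{k_3}_T=0$ ensures that no residual collateral term remains at maturity. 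Consequently $\tilde{\mathscr{M}}_T$ reduces to the sum of the three integrals from $0$ to $T$ that appear on the right-hand side of \eqref{eq:mathscriptM}.

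The second step is algebraic. Writing \eqref{eq:mathscriptM} at time $t$, subtracting it from $\tilde{\mathscr{M}}_T$ inside the conditional expectation, and isolating $\tilde S^{k_0}_t(A^{k_2},C^{k_3})$, one gets
\begin{align*}
\tilde{S}_t^{k_0}(A^{k_2},C^{k_3})&=\EE^{\QQ^{k_0}}\!\left[\int_{(t,T]}\frac{\mathcal{X}^{k_0,k_2}_s}{B^{k_0}_s}dA^{k_2}_s\right.\\
&\quad +\int_{(t,T]}\!\left[\left(r_s^{k_0}-r_s^{c,k_3,b}\right)(C_s^{k_3})^{+}-\left(r^{k_0}_s-r_s^{c,k_3,l}\right)(C_s^{k_3})^{-}\right]\frac{\mathcal{X}_s^{k_0,k_3}}{B_s^{k_0}}ds\\
&\quad\left.-\int_{(t,T]}\frac{C_s^{k_3}}{B_s^{k_0}}\mathcal{X}_s^{k_0,k_3}(r^{k_0}_s-r^{k_3}_s)ds\,\bigg|\,\mathcal{G}_t\right],
\end{align*}
where all integrands up to time $t$ cancel because they are $\mathcal{G}_t$-measurable. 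Multiplying both sides by $B^{k_0}_t$, which is $\mathcal{G}_t$-measurable and strictly positive, yields the claimed representation for $S^{k_0}_t(A^{k_2},C^{k_3})$. The integrability of each term inside the conditional expectation follows from Assumption~\ref{martingaleModeling} together with Assumption~\ref{assu:processMtilde}, which guarantees that $\tilde{\mathscr{M}}_T\in L^1(\QQ^{k_0})$ and hence each summand on the right-hand side is well-defined.

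The main conceptual point — rather than the main calculational obstacle — is to justify why the martingale requirement on $\tilde{\mathscr{M}}$ is the right no-arbitrage condition for the extended market that includes the collateralized claim. Here one invokes the self-financing decomposition $d\tilde V_t(\varphi^{ex})=d\tilde V_t(\varphi)+d\tilde{\mathscr{M}}_t$: by Lemma~\ref{lem:V0dinrepo} and the repo constraint \eqref{eq:repoConstraint}, $\tilde V(\varphi)$ is already a $\QQ^{k_0}$-martingale under Assumption~\ref{assu:martingales}, so the additional position in the claim preserves the martingale property of $\tilde V(\varphi^{ex})$ precisely when $\tilde{\mathscr{M}}$ is itself a martingale, and Assumption~\ref{assu:processMtilde} then guarantees absence of arbitrage in the extended market. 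Once this is granted, the remainder of the argument is the direct manipulation of the martingale identity $\tilde{\mathscr{M}}_t=\EE^{\QQ^{k_0}}[\tilde{\mathscr{M}}_T\mid\cG_t]$ outlined above.
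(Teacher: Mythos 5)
Your proposal is correct and follows essentially the same route as the paper: the paper phrases the argument through the martingale property of the extended portfolio $\tilde V(\varphi^{ex})$ and then reduces to $\mathbb{E}^{\QQ^{k_0}}[\tilde{\mathscr{M}}_T-\tilde{\mathscr{M}}_t\mid\cG_t]=0$, while you invoke the martingale property of $\tilde{\mathscr{M}}$ (Assumption \ref{assu:processMtilde}) directly and carry out the same cancellation using $S_T^{k_0}(A^{k_2},C^{k_3})=0$ before multiplying by $B_t^{k_0}$. The only cosmetic remark is that the condition $C^{k_3}_T=0$ is not actually needed for the $ds$-integrals, and the terms up to time $t$ cancel because they are common to $\tilde{\mathscr{M}}_T$ and $\tilde{\mathscr{M}}_t$, not because of $\cG_t$-measurability; neither point affects the validity of the argument.
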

\begin{proof}
Combining Assumption \ref{assu:martingales} and Assumption \ref{assu:processMtilde}, we deduce that $\tilde{V}_t(\varphi^{ex})$ is a true $(\QQ^{k_0},\GG)$-martingale. 
From the martingale property of $\tilde{V}_t(\varphi^{ex})$ we then get that
\begin{align*}
\begin{aligned}
0=\mathbb{E}^{\QQ^{k_0}}\left[\left.\tilde{V}_T(\varphi^{ex})-\tilde{V}_t(\varphi^{ex})\right|\mathcal{G}_t\right]=\mathbb{E}^{\QQ^{k_0}}\left[\left.\tilde{\mathscr{M}}_T-\tilde{\mathscr{M}}_t\right|\mathcal{G}_t\right],
\end{aligned}
\end{align*}
from which, by Assumption \ref{martingaleModeling}, we deduce the expression for the price of the contingent claim.
\end{proof}

Notice that, modulo the different sign convention, the pricing formula derived in Theorem \ref{th:priceformula} is equivalent to the pricing equation (6.10) in \cite{gnoSei2021}. We stress however, that the present derivation does not assume a diffusive setting (in fact, it does not rely on any explicit dynamics), nor relies on the concept of replication.

We now simplify the setting by introducing the following assumption.
\begin{assumption}
    \label{ass:same_rc}
    We shall assume that $r^{c, k_3, b}= r^{c, k_3, l}$, $\QQ^{k_0}$-a.s. for all $k_3=1, \dots, L$. 
\end{assumption}
We then set $r^{c, k_3}:=r^{c, k_3, b}= r^{c, k_3, l}$, and let $B^{c, k_3}$ be the collateral cash account with interest rate $r^{c, k_3}$, namely 
    \begin{equation}
    \label{eq:Bck3def}
        B^{c,k_3}_t:=\exp\left\{\int_{(0,t]} r^{c, k_3}_sds\right\}.
    \end{equation}

We further say that the contingent claim with discounted full-value process \eqref{eq:mathscriptM} is perfectly or fully collateralized if
\begin{align}
\label{eq:fullcoll}
C^{k_3}=\frac{S^{k_0}(A^{k_2},C^{k_3})}{\mathcal{X}^{k_0,k_3}}, \ d\PP\otimes dt\mbox{-a.s.}.
\end{align}
In this case, the dynamics of $\tilde{\mathscr{M}}$ in \eqref{eq:mathscriptM} simplifies to
\begin{align}\label{eq:mathscriptM2}
d\tilde{\mathscr{M}}_t=\frac{dS^{k_0}_t(A^{k_2},C^{k_3})}{B^{k_0}_t}+\frac{\mathcal{X}^{k_0, k_2}_t}{B^{k_0}_t}dA^{k_2}_t-\frac{S_t^{k_0}(A^{k_2},C^{k_3})}{B^{k_0}_t}\left(r^{c,k_0}_t+q^{k_0,k_3}_t\right)dt,
\end{align}
and we obtain the following corollary.
\begin{corollary}\label{prop:priceformulafullcollateral}
Let Assumption \ref{assu:martingales}, \ref{martingaleModeling}, \ref{assu:processMtilde} and the repo constraint \eqref{eq:repoConstraint} hold. Then the price of a fully collateralized contingent claim with cash-flows $A^{k_2}$ and collateral $C^{k_3}$ is
\begin{align}
\label{eq:fullCollateral}
S_t^{k_0}(A^{k_2},C^{k_3})=B^{c,k_0,k_3}_t\mathbb{E}^{\QQ^{k_0}}\left[\left.\int_{(t,T]}\frac{\mathcal{X}^{k_0,k_2}_s}{B^{c,k_0,k_3}_s}dA^{k_2}_s\right|\mathcal{G}_t\right].
\end{align}
\end{corollary}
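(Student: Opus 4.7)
The plan is to exploit the martingale property of $\tilde{\mathscr{M}}$ (Assumption \ref{assu:processMtilde}) together with the simplified dynamics \eqref{eq:mathscriptM2} that holds under full collateralization. The key observation is that although $\tilde{\mathscr{M}}$ is built using the unsecured discounting $B^{k_0}$, after imposing $C^{k_3} = S^{k_0}(A^{k_2},C^{k_3})/\mathcal{X}^{k_0,k_3}$ the ``right'' discounting unit for the contract value becomes $B^{c,k_0,k_3}$; the goal is to switch discounting without leaving the class of martingales.

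Concretely, I would introduce the auxiliary process
\begin{align*}
N_t := \frac{S^{k_0}_t(A^{k_2},C^{k_3})}{B^{c,k_0,k_3}_t} + \int_{(0,t]}\frac{\mathcal{X}^{k_0,k_2}_s}{B^{c,k_0,k_3}_s}\,dA^{k_2}_s,
\end{align*}
and compute $dN_t$ via integration by parts. Since $B^{c,k_0,k_3}$ is continuous and of finite variation (so there is no quadratic covariation term), and $dB^{c,k_0,k_3}_t = B^{c,k_0,k_3}_t(r^{c,k_0}_t+q^{k_0,k_3}_t)dt$, the first summand evolves as $\frac{1}{B^{c,k_0,k_3}_t}\bigl[dS^{k_0}_t(A^{k_2},C^{k_3}) - S^{k_0}_t(A^{k_2},C^{k_3})(r^{c,k_0}_t+q^{k_0,k_3}_t)dt\bigr]$. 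Comparing with \eqref{eq:mathscriptM2}, the bracketed expression is precisely $B^{k_0}_t\, d\tilde{\mathscr{M}}_t - \mathcal{X}^{k_0,k_2}_t dA^{k_2}_t$, so the $dA^{k_2}$-terms cancel and one obtains
\begin{align*}
dN_t = \frac{B^{k_0}_t}{B^{c,k_0,k_3}_t}\, d\tilde{\mathscr{M}}_t.
\end{align*}
Because the integrand $B^{k_0}/B^{c,k_0,k_3}$ is a continuous, bounded-on-$[0,T]$ process of finite variation (hence predictable and integrable with respect to any semimartingale), Assumption \ref{assu:processMtilde}---which asserts that stochastic integrals with respect to $\tilde{\mathscr{M}}$ are true $(\QQ^{k_0},\GG)$-martingales---immediately yields that $N$ is a true $(\QQ^{k_0},\GG)$-martingale.

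The conclusion is then a one-line calculation. Applying the martingale property $N_t = \mathbb{E}^{\QQ^{k_0}}[N_T \mid \mathcal{G}_t]$, using $S^{k_0}_T(A^{k_2},C^{k_3}) = 0$ from Assumption \ref{martingaleModeling}, and subtracting the $\mathcal{G}_t$-measurable quantity $\int_{(0,t]}\mathcal{X}^{k_0,k_2}_s/B^{c,k_0,k_3}_s\,dA^{k_2}_s$ from both sides gives
\begin{align*}
\frac{S^{k_0}_t(A^{k_2},C^{k_3})}{B^{c,k_0,k_3}_t}=\mathbb{E}^{\QQ^{k_0}}\!\left[\left.\int_{(t,T]}\frac{\mathcal{X}^{k_0,k_2}_s}{B^{c,k_0,k_3}_s}\,dA^{k_2}_s\,\right|\mathcal{G}_t\right],
\end{align*}
which, after multiplication by $B^{c,k_0,k_3}_t$, is exactly \eqref{eq:fullCollateral}. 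The only step that requires mild attention is the integration-by-parts passage, and in particular verifying that the simplification from \eqref{eq:mathscriptM} to \eqref{eq:mathscriptM2} is indeed equivalent to the emergence of the discount factor $r^{c,k_0}+q^{k_0,k_3}$ in $N$---this boils down to the identity $r^{k_0} - r^{c,k_3} - (r^{k_0}-r^{k_3}) = r^{k_3} - r^{c,k_3}$ combined with $r^{c,k_0}+q^{k_0,k_3} = r^{k_0} - (r^{k_3}-r^{c,k_3})$, which is the content of Definition \ref{def:spreads} and equation \eqref{eq:Bck0k3def}. No genuine obstacle appears; the main conceptual content is the switch of numéraire from $B^{k_0}$ to $B^{c,k_0,k_3}$ enabled by the full-collateralization assumption.
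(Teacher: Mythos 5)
Your proposal is correct and follows essentially the same route as the paper: the paper likewise weights $d\tilde{\mathscr{M}}$ by $B^{k_0}/B^{c,k_0,k_3}$, invokes Assumption \ref{assu:processMtilde} to make the resulting stochastic integral a true $(\QQ^{k_0},\GG)$-martingale, and uses the same integration by parts with the continuous finite-variation account $B^{c,k_0,k_3}$ together with $S_T^{k_0}(A^{k_2},C^{k_3})=0$; your packaging of this as the auxiliary martingale $N$ is only a cosmetic rearrangement of the paper's direct computation of $\mathbb{E}^{\QQ^{k_0}}[\int_{(t,T]}\frac{B^{k_0}_s}{B^{c,k_0,k_3}_s}d\tilde{\mathscr{M}}_s\mid\cG_t]=0$. (The claim that $B^{k_0}/B^{c,k_0,k_3}$ is bounded on $[0,T]$ is not needed and not quite justified, but local boundedness of this continuous adapted process plus the blanket Assumption \ref{assu:processMtilde} suffices, so this is harmless.)
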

\begin{proof}
From Assumption \ref{assu:processMtilde}, we have that
\begin{align*}
0&=\mathbb{E}^{\QQ^{k_0}}\left[\left.\int_{(t,T]}\frac{B^{k_0}_s}{B^{c,k_0,k_3}_s} d\tilde{\mathscr{M}}_s\right|\mathcal{G}_t\right]\\
&=\mathbb{E}^{\QQ^{k_0}}\left[\int_{(t,T]}\frac{B^{k_0}_s}{B^{c,k_0,k_3}_s} \left(\frac{dS^{k_0}_s(A^{k_2},C^{k_3})}{B^{k_0}_s}\left.+\frac{\mathcal{X}^{k_0,k_2}_s}{B^{k_0}_s}dA^{k_2}_s-\frac{S_s^{k_0}(A^{k_2},C^{k_3})}{B^{k_0}_s}\left(r^{c,k_0}_s+q^{k_0,k_3}_s\right)ds\right)\right|\mathcal{G}_t\right]\\
&=\mathbb{E}^{\QQ^{k_0}}\left[\left.\frac{S_T^{k_0}(A^{k_2},C^{k_3})}{B^{c,k_0,k_3}_T}-\frac{S_t^{k_0}(A^{k_2},C^{k_3})}{B^{c,k_0,k_3}_t}+\int_{(t,T]}\frac{\mathcal{X}^{k_0,k_2}_s}{B^{c,k_0,k_3}_s}dA^{k_2}_s\right|\mathcal{G}_t\right].
\end{align*}
From Assumption \ref{martingaleModeling}, we have that $S_T^{k_0}(A^{k_2},C^{k_3})=0$, hence we get the claim.
\end{proof}

The formula obtained in Corollary \ref{prop:priceformulafullcollateral} generalizes (6.23) in \cite{gnoSei2021} to incomplete markets possibly driven by jump-diffusion processes. From \eqref{eq:fullCollateral} we can also obtain generalizations of formulas (6.24)-(6.26) of \cite{gnoSei2021}.
\begin{corollary}
The following pricing formulas can be derived:
\begin{enumerate}
\item $k_0$ cash-flows collateralized in currency $k_0$: this corresponds to $k_2 = k_3 = k_0$ and we obtain
\begin{align*}
\begin{aligned}
S_t^{k_0}(A^{k_0},C^{k_0}) &= \EE^{\QQ^{k_0}} \left[ \left.\int_{(t,T]}e^{-\int_t^s r^{c,k_0}_u du} dA_s^{k_0} \right|\mathcal{G}_t\right] ,
\end{aligned}
\end{align*}
so we discount using the domestic collateral rate. This is the valuation formula employed in the whole literature on single-currency multiple-curve interest rate models.
\item $k_0$ cash-flows collateralized in a currency $k_3$: this corresponds to $k_2=k_0$, $k_3 \neq k_0$ and we obtain
\begin{align*}
\begin{aligned}
S_t^{k_0}(A^{k_0},C^{k_3}) &= \EE^{\QQ^{k_0}} \left[ \left.\int_{(t,T]} e^{-\int_t^s \left(r^{c,k_0}_u+q^{k_0,k_3}_u\right) du} dA_s^{k_0}\right|\mathcal{G}_t \right] ,
\end{aligned}
\end{align*}
so that the foreign collateralization results in the appearance of the cross-currency basis in the discount factor.
\item $k_2$ cash-flows collateralized in currency $k_0$: this corresponds to $k_2\neq k_0$, $k_3 = k_0$ and we obtain
\begin{align*}
\begin{aligned}
S_t^{k_0}(A^{k_2},C^{k_0}) &= \EE^{\QQ^{k_0}} \left[ \left.\int_{(t,T]}e^{-\int_t^s r^{c,k_0}_u du} \mathcal{X}^{k_0,k_2}_sdA_s^{k_2} \right|\mathcal{G}_t\right].
\end{aligned}
\end{align*}
\end{enumerate}
\end{corollary}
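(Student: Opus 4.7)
The plan is to derive each of the three pricing formulas as direct specializations of the general formula \eqref{eq:fullCollateral} established in Corollary \ref{prop:priceformulafullcollateral}. Starting from
\begin{equation*}
S_t^{k_0}(A^{k_2},C^{k_3})=B^{c,k_0,k_3}_t\,\mathbb{E}^{\QQ^{k_0}}\!\left[\left.\int_{(t,T]}\frac{\mathcal{X}^{k_0,k_2}_s}{B^{c,k_0,k_3}_s}\,dA^{k_2}_s\,\right|\mathcal{G}_t\right],
\end{equation*}
I would push the $\mathcal{G}_t$-measurable prefactor $B^{c,k_0,k_3}_t$ inside the conditional expectation and use the explicit form of the collateral cash account \eqref{eq:Bck0k3def} to write
\begin{equation*}
\frac{B^{c,k_0,k_3}_t}{B^{c,k_0,k_3}_s}=\exp\!\left\{-\int_t^s (r^{c,k_0}_u+q^{k_0,k_3}_u)\,du\right\},\qquad s\geq t.
\end{equation*}
Two further simplifications are then invoked: the FORDOM convention implies $\mathcal{X}^{k_0,k_0}\equiv 1$, and Definition \ref{def:spreads} together with the comment after \eqref{eq:Bck0k3def} ensures $q^{k_0,k_0}\equiv 0$ $d\PP\otimes dt$-a.s., so that $B^{c,k_0,k_0}=B^{c,k_0}$.

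For case (i) I set $k_2=k_3=k_0$: both $\mathcal{X}^{k_0,k_2}$ and $q^{k_0,k_3}$ disappear, and the collateralization discount factor collapses to $e^{-\int_t^s r^{c,k_0}_u du}$, producing the stated formula. For case (ii) I set $k_2=k_0$ and $k_3\neq k_0$: only $\mathcal{X}^{k_0,k_2}$ trivializes, while the cross-currency basis $q^{k_0,k_3}$ survives in the exponential, giving the enlarged discount rate $r^{c,k_0}+q^{k_0,k_3}$. For case (iii) I set $k_3=k_0$ and $k_2\neq k_0$: the cross-currency basis vanishes so the discount reduces to $e^{-\int_t^s r^{c,k_0}_u du}$, while the rate $\mathcal{X}^{k_0,k_2}_s$ remains as a multiplicative factor integrated against the foreign cash-flow stream $dA^{k_2}_s$.

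There is no real obstacle: the result is a routine substitution exercise, and integrability of the integrands in each case is inherited from Assumption \ref{martingaleModeling} applied to the fully collateralized claim. The only nontrivial ingredient is the vanishing of $q^{k_0,k_0}$ and of the FX rate diagonal, both of which are purely notational consequences of the conventions fixed in Section \ref{sec:MultiCurrTrading}.
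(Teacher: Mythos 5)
Your proposal is correct and follows exactly the route the paper (implicitly) takes: the corollary is a direct specialization of the fully collateralized pricing formula \eqref{eq:fullCollateral}, pulling the $\cG_t$-measurable factor $B^{c,k_0,k_3}_t$ inside the conditional expectation, writing $B^{c,k_0,k_3}_t/B^{c,k_0,k_3}_s=e^{-\int_t^s(r^{c,k_0}_u+q^{k_0,k_3}_u)du}$ via \eqref{eq:Bck0k3def}, and using $\mathcal{X}^{k_0,k_0}\equiv 1$ and $q^{k_0,k_0}\equiv 0$ in the relevant cases. Nothing is missing.
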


We conclude with a remark.
\begin{remark}\label{rem:TerminalPayoff}
For deriving the pricing formula \eqref{eq:fullCollateral}, we assumed that $S_T^{k_0}(A^{k_2},C^{k_3})=0$ $\PP$-a.s.. If the contract pays a single cash-flow at the terminal time $T$, then one obtains the same pricing formulas by alternatively postulating that $A^{k_2}=0$, $d\PP\otimes dt$-a.s., and by treating $S_T^{k_0}(A^{k_2},C^{k_3})\neq 0$ as the random terminal payoff of the contract. For such a simple instrument the two assumptions are equivalent. This alternative viewpoint will be relevant in the next Section \ref{sec:ZCBs} when considering forward measures: in this case we will make use of the well-known property of zero-coupon bonds being equal to one at maturity.
\end{remark}

\subsection{Pricing of zero-coupon bonds}\label{sec:ZCBs}
We have obtained pricing formulas for contingent claims under arbitrary currency configurations for the promised cash flows and for the collateralization agreement. We proceed now to treat zero-coupon bonds as a special case of this setting. This will serve as basis for term-structure models in Sections \ref{sec:HJMs} and \ref{sec:hjmforward}. We shall use the shorthand ZCB for zero-coupon bonds.

\begin{remark}
Due to the martingale properties postulated in Section \ref{sec:pricing}, we find it convenient to work directly with the process $\tilde{\mathscr{M}}$ instead of working with the whole extended portfolio. The derivation of equivalent formulas with the extended portfolio is left to the reader.
\end{remark}

\subsubsection{Domestic ZCB with domestic collateral}\label{sec:domBonddomColl}
Let $T\geq0$ and denote the price process of a domestic ZCB collateralized in domestic currency by $\left\{B^{k_0,k_0}(t,T), \ 0\leq t\leq T\right\}$. In the notation of Section \ref{sec:pricing}, this corresponds to $k_2=k_3=k_0$. Moreover,
\begin{align*}
S_t^{k_0}(A^{k_2},C^{k_3})=B^{k_0,k_0}(t,T), \quad \mbox{ and } \quad  A^{k_2}_t=A^{k_0}_t=\Ind{t=T}.
\end{align*}
Since $\mathcal{X}^{k_0,k_0}=1$, $d\PP\otimes dt$-a.s., full collateralization takes from equation \eqref{eq:fullcoll} the simpler form
\begin{align*}
C^{k_3}_t=C^{k_0}_t=
\frac{B^{k_0,k_0}(t,T)}{\mathcal{X}^{k_0,k_0}_t}=B^{k_0,k_0}(t,T).
\end{align*}
The process $\tilde{\mathscr{M}}$ in \eqref{eq:mathscriptM2} simplifies then to
\begin{align*}
d\tilde{\mathscr{M}}_t=\frac{dB^{k_0,k_0}(t,T)}{B^{k_0}_t}+\frac{1}{B^{k_0}_t}d\Ind{t=T}-\frac{B^{k_0,k_0}(t,T)}{B^{k_0}_t}r^{c,k_0}_tdt,
\end{align*}
and from the pricing formula \eqref{eq:fullCollateral} we get that
\begin{align}
\label{eq:priceBk0k0}
B^{k_0,k_0}(t,T)=B^{c,k_0}_t\mathbb{E}^{\QQ^{k_0}}\left[\left.\frac{1}{B^{c,k_0}_T}\right|\mathcal{G}_t\right],
\end{align}
with $B^{c,k_0}$ the bank account defined in \eqref{eq:Bck3def}.
Equation \eqref{eq:priceBk0k0} represents the pricing formula for a so-called \emph{OIS bond} as in, e.g., \cite{Cuchiero2016} and \cite{Cuchiero2019}. Holdings in these bonds are funded by holdings in the asset-specific cash account $B^{c,k_0}$.

\subsubsection{Domestic ZCB with foreign collateral}\label{sec:case2}
Let $T\geq0$ and denote the price process of a domestic ZCB collateralized in foreign currency by $\left\{B^{k_0,k_3}(t,T), \ 0\leq t\leq T\right\}$.  In the notation of Section \ref{sec:pricing}, this corresponds to $k_2=k_0$ and $k_3\ne k_0$. Moreover,
\begin{align*}
S_t^{k_0}(A^{k_2},C^{k_3})=B^{k_0,k_3}(t,T), \quad \mbox{ and } \quad A^{k_2}_t=A^{k_0}_t=\Ind{t=T}.
\end{align*}
Full collateralization from equation \eqref{eq:fullcoll} means that
\begin{align*}
C^{k_3}_t=
\frac{B^{k_0,k_3}(t,T)}{\mathcal{X}^{k_0,k_3}_t},
\end{align*}
and the process $\tilde{\mathscr{M}}$ in \eqref{eq:mathscriptM2} takes the form
\begin{align*}
d\tilde{\mathscr{M}}_t=\frac{dB^{k_0,k_3}(t,T)}{B^{k_0}_t}+\frac{1}{B^{k_0}_t}d\Ind{t=T}-\frac{B^{k_0,k_3}(t,T)}{B^{k_0}_t}(r^{c,k_0}_t+q^{k_0,k_3}_t)dt.
\end{align*}
From the pricing formula \eqref{eq:fullCollateral} we get that
\begin{align}\label{eq:priceBk0k3}
B^{k_0,k_3}(t,T)=B^{c,k_0,k_3}_t\mathbb{E}^{\QQ^{k_0}}\left[\left.\frac{1}{B^{c,k_0,k_3}_T}\right|\mathcal{G}_t\right],
\end{align}
where $B^{c,k_0,k_3}$ is the bank account defined in \eqref{eq:Bck0k3def}. Notice that the currency dislocation in the collateralization schemes results in the presence of a second term structure of zero-coupon bonds that are funded by means of the newly introduced asset-specific cash account $B^{c,k_0,k_3}$.

\subsubsection{Foreign ZCB with domestic collateral}\label{sec:ZCBk2k0}
Let $T\geq0$ and denote the price process of a foreign ZCB collateralized in domestic currency by $\left\{B^{k_2,k_0}(t,T), \ 0\leq t\leq T\right\}$. In the notation of Section \ref{sec:pricing}, this corresponds to $k_3=k_0$ and $k_2\ne k_0$. Moreover,
\begin{align*}
S_t^{k_0}(A^{k_2},C^{k_3})=\mathcal{X}^{k_0,k_2}_tB^{k_2,k_0}(t,T), \quad \mbox{ and }\quad A^{k_2}_t=\Ind{t=T}.
\end{align*}
Full collateralization  from equation \eqref{eq:fullcoll} means that
\begin{align*}
C_t^{k_3}=
\frac{\mathcal{X}^{k_0,k_2}_tB^{k_2,k_0}(t,T)}{\mathcal{X}^{k_0,k_0}_t}
=\mathcal{X}^{k_0,k_2}_tB^{k_2,k_0}(t,T),
\end{align*}
and the process $\tilde{\mathscr{M}}$  in \eqref{eq:mathscriptM2}  takes the form
\begin{align*}
d\tilde{\mathscr{M}}_t=\frac{d\left(\mathcal{X}^{k_0,k_2}_\cdot B^{k_2,k_0}(\cdot,T)\right)_t}{B^{k_0}_t}+\frac{\mathcal{X}^{k_0,k_2}_t}{B^{k_0}_t}d\Ind{t=T}-\frac{\mathcal{X}^{k_0,k_2}_t B^{k_2,k_0}(t,T)}{B^{k_0}_t}r^{c,k_0}_tdt.
\end{align*}
From the pricing formula \eqref{eq:fullCollateral} we then get that
\begin{align}
\label{eq:domesticZCBforeignColl}
\mathcal{X}^{k_0,k_2}_t B^{k_2,k_0}(t,T)=B^{c,k_0}_t\mathbb{E}^{\QQ^{k_0}}\left[\left.\frac{\mathcal{X}^{k_0,k_2}_T}{B^{c,k_0}_T}\right|\mathcal{G}_t\right],
\end{align}
with $B^{c,k_0}$ the bank account defined in \eqref{eq:Bck3def}.

\subsubsection{Domestic ZCB without collateral}
Let $T\geq0$ and denote the price of a fully unsecured ZCB in domestic currency by $\left\{B^{k_0}(t,T), \ 0\leq t\leq T\right\}$. In the notation of Section \ref{sec:pricing}, this corresponds to $k_2=k_0$ and $C^{k_3}=0$, $d\PP\otimes dt$-a.s.. Then
\begin{align*}
S_t^{k_0}(A^{k_2},C^{k_3})=B^{k_0}(t,T), \quad A^{k_2}_t=A^{k_0}_t=\Ind{t=T},
\end{align*}
and the process $\tilde{\mathscr{M}}$ in \eqref{eq:mathscriptM2}   simplifies significantly to
\begin{align*}
d\tilde{\mathscr{M}}_t=\frac{dB^{k_0}(t,T)}{B^{k_0}_t}+\frac{1}{B^{k_0}_t}d\Ind{t=T}-\frac{B^{k_0}(t,T)}{B^{k_0}_t}r^{k_0}_tdt,
\end{align*}
with $r^{k_0}$ being the unsecured rate. 
From the pricing formula \eqref{eq:fullCollateral} we then get that
\begin{align*}
B^{k_0}(t,T)=B^{k_0}_t\mathbb{E}^{\QQ^{k_0}}\left[\left.\frac{1}{B^{k_0}_T}\right|\mathcal{G}_t\right].
\end{align*}
This corresponds to a textbook pre-financial-crisis ZCB linked to the unsecured bank account $B^{k_0}$.

\section{Measure changes}\label{sec:measureChanges}
We considered so far the domestic risk-neutral measure $\QQ^{k_0}$. This is the measure such that the processes \eqref{eq:firstMartingale} and \eqref{eq:secondMartingale} are martingales, and we have used $\QQ^{k_0}$ as the pricing measure to obtain pricing formulas from the point of view of an agent in the economy $k_0$. We shall introduce in this sections new measures which naturally arise in our framework and which will be crucial for the HJM modelling in Sections \ref{sec:HJMs} and \ref{sec:hjmforward}. More specifically, we shall introduce spot-foreign measures and forward measures.

\subsection{Spot-foreign measures}
Under Assumption \ref{assu:martingales}, we introduce spot-foreign risk-neutral measures as follows.

\begin{definition} \label{def:foreignMeasure} Under Assumption \ref{assu:martingales}, let $1\leq k_2\leq L$ with $k_2\neq k_0$. We define the $\QQ^{k_2}$ \emph{(spot)-foreign risk-neutral measure} $\QQ^{k_2}\sim \QQ^{k_0}$ on $(\Omega, \mathbb{G})$ by
\begin{align*}
\frac{\partial \QQ^{k_2}}{\partial \QQ^{k_0}}:=\frac{B^{k_2}_T\mathcal{X}^{k_0,k_2}_T}{B^{k_0}_T}\frac{B^{k_0}_0}{B^{k_2}_0\mathcal{X}^{k_0,k_2}_0}.
\end{align*}
Due to the martingale property of \eqref{eq:secondMartingale} we have that
\begin{align}\label{eq:changek0k2}
\left.\frac{\partial \QQ^{k_2}}{\partial \QQ^{k_0}}\right|_{\cG_t}=\Excond{\QQ^{k_0}}{\frac{\partial \QQ^{k_2}}{\partial \QQ^{k_0}}}{\cG_t}=\frac{B^{k_2}_t\mathcal{X}^{k_0,k_2}_t}{B^{k_0}_t}\frac{B^{k_0}_0}{B^{k_2}_0\mathcal{X}^{k_0,k_2}_0}, \quad \mbox{ for all } t\le T.
\end{align}
\end{definition}
This family of measures allows to price ZCBs from different point of views, as illustrated in the following two examples.

\subsubsection{Foreign ZCB with foreign collateral under the domestic measure}
We consider a $k_2$-ZCB collateralized in currency $k_2$ from the point of view of the domestic measure $\QQ^{k_0}$. Performing a measure change from $\QQ^{k_0}$ to $\QQ^{k_2}$ allows to obtain the dual formula to \eqref{eq:priceBk0k0}, namely the dual formula to the price of a domestic ZCB with domestic collateral under the domestic measure. In this case, we have $k_3=k_2$ and $k_0\ne k_2$. In the notation of Section \ref{sec:pricing} we then have 
\begin{align*}
S_t^{k_0}(A^{k_2},C^{k_3})=\mathcal{X}^{k_0,k_2}_tB^{k_2,k_2}(t,T), \quad \mbox{ and } \quad A^{k_2}_t=\Ind{t=T}.
\end{align*}
Notice that full collateralization in equation \eqref{eq:fullcoll} takes now the form
\begin{align*}
C^{k_3}_t
=\frac{\mathcal{X}^{k_0,k_2}_tB^{k_2,k_2}(t,T)}{\mathcal{X}^{k_0,k_2}_t}=B^{k_2,k_2}(t,T),
\end{align*}
and the process $\tilde{\mathscr{M}}$ in \eqref{eq:mathscriptM2} satisfies
\begin{align*}
d\tilde{\mathscr{M}}_t=\frac{d\left(\mathcal{X}^{k_0,k_2}_\cdot B^{k_2,k_2}(\cdot,T)\right)_t}{B^{k_0}_t}+\frac{\mathcal{X}^{k_0,k_2}_t}{B^{k_0}_t}dA^{k_2}_t-\frac{\mathcal{X}^{k_0,k_2}_t B^{k_2,k_2}(t,T)}{B^{k_0}_t}\left(r^{c,k_0}_t+q^{k_0,k_2}_t\right)dt.
\end{align*}
From the pricing formula \eqref{eq:fullCollateral} we finally get that
\begin{align}\label{eq:Bk2k2_1}
\mathcal{X}^{k_0,k_2}_t B^{k_2,k_2}(t,T)=B^{c,k_0,k_2}_t\Excond{\QQ^{k_0}}{\frac{\mathcal{X}^{k_0,k_2}_T}{B^{c,k_0,k_2}_T}}{\cG_t},
\end{align}
which provides us with the pricing formula for a $k_2$-foreign bond collateralized in the currency $k_2$ from the point of view of a $k_0$-based agent. By equation \eqref{eq:changek0k2}, we now perform a change of measure on the right-hand side of \eqref{eq:Bk2k2_1} and change to the pricing measure $\QQ^{k_2}$:
\begin{align*}
\mathcal{X}^{k_0,k_2}_t B^{k_2,k_2}(t,T)=B^{c,k_0,k_2}_t\Excond{\QQ^{k_2}}{\frac{\mathcal{X}^{k_0,k_2}_T}{B^{c,k_0,k_2}_T}\frac{B^{k_0}_T}{\mathcal{X}^{k_0,k_2}_TB^{k_2}_T}}{\cG_t}\frac{\mathcal{X}^{k_0,k_2}_t B^{k_2}_t}{B^{k_0}_t}.
\end{align*}
Notice that
\begin{align*}
\frac{B^{k_0}_T}{B^{c,k_0,k_2}_TB^{k_2}_T}=\exp\left\{\int_0^T\left(r^{k_0}_s-r^{k_2}_s-(r^{c,k_0}_s+q^{k_0,k_2}_s)\right)ds\right\}=\exp\left\{-\int_0^Tr^{c,k_2}_sds\right\}=\frac{1}{B^{c,k_2}_T},
\end{align*}
hence equation \eqref{eq:Bk2k2_1} simplifies to
\begin{align*}
B^{k_2,k_2}(t,T)=B^{c,k_2}_t\Excond{\QQ^{k_2}}{\frac{1}{B^{c,k_2}_T}}{\cG_t}.
\end{align*}
This is the formula for the same contract under the dual measure $\QQ^{k_2}$, which is the same as the previously obtained formula \eqref{eq:priceBk0k0} in Section \ref{sec:domBonddomColl} for $k_2=k_0$.

\subsubsection{Foreign ZCB with domestic collateral under the foreign measure}
We consider a $k_2$-ZCB with domestic collateral as in Section \ref{sec:ZCBk2k0}. With a measure change from $\QQ^{k_0}$ to $\QQ^{k_2}$ we will obtain a dual valuation formula under the foreign measure which is consistent with equation \eqref{eq:priceBk0k3} of Section \ref{sec:case2}, namely with the pricing formula of a domestic ZCB with foreign collateralization. 

Starting from equation \eqref{eq:domesticZCBforeignColl}, we perform a change of measure to $\QQ^{k_2}$ accordingly to equation \eqref{eq:changek0k2}:
\begin{align*}
\mathcal{X}^{k_0,k_2}_t B^{k_2,k_0}(t,T)=B^{c,k_0}_t\mathbb{E}^{\QQ^{k_0}}\left[\left.\frac{\mathcal{X}^{k_0,k_2}_T}{B^{c,k_0}_T}\right|\mathcal{G}_t\right]
=B^{c,k_0}_t\Excond{\QQ^{k_2}}{\frac{\mathcal{X}^{k_0,k_2}_T}{B^{c,k_0}_T}\frac{B^{k_0}_T}{B^{k_2}_T\mathcal{X}^{k_0,k_2}_T}}{\cG_t}\frac{B^{k_2}_t\mathcal{X}^{k_0,k_2}_t}{B^{k_0}_t},
\end{align*}
which leads to
\begin{align}\label{eq:changek0k2_1}
B^{k_2,k_0}(t,T)=\frac{B^{c,k_0}_tB^{k_2}_t}{B^{k_0}_t}\Excond{\QQ^{k_2}}{\frac{B^{k_0}_T}{B^{c,k_0}_TB^{k_2}_T}}{\cG_t}.
\end{align}
Notice that
\begin{align*}
\frac{B^{k_0}_T}{B^{c,k_0}_TB^{k_2}_T}=\exp\left\{\int_0^T \left(r^{k_0}_s-r^{c,k_0}_s-r^{k_2}_s\right)ds\right\}
=\exp\left\{-\int_0^T \left(r^{c,k_2}_s+q^{k_2,k_0}_s\right)ds\right\} = \frac{1}{B_T^{c, k_2, k_0}(t, T)},
\end{align*}
hence \eqref{eq:changek0k2_1} becomes
\begin{align*}
B^{k_2,k_0}(t,T)
=B^{c,k_2,k_0}_t\Excond{\QQ^{k_2}}{\frac{1}{B^{c,k_2,k_0}_T}}{\cG_t}.
\end{align*}
This corresponds to \eqref{eq:priceBk0k3} with flipped currency indices.

\subsection{Forward measures}
Recall from Remark \ref{rem:TerminalPayoff}, that ZCBs are contracts with zero dividend process and a terminal price of one unit of currency. Then, from Section \ref{sec:ZCBs}, we obtain that, under Assumption \ref{assu:martingales}, \ref{martingaleModeling}, \ref{assu:processMtilde} and under the repo constraint \eqref{eq:repoConstraint}, the processes
\begin{align}\label{eq:martingales}
\left(\frac{B^{k_0,k_0}(t,T)}{B^{c,k_0}_t}\right)_{0\leq t\leq T}, \, \left(\frac{B^{k_0,k_3}(t,T)}{B^{c,k_0,k_3}_t}\right)_{0\leq t\leq T}, \, 
\mbox{ and } \,\left(\frac{B^{k_0}(t,T)}{B^{k_0}_t}\right)_{0\leq t\leq T},
\end{align}
are martingales for every choice of the indices $k_1$, $k_2$, and $k_3$. This shows that including ZCBs with different funding strategies corresponds to including new risky assets together with their asset-specific cash-accounts: each new asset is funded by an associated asset-specific cash account, hence giving rise to further repo constraints of the form \eqref{eq:repoConstraint}.

Furthermore, the martingales in \eqref{eq:martingales} may serve as density processes for new probability measures. We shall introduce some of them in the following definition: the list is not exhaustive, but covers all the essential tools that are needed for cross-currency term-structure modeling. In particular, notice that classical forward measure are defined up to the maturity of the corresponding ZCB. For later use, we adopt the approach of \cite{LyasMer2019} and extend the concept of forward measure by considering as num\'eraire the self-financing strategy that  after the maturity of the corresponding ZCB, say $T$, reinvests the notional into the corresponding cash account.

\begin{definition}\label{def:forwardmeasure}
Let $T\geq 0$ be fixed. We define the following forward measures:
\begin{enumerate}
\item The \emph{domestic-collateralized domestic $T$-forward measure} $\QQ^{T,k_0,k_0}\sim\QQ^{k_0}$ on $(\Omega, \mathbb{G})$ is defined via the Radon-Nikodym derivative 
\begin{align*}
\frac{\partial \QQ^{T,k_0,k_0}}{\partial \QQ^{k_0}}:=\frac{B^{k_0,k_0}(T,T)}{B^{c,k_0}_T}\frac{B^{c,k_0}_0}{B^{k_0,k_0}(0,T)}.
\end{align*}
In particular: 
\begin{enumerate}
    \item For $t\leq T$ we have
    \begin{align*}
\left.\frac{\partial \QQ^{T,k_0,k_0}}{\partial \QQ^{k_0}}\right|_{\cG_t}=\Excond{\QQ^{k_0}}{\frac{\partial \QQ^{T,k_0,k_0}}{\partial \QQ^{k_0}}}{\cG_t}=\frac{B^{k_0,k_0}(t,T)}{B^{c,k_0}_t}\frac{B^{c,k_0}_0}{B^{k_0,k_0}(0,T)};
\end{align*}
\item For $t>T$, since $B^{k_0,k_0}(t,T)=\frac{B_t^{c,k_0}}{B_T^{c,k_0}}$, we define
\begin{align*}
\left.\frac{\partial \QQ^{T,k_0,k_0}}{\partial \QQ^{k_0}}\right|_{\cG_t}:=\frac{1}{B^{c,k_0}_T}\frac{B^{c,k_0}_0}{B^{k_0,k_0}(0,T)},
\end{align*}
hence $\QQ^{T,k_0,k_0}\equiv\QQ^{k_0}$ on $t>T$.
\end{enumerate}
\item The \emph{$k_3$-collateralized domestic $T$-forward measure} $\QQ^{T,k_0,k_3}\sim\QQ^{k_0}$ on $(\Omega, \mathbb{G})$ is defined via the Radon-Nikodym derivative 
\begin{align}\label{eq:QTk0k3}
\frac{\partial \QQ^{T,k_0,k_3}}{\partial \QQ^{k_0}}:=\frac{B^{k_0,k_3}(T,T)}{B^{c,k_0,k_3}_T}\frac{B^{c,k_0,k_3}_0}{B^{k_0,k_3}(0,T)}.
\end{align}
In particular:
\begin{enumerate}
    \item For $t\leq T$ we have
\begin{align*}
\left.\frac{\partial \QQ^{T,k_0,k_3}}{\partial \QQ^{k_0}}\right|_{\cG_t}=\Excond{\QQ^{k_0}}{\frac{\partial \QQ^{T,k_0,k_3}}{\partial \QQ^{k_0}}}{\cG_t}=\frac{B^{k_0,k_3}(t,T)}{B^{c,k_0,k_3}_t}\frac{B^{c,k_0,k_3}_0}{B^{k_0,k_3}(0,T)};
\end{align*}
\item For $t>T$, since $B^{k_0,k_3}(t,T)=\frac{B_t^{c,k_0,k_3}}{B_T^{c,k_0,k_3}}$, we define
\begin{align*}
\left.\frac{\partial \QQ^{T,k_0,k_3}}{\partial \QQ^{k_0}}\right|_{\cG_t}:=\frac{1}{B^{c,k_0,k_3}_T}\frac{B^{c,k_0,k_3}_0}{B^{k_0,k_3}(0,T)},
\end{align*}
hence $\QQ^{T,k_0,k_3}\equiv\QQ^{k_0}$ on $t>T$.
\end{enumerate}
\item The \emph{uncollateralized/unsecured domestic $T$-forward measure} $\QQ^{T,k_0}\sim\QQ^{k_0}$ on $(\Omega, \mathbb{G})$ is defined via the Radon-Nikodym derivative
\begin{align*}
\frac{\partial \QQ^{T,k_0}}{\partial \QQ^{k_0}}:=\frac{B^{k_0}(T,T)}{B^{k_0}_T}\frac{B^{k_0}_0}{B^{k_0}(0,T)}.
\end{align*}
In particular:
\begin{enumerate}
    \item For $t\leq T$ we have
    \begin{align*}
\left.\frac{\partial \QQ^{T,k_0}}{\partial \QQ^{k_0}}\right|_{\cG_t}=\Excond{\QQ^{k_0}}{\frac{\partial \QQ^{T,k_0}}{\partial \QQ^{k_0}}}{\cG_t}=\frac{B^{k_0}(t,T)}{B^{k_0}_t}\frac{B^{k_0}_0}{B^{k_0}(0,T)};
\end{align*}
\item For $t>T$, since $B^{k_0}(t,T)=\frac{B_t^{k_0}}{B_T^{k_0}}$, we define
\begin{align*}
\left.\frac{\partial \QQ^{T,k_0}}{\partial \QQ^{k_0}}\right|_{\cG_t}:=\frac{1}{B^{k_0}_T}\frac{B^{k_0}_0}{B^{k_0}(0,T)},
\end{align*}
hence $\QQ^{T,k_0}\equiv\QQ^{k_0}$ on $t>T$.
\end{enumerate}
\end{enumerate} 
\end{definition}
Figure \ref{fig:pricingMeasures} summarizes some of the relations between the different pricing measures.

\begin{remark} We observe that the domestic-collateralized domestic $T$-forward measure that uses $T$-OIS bonds as num\'eraire is the one typically employed in the literature on single-currency multiple-curve models, such as \cite{Cuchiero2016} and \cite{Cuchiero2019}. Our general setting, however, highlights the fact that there is no need to assume (as in the references above) that the OIS bank account is the num\'eraire of $\QQ^{k_0}$ (in fact, it is not): under $\QQ^{k_0}$, as we have seen, we have that multiple assets with different funding strategies are simultaneously martingales with no cash account playing the role of universal num\'eraire for all the risky assets.
\end{remark}

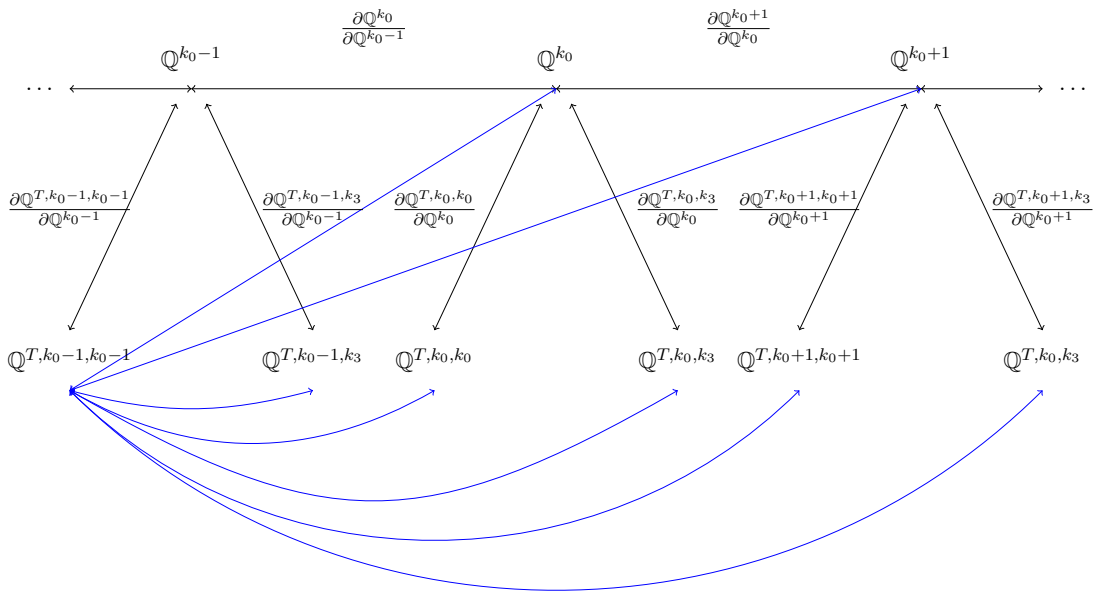
\begin{figure}[t]
    \centering
    \scalebox{0.8}{
   \begin{tikzpicture}
	\begin{pgfonlayer}{nodelayer}
		\node [style=none] (3) at (2, 0) {};
		\node [style=none] (4) at (-2, 0) {};
		\node [style=none] (5) at (4, 0) {};
		\node [style=none] (7) at (8, 0) {};
		\node [style=none] (9) at (-4, 0) {};
		\node [style=none] (11) at (-8, 0) {};
		\node [style=none] (12) at (-6, 4) {};
		\node [style=none] (13) at (-6, 4) {};
		\node [style=none] (14) at (-6, 4) {};
		\node [style=none] (15) at (-6.25, 3.75) {};
		\node [style=none] (17) at (-6, 4.5) {};
		\node [style=none] (18) at (-6, 4.5) {$\mathbb{Q}^{k_0-1}$};
		\node [style=none] (19) at (0, 4.5) {};
		\node [style=none] (20) at (0, 4.5) {$\mathbb{Q}^{k_0}$};
		\node [style=none] (21) at (6, 4.5) {};
		\node [style=none] (22) at (6, 4.5) {$\mathbb{Q}^{k_0+1}$};
		\node [style=none] (23) at (-8, -0.5) {$\mathbb{Q}^{T,k_0-1,k_0-1}$};
		\node [style=none] (24) at (-4, -0.5) {$\mathbb{Q}^{T,k_0-1,k_3}$};
		\node [style=none] (25) at (-2, -0.5) {$\mathbb{Q}^{T,k_0,k_0}$};
		\node [style=none] (26) at (2, -0.5) {$\mathbb{Q}^{T,k_0,k_3}$};
		\node [style=none] (27) at (-5.75, 3.75) {};
		\node [style=none] (28) at (0, 4) {};
		\node [style=none] (29) at (0, 4) {};
		\node [style=none] (30) at (0, 4) {};
		\node [style=none] (31) at (-0.25, 3.75) {};
		\node [style=none] (32) at (0.25, 3.75) {};
		\node [style=none] (35) at (6, 4) {};
		\node [style=none] (36) at (6, 4) {};
		\node [style=none] (37) at (6, 4) {};
		\node [style=none] (38) at (5.75, 3.75) {};
		\node [style=none] (39) at (6.25, 3.75) {};
		\node [style=none] (40) at (4, -0.5) {$\mathbb{Q}^{T,k_0+1,k_0+1}$};
		\node [style=none] (41) at (8, -0.5) {$\mathbb{Q}^{T,k_0,k_3}$};
		\node [style=none] (42) at (4, 2) {$\frac{\partial \mathbb{Q}^{T,k_0+1,k_0+1}}{\partial \mathbb{Q}^{k_0+1}}$};
		\node [style=none] (43) at (7, 2) {};
		\node [style=none] (44) at (3, 4) {};
		\node [style=none] (45) at (-3, 4) {};
		\node [style=none] (46) at (1, 2) {};
		\node [style=none] (47) at (-2, 2) {$\frac{\partial \mathbb{Q}^{T,k_0,k_0}}{\partial \mathbb{Q}^{k_0}}$};
		\node [style=none] (48) at (-5, 2) {};
		\node [style=none] (49) at (-8, 2) {$\frac{\partial \mathbb{Q}^{T,k_0-1,k_0-1}}{\partial \mathbb{Q}^{k_0-1}}$};
		\node [style=none] (50) at (-4, 2) {};
		\node [style=none] (51) at (-4, 2) {$\frac{\partial \mathbb{Q}^{T,k_0-1,k_3}}{\partial \mathbb{Q}^{k_0-1}}$};
		\node [style=none] (52) at (2, 2) {};
		\node [style=none] (53) at (2, 2) {$\frac{\partial \mathbb{Q}^{T,k_0,k_3}}{\partial \mathbb{Q}^{k_0}}$};
		\node [style=none] (54) at (8, 2) {};
		\node [style=none] (55) at (8, 2) {$\frac{\partial \mathbb{Q}^{T,k_0+1,k_3}}{\partial \mathbb{Q}^{k_0+1}}$};
		\node [style=none] (56) at (-3, 5) {$\frac{\partial \mathbb{Q}^{k_0}}{\partial \mathbb{Q}^{k_0-1}}$};
		\node [style=none] (57) at (3, 5) {$\frac{\partial \mathbb{Q}^{k_0+1}}{\partial \mathbb{Q}^{k_0}}$};
		\node [style=none] (58) at (8, 4) {};
		\node [style=none] (59) at (-8, 4) {};
		\node [style=none] (60) at (-8, -1) {};
		\node [style=none] (61) at (-4, -1) {};
		\node [style=none] (62) at (-2, -1) {};
		\node [style=none] (63) at (2, -1) {};
		\node [style=none] (64) at (4, -1) {};
		\node [style=none] (65) at (8, -1) {};
		\node [style=none] (66) at (-8.5, 4) {$\ldots$};
		\node [style=none] (67) at (8.5, 4) {$\ldots$};
	\end{pgfonlayer}
	\begin{pgfonlayer}{edgelayer}
		\draw [style=doubleArrow] (15.center) to (11.center);
		\draw [style=doubleArrow, in=180, out=0] (14.center) to (30.center);
		\draw [style=doubleArrow] (27.center) to (9.center);
		\draw [style=doubleArrow] (31.center) to (4.center);
		\draw [style=doubleArrow] (32.center) to (3.center);
		\draw [style=doubleArrow] (30.center) to (37.center);
		\draw [style=doubleArrow] (38.center) to (5.center);
		\draw [style=doubleArrow] (39.center) to (7.center);
		\draw [style=doubleArrow] (59.center) to (14.center);
		\draw [style=doubleArrow] (58.center) to (37.center);
		\draw [style=new edge style 2, bend right=15] (60.center) to (61.center);
		\draw [style=new edge style 2, bend right] (60.center) to (62.center);
		\draw [style=new edge style 2, bend right, looseness=1.25] (60.center) to (63.center);
		\draw [style=new edge style 2, bend right=45] (60.center) to (64.center);
		\draw [style=new edge style 2, bend right=45] (60.center) to (65.center);
		\draw [style=new edge style 2] (60.center) to (30.center);
		\draw [style=new edge style 2] (60.center) to (37.center);
	\end{pgfonlayer}
\end{tikzpicture}
    }
    \caption{This graph summarizes (a part of) the relations between the different pricing measures. Each node in the graph is a pricing measure and each edge represents the link between two probability measures via a suitable Radon-Nikodym derivative. For the sake of readability, we only plot the Radon-Nikodym derivatives with respect to the spot measures. However, each node of the graph can be linked to all the other nodes. For illustrative purposes, we represent these relations only for the forward measure $\QQ^{T,k_0-1,k_0-1}$ (blue arrows).}
    \label{fig:pricingMeasures}
\end{figure}

\bibliographystyle{plainnat}
\bibliography{references}
 
\end{document}